 \Crefname{appendix}{App.}{Apps.}
\Crefname{equation}{Eq.}{Eqs.}
\Crefname{figure}{Fig.}{Figs.}
\Crefname{section}{Sec.}{Secs.}
\newtheorem{thm}{\protect\theoremname}
\theoremstyle{plain}
\newtheorem{lem}{\protect\lemmaname}
\theoremstyle{plain}
\theoremstyle{plain}
\newtheorem*{lem*}{\protect\lemmaname}
\theoremstyle{plain}
\newtheorem*{thm*}{\protect\theoremname}
\theoremstyle{plain}
\newtheorem{prop}{\protect\propositionname}
\theoremstyle{plain}
\newtheorem{cor}{\protect\corollaryname}
\theoremstyle{plain}
\newtheorem*{cor*}{\protect\corollaryname}
\newtheorem*{defn*}{Definition}
  \providecommand{\corollaryname}{Corollary}
  \providecommand{\lemmaname}{Lemma}
  \providecommand{\propositionname}{Proposition}
  \providecommand{\remarkname}{Remark}
\providecommand{\theoremname}{Theorem}
\newcommand{\Or}{\mathcal{O}}
\newcommand{\RR}{\mathbb{R}}
\newcommand{\dd}{\mathrm{d}}
\renewcommand{\Re}{\operatorname{Re}}
\renewcommand{\ket}[1]{\ensuremath{\left|#1\right\rangle}}
\renewcommand{\bra}[1]{\ensuremath{\left\langle#1\right|}}
\newcommand{\CoefLambda}{\boldsymbol{\lambda}}
\newcommand{\CoefMu}{\boldsymbol{\mu}}
\newcommand{\BasisBeta}{\boldsymbol{\beta}}
\begin{document}
\title{Learning Hamiltonians in the Heisenberg limit with static single-qubit fields}








\author{Shrigyan Brahmachari}
\thanks{These authors contributed equally to this work.}
\affiliation{Department of Electrical and Computer Engineering, Duke University, Durham, NC 27708, USA}
\author{Shuchen Zhu}
\thanks{These authors contributed equally to this work.}
\affiliation{Department of Mathematics, Duke University, Durham, NC 27708, USA}
\author{Iman Marvian}
\affiliation{Department of Electrical and Computer Engineering, Duke University, Durham, NC 27708, USA}
\affiliation{Department of Physics, Duke University, Durham, NC 27708, USA}
\affiliation{Duke Quantum Center, Duke University, Durham, NC 27701, USA}
\author{Yu Tong}
\affiliation{Department of Mathematics, Duke University, Durham, NC 27708, USA}
\affiliation{Department of Electrical and Computer Engineering, Duke University, Durham, NC 27708, USA}
\affiliation{Duke Quantum Center, Duke University, Durham, NC 27701, USA}

\begin{abstract}
 Learning the Hamiltonian governing a quantum system is a central task in quantum metrology, sensing, and device characterization. 
Existing Heisenberg-limited Hamiltonian learning protocols either require multi-qubit operations that are prone to noise, or single-qubit operations whose frequency or strength increases with the desired precision. These two requirements limit the applicability of Hamiltonian learning on near-term quantum platforms. We present a protocol that learns a quantum Hamiltonian with the optimal Heisenberg-limited scaling using only single-qubit control in the form of static fields with strengths that are independent of the target precision. Our protocol is robust against the state preparation and measurement (SPAM) error. By overcoming these limitations, our protocol provides new tools for device characterization and quantum sensing. We demonstrate that our method achieves the Heisenberg-limited scaling through rigorous mathematical proof and numerical experiments. We also prove an information-theoretic lower bound showing that a non-vanishing static field strength is necessary for achieving the Heisenberg limit unless one employs an extensive number of discrete control operations. 
\end{abstract}

\maketitle


 A broad range of problems in quantum metrology and sensing can be formulated as learning the properties of an unknown Hamiltonian $H$ that governs the dynamics of a closed quantum system, through measurements of its time evolution
\cite{degen2017quantum,giovannetti2011advances,giovannetti2006quantum,caves1981quantum,leibfried2004toward,de2005quantum,GranadeFerrieWiebeCory2012robust,  wiebe2014b, HuangTongFangSu2023learning,StilckFranca2024,higgins2007entanglement}.
For example, a qubit magnetometer \cite{degen2017quantum} can be used to measure an unknown external magnetic field $\vec{B}$ by probing the evolution generated by the Hamiltonian
$H=-\mu\vec{B}\cdot\vec{\sigma}$, which describes a spin-$1/2$ system with magnetic moment $\mu$ in the magnetic field $\vec{B}$.
By monitoring the system over time, one can extract information about the unknown Hamiltonian $H$, and hence about the magnetic field $\vec{B}$.
To learn such an unknown Hamiltonian with precision $\epsilon$, one must probe the system for a time that scales at least as $\epsilon^{-1}$, a bound reminiscent of energy-time uncertainty relations and commonly referred to as the Heisenberg limit.

However, even for the practically important cases of single- and two-qubit Hamiltonians with multiple unknown parameters,   the simple protocol consisting of state preparation, free evolution under $H$, and measurement typically achieves only the standard quantum limit, requiring a total evolution time $T=\mathcal{O}(\epsilon^{-2})$ to reach error $\epsilon$ \cite{PangBrun2014quantum}.
Achieving Heisenberg-limited scaling for such general Hamiltonians therefore requires some form of \textit{quantum control}.
For instance, one may apply control operations to effectively reshape or simulate the Hamiltonian \cite{HuangTongFangSu2023learning}, or introduce multi-qubit interactions that modify the Hamiltonian \cite{dutkiewicz2024advantage,bakshi2024structure,HaahKothariODonnellTang2023query}. However, these techniques come with their own challenges. For example, the Hamiltonian-reshaping technique in \cite{HuangTongFangSu2023learning} requires applying single-qubit operations at an increasingly high rate: to achieve error $\epsilon$, the interval between successive gate applications must scale no larger than $\sqrt{\epsilon}$. Similarly, in the approaches of \cite{dutkiewicz2024advantage,bakshi2024structure,HaahKothariODonnellTang2023query}, multi-qubit operations are required to partially cancel the effect of the unknown Hamiltonian or unitary. Such operations are typically prone to errors, which in turn limit the attainable precision.

\begin{figure}
\centering
\includegraphics[width=7cm]{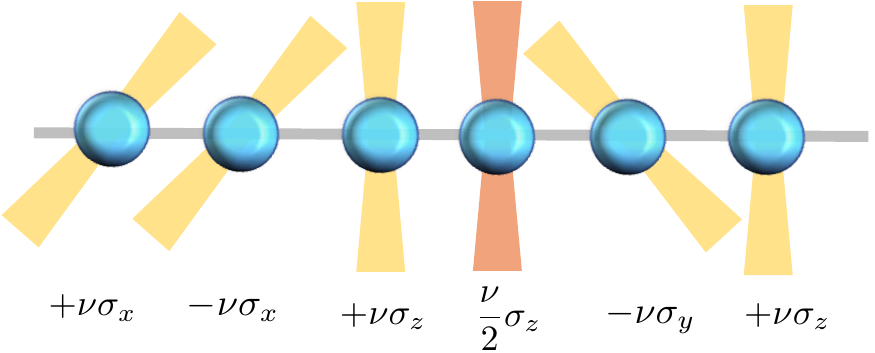}
\caption{\textbf{Efficient Hamiltonian learning with static local fields.} For a system of qubits with unknown Hamiltonian $H(\CoefLambda)$, by adding sufficiently strong static single-qubit fields along the $\pm\hat{x}$, $\pm\hat{y}$, and $\pm\hat{z}$ directions, and choosing a different field strength on a single qubit (highlighted in a different color in the figure, namely a field strength $\nu/2$ instead of $\nu$), we obtain a Hamiltonian whose ground and first excited states are nondegenerate and sufficiently close to unentangled states, denoted by $|\Phi_0\rangle$ and $|\Phi_1\rangle$, respectively.
By preparing the system in $\frac{1}{\sqrt{2}}(\ket{\Phi_0}+\ket{\Phi_1})$, which is also an unentangled state, and evolving it under the total Hamiltonian, we can efficiently extract information about the Hamiltonian terms in $H(\CoefLambda)$. Repeating this procedure for different qubits and field directions allows us to reconstruct the original Hamiltonian $H(\CoefLambda)$ with arbitrary precision $\epsilon$, using a total evolution time that saturates the Heisenberg limit $1/\epsilon$. 
}
\label{fig:control_ham}
\end{figure}
In this Letter, we introduce a simple and practical Hamiltonian learning protocol that achieves the optimal Heisenberg-limited scaling, enabling the estimation of a quantum Hamiltonian with error $\epsilon$ using a total evolution time $T=\mathcal{O}(\epsilon^{-1})$.   
Crucially, unlike previous approaches (e.g., Ref.~\cite{HuangTongFangSu2023learning}), our protocol requires only single-qubit control in the form of static fields whose strengths are independent of the target precision. Specifically, static control fields of fixed strength $\nu>0$ are applied along the $x$, $y$, and $z$ directions, modifying the system Hamiltonian from $H$ to $H-\nu H_{\mathrm{ctrl}}$  (see Fig.~\ref{fig:control_ham}). 

Furthermore, the protocol does not require any entangling gates, even when the Hamiltonian $H$ itself is entangling. The qubits are initially prepared in a tensor product of Pauli eigenstates, evolve under the modified Hamiltonian $H-\nu H_{\mathrm{ctrl}}$, and are finally measured via single-qubit Pauli measurements. Finally, the protocol is robust against state-preparation and measurement (SPAM) errors: as long as these errors remain below a fixed threshold independent of $\epsilon$, the target precision can still be achieved. Theorem~\ref{thm:ham_learn} characterizes the performance of the protocol. 

\begin{thm}
\label{thm:ham_learn}
For an unknown Hamiltonian given in \eqref{eq:ham_to_be_learned_main_text} on $n=\Or(1)$ qubits, we can learn all its coefficients with $\ell^2$-error at most $\epsilon$ with probability at least $1-\delta$ through a non-adaptive protocol that satisfies the following properties:
    \begin{enumerate}
    \item It uses $\Or(\epsilon^{-1}\log(1/\delta))$ total evolution time (Heisenberg-limited scaling).
    \item It consists of $\Or(\mathrm{polylog}(\epsilon^{-1}\delta^{-1}))$  non-adaptive experiments, each using product state inputs and single-qubit measurements.
    \item It uses only single-qubit control (i.e., static field such as described by $-\nu H_{\mathrm{ctrl}}$ in \eqref{eq:H_ctrl_example}) in $\hat{x},\hat{y},\hat{z}$ directions. The strength $\nu$ is independent of $\epsilon,\delta$.
    \item The protocol is robust against state preparation and measurement (SPAM) error.
\end{enumerate}
\end{thm}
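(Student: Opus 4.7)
The idea is to reduce learning each coefficient in $H(\CoefLambda)$ to a Heisenberg-limited gap estimation problem on a well-chosen two-dimensional subspace of the full Hilbert space. For a fixed configuration, add static fields along one of $\pm\hat{x},\pm\hat{y},\pm\hat{z}$ with uniform strength $\nu$ on all qubits except a single "marked" qubit, which carries strength $\nu/2$. For $\nu$ chosen independently of $\epsilon$ but larger than $\|H(\CoefLambda)\|$, the spectrum of $-\nu H_{\mathrm{ctrl}}$ has a unique ground state (the product of the field's negative-eigenvalue Pauli eigenstates) separated from the rest, and the first excited state is the single-qubit excitation on the marked qubit whose gap is $\nu$ rather than $2\nu$. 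Standard Rayleigh--Schr\"odinger perturbation theory in $1/\nu$ then shows that the ground state $\ket{\Phi_0}$ and first excited state $\ket{\Phi_1}$ of the full Hamiltonian $H(\CoefLambda)-\nu H_{\mathrm{ctrl}}$ remain nondegenerate and are $\Or(1/\nu)$-close to the unperturbed product states that differ only on the marked qubit. Consequently the superposition $(\ket{\Phi_0}+\ket{\Phi_1})/\sqrt{2}$ is close to an explicit product state of single-qubit Pauli eigenstates.

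Next I would implement gap estimation using single-qubit operations only. Prepare the product approximation to $(\ket{\Phi_0}+\ket{\Phi_1})/\sqrt{2}$, evolve for time $t$ under $H(\CoefLambda)-\nu H_{\mathrm{ctrl}}$, and perform a single-qubit Pauli measurement on the marked qubit along the axis that distinguishes its two local eigenstates. Because $\ket{\Phi_0}$ and $\ket{\Phi_1}$ differ primarily on that qubit, the measurement outcome probability oscillates at frequency $\Delta:=E_1-E_0$, with an $\Or(1/\nu)$ leakage into higher eigenstates that acts as bounded visibility loss plus slow incoherent drift. Applying a robust, Kitaev-style phase estimation schedule with evolution times $t_k=2^k t_0$ up to $t_{\max}=\Or(\epsilon^{-1})$ then returns an estimate of $\Delta$ with precision $\epsilon$ using total evolution time $\Or(\epsilon^{-1}\log(1/\delta))$ and $\Or(\mathrm{polylog}(\epsilon^{-1}\delta^{-1}))$ non-adaptive experiments. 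Because this estimator depends only on the fringe frequency, any SPAM error that merely attenuates visibility by a factor bounded away from zero is tolerated automatically.

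Finally, I would repeat the procedure for every choice of marked qubit and every field direction, producing $\Or(1)$ gap equations
\begin{equation*}
    \Delta_c \;=\; F_c(\CoefLambda), \qquad c\in\mathcal{C},
\end{equation*}
where $c$ enumerates the $\Or(n)=\Or(1)$ configurations. Expanding $F_c$ to first order in $1/\nu$ via degenerate-state perturbation theory, each $\Delta_c$ is an explicit linear functional of $\CoefLambda$ plus $\Or(1/\nu^2)$ corrections, and the leading-order functionals can be read off from the matrix elements $\bra{\Phi_0}P_j\ket{\Phi_0}-\bra{\Phi_1}P_j\ket{\Phi_1}$. I would verify by direct computation that varying the field direction over $\hat{x},\hat{y},\hat{z}$ and the marked qubit over all $n$ positions produces a family of linear functionals that spans the full space of Pauli coefficients. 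Given this, inverting the (smooth, boundedly conditioned) system $c\mapsto F_c$ --- e.g.\ by Newton's method around the perturbative solution --- converts $\ell^2$-precision $\Or(\epsilon)$ on the measured gaps into $\ell^2$-precision $\Or(\epsilon)$ on $\CoefLambda$, establishing all four properties.

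The main obstacle I expect is precisely verifying that the family $\{F_c\}_{c\in\mathcal{C}}$ is \emph{jointly invertible} with condition number independent of $\epsilon$ and $\delta$, and that this remains true after the $\Or(1/\nu^2)$ perturbative tail is added back. This reduces to a finite but nontrivial linear-algebra computation that depends sensitively on the chosen field directions and ratio $\nu/(\nu/2)=2$; in particular one must argue that no Pauli operator lies in the common kernel of all the first-order energy-shift functionals. A careful, case-by-case check of which Pauli strings contribute to the gap under each configuration, together with a fixed $\nu$ large enough to dominate the quadratic corrections, should close this gap without reintroducing any $\epsilon$-dependence in $\nu$.
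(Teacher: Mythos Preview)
Your overall strategy matches the paper's: add single-qubit static fields to isolate a nondegenerate two-level subspace close to explicit product states, use robust phase estimation to read off the gap $E_\Delta$ with Heisenberg-limited cost and automatic SPAM robustness, then invert a system of gap equations to recover $\CoefLambda$. The parts about perturbation theory, product-state preparation, single-qubit measurement, and Newton-type inversion are all essentially what the paper does.

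There is, however, a genuine gap in your configuration count. You write that varying the marked qubit over $n$ positions and the field direction over $\hat{x},\hat{y},\hat{z}$ yields ``$\Or(n)=\Or(1)$ gap equations,'' and you flag invertibility as the main obstacle. The obstacle is real and your proposed check would fail: with a \emph{single} field direction shared by all qubits and only $\Or(n)$ configurations, you cannot recover the $4^n-1$ Pauli coefficients for $n\ge 2$ (already $15>12$ at $n=2$). The paper instead ranges over \emph{all} per-qubit direction strings $\BasisBeta\in\{1,2,3\}^n$ and \emph{all} sign strings $\mathbf{s}\in\{0,1\}^n$, giving $n\cdot 6^n$ configurations. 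The leading-order derivative is $\partial E_\Delta/\partial\lambda_{\mathbf{a}}=-(-1)^{\mathbf{s}\cdot s(\mathbf{a})}$ when $\mathbf{a}\sim\BasisBeta$ and $a_k\neq 0$, and zero otherwise; summing over all $\mathbf{s}$ produces the Walsh--Hadamard orthogonality $\sum_{\mathbf{s}}(-1)^{\mathbf{s}\cdot(s(\mathbf{a})+s(\mathbf{a}'))}=2^n\delta_{s(\mathbf{a}),s(\mathbf{a}')}$, which together with the sum over $\BasisBeta$ makes $J^\top J$ diagonal (entries $|\mathbf{a}|\,2^n 3^{\,n-|\mathbf{a}|}$) up to $\Or_n(1/\nu)$. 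This is what delivers the $\epsilon$-independent condition number; without the full $\mathbf{s},\BasisBeta$ sweep the Jacobian is rank-deficient.

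Two smaller points. First, you measure only ``along the axis that distinguishes the two local eigenstates,'' which gives $\cos(E_\Delta t)$; the robust frequency estimator the paper uses (and any Kitaev-style scheme that must resolve the sign of the phase) also needs $\sin(E_\Delta t)$, obtained from a second single-qubit Pauli $O_s$ satisfying $O_s\ket{\Phi_0}=-i\ket{\Phi_1}$. Second, the paper does not linearize in $1/\nu$ and then correct; it keeps the exact (nonlinear) map $\CoefMu\mapsto \mathbf{E}(\CoefMu,\nu)$, sets up a least-squares loss, and uses a quantitative implicit-function theorem to show strong convexity of the loss near $\CoefLambda$ (Hessian $\succeq 2^{n-1}I$) and hence an $\Or(1)$-Lipschitz inverse. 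Your Newton idea lands on the same algorithm, but the rigorous bound on $\|\hat\CoefLambda-\CoefLambda\|$ in terms of $\|\hat{\mathbf{E}}-\mathbf{E}^\star\|$ comes from this convexity/implicit-function argument rather than from truncating the perturbation series.
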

A proof sketch is given after the protocol description, with full details in the Supplemental Material (SM).

\vspace{1em}

To demonstrate the performance of our protocol, we focus on the practically important cases of $n=1$ and $n=2$ qubits, which are relevant for applications ranging from magnetometry to quantum computing. We evaluate the protocol through extensive numerical simulations. Fig.~\ref{fig:error_vs_time} shows results for single- and two-qubit Hamiltonians, where the Heisenberg-limited scaling is evident for sufficiently large $\nu$ ($\nu \geq 1.9$ for the single-qubit case and $\nu \geq 4.5$ for the two-qubit case) in the presence of SPAM noise. The unknown single-qubit Hamiltonian is chosen to be $H=0.1 \sigma^x + 0.5\sigma^y + 0.3\sigma^z$. More details of the simulations, including the two-qubit Hamiltonian and the initial guesses used in optimization, are provided in the Section~VI in SM, where we also numerically demonstrate the SPAM robustness of our protocol. 
\begin{figure}[t]
\centering

\begin{minipage}[t]{0.50\linewidth}
  \centering
  \includegraphics[width=\linewidth,trim=6pt 6pt 6pt 6pt,clip]{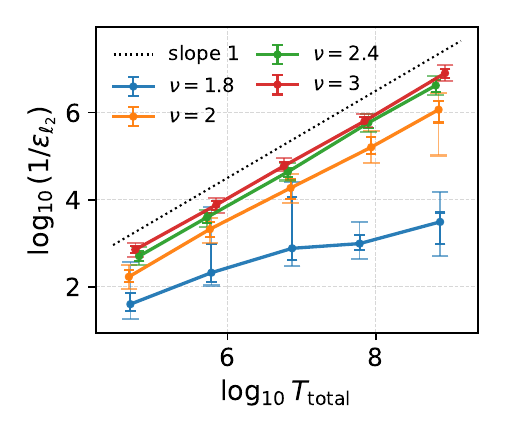}
\end{minipage}\hspace{-0.015\linewidth}%
\begin{minipage}[t]{0.50\linewidth}
  \centering
  \includegraphics[width=\linewidth,trim=6pt 6pt 6pt 6pt,clip]{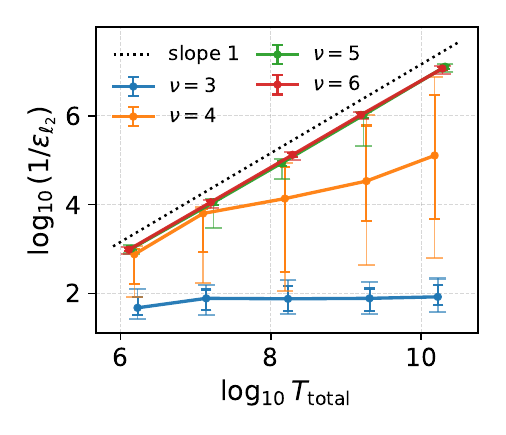}
\end{minipage}

\caption{\textbf{Learning single-qubit and two-qubit Hamiltonians. } Whisker plots showing the $\ell^2$-error $\varepsilon_{\ell^2}$ with different total evolution times $T$ and values of static field strengths $\nu$ for learning a single-qubit Hamiltonian (left) and a two-qubit Hamiltonian (right) on a log scale. The inner whiskers represents the 35th–65th percentiles, with the dot indicating the median. The outer whiskers extend from the 25th to the 75th percentiles. Dotted lines with slope 1 are included as references for the Heisenberg-limited scaling. 
Independent bit-flip channels with error rates 0.05 (for single-qubit case) and 0.03 (for two-qubit case) are applied to each qubit before measurement to model SPAM noise.
}
\label{fig:error_vs_time}
\end{figure}



\vspace{1em}
Our second main result, which complements the above theorem, is a generalization of the so-called Heisenberg limit. It establishes a fundamental lower bound on the minimum total evolution time $T$ required to achieve a target error $\epsilon$. More precisely, we derive a quantitative trade-off between the achievable precision $\epsilon$ and the relevant physical and operational resources, namely the field strength $\nu$, the total evolution time $T$, and the number of \emph{discrete control operations} $L$, defined as the number of times the system’s evolution is interrupted in order to measure, prepare, or reset the qubits, or to modify the control Hamiltonian. Crucially, our bound fully accounts for the possibility of adaptive learning protocols. We note that related extensions of the Heisenberg limit have previously been established in settings without static fields but with discrete control gates \cite{dutkiewicz2024advantage,ma2024learningkbodyhamiltonianscompressed}.

\begin{thm}[Informal version of Theorem~S6 in SM]
\label{thm:lower_bound}
Suppose there exists a (possibly adaptive) protocol for learning an unknown single-qubit Hamiltonian
$H=\hat{n}\cdot\vec{\sigma}$, with $\|\hat{n}\|=1$, to precision $\epsilon$ using
$L=o(1/\epsilon)$ discrete control operations. Suppose the protocol is allowed to apply control Hamiltonians of the form $-\nu H_{\mathrm{ctrl}}$,
where $H_{\mathrm{ctrl}}$ is an arbitrary single-qubit Hermitian operator with operator norm
at most $1$.
Then, for generic Hamiltonians $H$, any such protocol must use a total evolution time at least 
$
T=\Omega\left(\frac{1}{4\nu\epsilon+3\epsilon^2}\right).
$
\end{thm}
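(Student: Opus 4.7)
The plan is to apply a two-point discrimination argument: I would construct two Hamiltonians $H_0 = \hat n_0 \cdot \vec\sigma$ and $H_1 = H_0 + \epsilon V$, with the Hermitian perturbation $V$ chosen adversarially so that its vector part lies in the tangent plane $\hat n_0^\perp$, and show that any protocol meeting the theorem's resource constraints cannot reliably distinguish them. Concretely, if the estimator achieves $\|\hat n - \hat n_{\mathrm{true}}\| \leq \epsilon$ with probability $\geq 2/3$ for every unit $\hat n$, then for two such Hamiltonians separated by parameter distance $3\epsilon$, the protocol's output distributions $p_0, p_1$ must satisfy $\|p_0 - p_1\|_{\mathrm{TV}} \geq 1/3$. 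Purifying all measurements, resets, and classical feedback into unitaries on an enlarged system-plus-ancilla Hilbert space, the Fuchs--van de Graaf inequality gives $\|U_0 - U_1\|_{\mathrm{op}} = \Omega(1)$ for the total unitaries $U_b$ implemented under $H_b$. Adaptive protocols are handled by an outcome-coupled hybrid argument that synchronizes the two runs along matching trajectories of past measurement outcomes.

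The key structural step is to decompose $U_b = \prod_k U_b^{(k)}$ into $L+1$ free-evolution segments of duration $t_k$ (with $\sum_k t_k = T$) and generator $A_k^{(b)} = H_b + C_k$ ($\|C_k\|_{\mathrm{op}} \leq \nu$, $C_k$ possibly chosen adaptively), and bound the net operator distance by the telescoping sum $\sum_k \|U_0^{(k)} - U_1^{(k)}\|_{\mathrm{op}}$. A direct SU(2) computation using the Rodrigues rotation formula (equivalently, a second-order Duhamel expansion) yields the per-segment bound
\begin{equation*}
\|U_0^{(k)} - U_1^{(k)}\|_{\mathrm{op}} \;\leq\; 2\epsilon\, t_k\, |V\!\cdot\!\hat a_k| \;+\; \frac{2\epsilon}{\alpha_k} \;+\; C_0\,\epsilon^2\, \frac{t_k}{\alpha_k},
\end{equation*}
where $\alpha_k = \|A_k^{(0)}\|_{\mathrm{op}}$ and $\hat a_k = A_k^{(0)}/\alpha_k$. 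The central observation that produces the $\nu$-dependence is this: writing $C_k = \vec c_k \cdot \vec\sigma$ with $|\vec c_k| \leq \nu$, and with the adversarial $V = \vec v \cdot \vec\sigma$ satisfying $\vec v \perp \hat n_0$, we have $V\!\cdot\!\hat a_k = (\vec v \!\cdot\! \vec c_k)/\alpha_k$, and since $|\vec v\!\cdot\!\vec c_k| \leq \nu$ and $\alpha_k \geq 1 - \nu$ for $\nu < 1$, the longitudinal factor is itself controlled by $\nu$: $|V\!\cdot\!\hat a_k| \leq 2\nu$ in the small-$\nu$ regime. Aggregating over segments and using $L = o(1/\epsilon)$ to neutralize the axis-tilt contribution $O(\epsilon L/\alpha_k)$, one obtains $\|U_0 - U_1\|_{\mathrm{op}} \lesssim \nu\epsilon T + \epsilon^2 T$; demanding $\Omega(1)$ distinguishability yields $T = \Omega(1/(\nu\epsilon + \epsilon^2))$, and the explicit prefactors $4$ and $3$ come from careful constant-tracking in the SU(2) rotation formulas combined with the second-order Duhamel correction.

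The main obstacle is handling adaptivity rigorously. For adaptive protocols, the learner could in principle use past measurement outcomes to rotate the subsequent controls $C_k$ into alignment with the unknown perturbation direction $V$, potentially invalidating the ``$V \perp$ control axis'' setup segment by segment. This is resolved via the outcome-coupled hybrid combined with a Bayesian van Trees averaging over $\hat n_0 \in S^2$ uniform: no feedback strategy can extract more information than the segment-wise estimate allows on average over the prior. A secondary technical nuisance is the precise bookkeeping of constants $4$ and $3$ through the SU(2) expansion and its interaction with the second-order Duhamel correction, both of which must be carried out without slack to obtain the stated trade-off.
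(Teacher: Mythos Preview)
Your core approach matches the paper's: reduce to Le~Cam two-point discrimination between $H_0=\hat n_0\cdot\vec\sigma$ and $H_0+\epsilon V$ with $V\perp\hat n_0$, bound the per-segment unitary distance via the SU(2) rotation formula, telescope over the $L$ segments, and extract the $\nu\epsilon+\epsilon^2$ scaling from the fact that a perpendicular perturbation changes the eigenvalue $\|H_0+C_k\|$ only at order $\nu\epsilon+\epsilon^2$. The paper carries this out with the concrete pair $(Z,\,Z+\epsilon X)$ and obtains exactly your per-segment structure: a phase-growth term $2|w(\epsilon)-w(0)|t_k\le 2\epsilon(2\nu+\epsilon)t_k$ and an axis-tilt term of size $O(\epsilon)\cdot\min(t_k,\tfrac{1}{1-\nu})$, the latter summing to $O(\epsilon L/(1-\nu))=o(1)$.

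Where you diverge is in the treatment of adaptivity, and here you are overcomplicating a non-issue. Your worry that ``the learner could rotate subsequent controls $C_k$ into alignment with $V$'' does not threaten the argument: the crucial perpendicularity is $V\perp H_0$, not $V\perp C_k$, and the per-segment bound you wrote already holds \emph{uniformly} for every control with $\|C_k\|\le\nu$, including $C_k\parallel V$. Indeed your own computation $|V\cdot\hat a_k|=|\vec v\cdot\vec c_k|/\alpha_k\le\nu/(1-\nu)$ is valid for arbitrary $\vec c_k$. Consequently the paper handles adaptivity by the learning-tree formalism of Huang et al.: since the per-segment TV bound is uniform in the control, one simply sums it along the worst path of the tree (their Lemma~S5 and Theorem~S5), with a Markov-inequality truncation to pass from expected to worst-case $T$ and $L$. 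No outcome-coupled hybrid and no van~Trees averaging over $S^2$ is needed---a single fixed hard instance suffices, and the specific constants $4$ and $3$ come from the explicit bound $|w(\epsilon)-w(0)|\le\epsilon(2\nu+\epsilon)$ plus an $\epsilon^2/2$ correction from normalizing $Z+\epsilon X$, not from averaging.
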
 

In the above we write $f=\Omega(g)$ to mean {$f(x)\geq cg(x)$ for some constant $c>0$ and sufficiently large $x$.}
A proof is included in Section VII in SM.
In particular, this theorem shows that without an extensive number of discrete control operations, i.e., for $L\ll 1/\epsilon$, one can only achieve the Heisenberg-limited scaling with static field strength $\nu=\Omega(1)$, and $1/\nu$ also lower bounds the prefactor appearing in the $1/\epsilon$ scaling.  
We need to exclude the situation where $\Omega(1/\epsilon)$ discrete control operations are used because that alone would be sufficient for attaining the Heisenberg limit \cite{dutkiewicz2024advantage,bakshi2024structure}. This lower bound reveals a fundamental tradeoff between the total evolution time, the field strength, and discrete control operations needed to attain a given precision.

To prove this theorem, we consider the task of discriminating between two fixed single-qubit Hamiltonians corresponding to two known normalized vectors, e.g., $\hat{z}$ and $(\hat{z}+\epsilon \hat{x})/\sqrt{1+\epsilon^2}$, a problem previously studied in Ref.~\cite{PangBrun2014quantum}.  If there exists a learning protocol that learns them with precision $\epsilon/2$ then we will be able to distinguish these two Hamiltonians.    
Our key observation, proven in Corollary~S1 in SM, is that in the presence of the control Hamiltonian $\nu H_{\mathrm{ctrl}}$, 
the distance between the unitary evolution operators generated by these two Hamiltonians grows like $\Or((\nu \epsilon + \epsilon^2)T + L\epsilon)$. We adopt the framework used in \cite{HuangTongFangSu2023learning,huang2022foundations} to represent any adaptive experimental protocol as a tree, and analyze the distance between output distributions generated by these two Hamiltonians, thereby generating a lower bound through Le Cam's two-point method \cite{yu1997assouad}.

\vspace{1em}

\noindent{\emph{The Setup}--} In the rest of this Letter, we describe and analyze our learning protocol in further detail. We consider a system of $n$ qubits evolving under an unknown Hamiltonian
\begin{equation}
    \label{eq:ham_to_be_learned_main_text}
    H(\CoefLambda) = \sum_{\mathbf{a}\in \{0,1,2,3\}^n\setminus 0^n} \lambda_{\mathbf{a}} \sigma^{\mathbf{a}},
\end{equation}
where  $
\sigma^{\mathbf{a}} = \sigma^{a_1}\otimes \sigma^{a_2}\otimes\cdots\otimes \sigma^{a_n}
$.  Here $\sigma^0$ denotes the single-qubit identity, while $\sigma^{1,2,3}$ denote the Pauli $X$, $Y$, and $Z$ matrices, respectively (We will also sometimes use $\sigma^{x,y,z}_k$ to denote the Pauli-X, Y, or Z operator acting on the $k$th qubit). Without loss of generality, we assume that the Hamiltonian is traceless, and to fix the normalization, we assume the unknown coefficients $\lambda_{\mathbf{a}}$ satisfy $-1\leq \lambda_{\mathbf{a}}\leq 1$ 
  (Any Hamiltonian can be rescaled so that this condition holds).


Our goal is to obtain estimates $\hat{\CoefLambda}=(\hat{\lambda}_{\mathbf{a}})$ such that with probability, at least, $1-\delta$, 
its $\ell^2$-distance with actual 
$\CoefLambda$ is bounded by $\epsilon$, i.e.,   
\begin{equation}
    \Pr[\|\hat{\CoefLambda}-\CoefLambda\|\leq \epsilon]\geq 1- \delta.
\end{equation}

For the control Hamiltonian, we will apply sufficiently strong static single-qubit fields along the $\pm\hat{x}$, $\pm\hat{y}$, and $\pm\hat{z}$ directions. More specifically, for $n\ge 2$, we choose the field strength on the $k$th qubit to be $\nu/2$ and that on the remaining qubits to be $\nu$. As an example, Figure~\ref{fig:control_ham} shows a control Hamiltonian
\begin{equation}
\label{eq:H_ctrl_example}
    \nu H_{\mathrm{ctrl}} = \nu\sigma^x_1-\nu \sigma^x_2 + \nu \sigma^z_3 + \frac{\nu}{2}\sigma^z_4 - \nu\sigma^y_5 + \nu\sigma^z_6
\end{equation}
Any such Hamiltonian can be specified by a set of parameters $(k,\mathbf{s},\BasisBeta)$, where $\mathbf{s}$ is a bit-string specifying the signs ($0,1$ for $+,-$ respectively), and $\BasisBeta\in \{1,2,3\}^n$ specifies the directions. In the above example $\mathbf{s}=010010$, $\BasisBeta=(1,1,3,3,2,3)$, and $k=4$. The system then evolves under the total Hamiltonian
\begin{equation}
\label{eq:total_ham}
    H_{\mathrm{tot}}(\CoefLambda,\nu,k,\mathbf{s},\BasisBeta) = H(\CoefLambda)-\nu H_{\mathrm{ctrl}}(k,\mathbf{s},\BasisBeta).
\end{equation}


This choice of the control Hamiltonian gaurantees that the ground state $\ket{\Psi_0}$ and the first excited state $\ket{\Psi_1}$ of $H_{\mathrm{tot}}(\CoefLambda,\nu,k,\mathbf{s},\BasisBeta)$ are both nondegenerate and protected by energy gaps scaling with $\nu$ (That is why we have chosen a different field strength of $\nu/2$ on one of the qubits.). Moreover, they are close to the ground state $\ket{\Phi_0}$ and first excited state $\ket{\Phi_1}$ of $-\nu H_{\mathrm{ctrl}}(k,\mathbf{s},\BasisBeta)$ respectively. Also, our construction ensures that $\ket{\Phi_0}$, $\ket{\Phi_1}$, and their superposition  $\ket{\Phi_+} = \frac{1}{\sqrt{2}}(\ket{\Phi_0}+\ket{\Phi_1})$
are unentangled product  state of  single-qubit Pauli eigenstates, and are therefore easy to prepare.

\vspace{1em}
\noindent\emph{Phase estimation experiment.} To learn the coefficients $\CoefLambda$, we perform a sequence of \emph{phase-estimation experiments} to estimate the energy gap $E_{\Delta}(\CoefLambda,\nu,k,\mathbf{s},\BasisBeta)$ between the ground and first excited states of $H_{\mathrm{tot}}(\CoefLambda,\nu,k,\mathbf{s},\BasisBeta)$. As explained below, this can be achieved despite the fact that $H$ is unknown and the relevant eigenstates of $H_{\mathrm{tot}}$ and their superpositions cannot be prepared.  

The key idea is to prepare the qubits in the state $\ket{\Phi_+} $, i.e., the superposition of the ground and first excited states of $-H_{\mathrm{ctrl}}$.  We then let the system evolve under the modified Hamiltonian $H_{\mathrm{tot}}=H-\nu H_{\mathrm{ctrl}}$ for time $t$, and then measure one of two observables $O_c$ and $O_s$, each of which is a single-qubit Pauli operator on qubit $k$. Specifically, they are chosen such that $O_c |\Phi_0\rangle=|\Phi_1\rangle$, and 
$O_s |\Phi_0\rangle=-i |\Phi_1\rangle$, which means they act as $\sigma^x$ and $-\sigma^y$ in the subspace spanned by $\ket{\Phi_0}$ and $\ket{\Phi_1}$ (Note that $\ket{\Phi_0}$ and $\ket{\Phi_1}$ are identical on all qubits, except qubit $k$).
As an example, in the setting described in Figure~\ref{fig:control_ham}, with $\beta_k=3$, we can simply choose $O_c= \sigma^x_k$ and $O_s=-\sigma^y_k$, which are Pauli operators acting on the $k$th qubit.

Measuring the Pauli operator $O$ on qubit $k$, we observe $ (-1)^b: b=0,1$ eigenvalue, where bit $b=0$ with probability
\begin{equation}
\label{eq:output_probabilities_main_text}
   \Pr[b=0]=  \frac{1}{2}\left(1+\braket{\Phi_+|e^{iH_{\mathrm{tot}}t}O e^{-iH_{\mathrm{tot}}t}|\Phi_+}\right),
\end{equation}
where $O$ stands for $O_c$ or $O_s$, depending on which we choose to measure.  Each run yields a sample from this distribution. From this statistic, one can obtain estimates of the expectation values $\braket{O_c(t)}$ and $\braket{O_s(t)}$ as functions of time $t$, for the initial state $|\Phi_+\rangle$.

Next, applying perturbation theory, in SM we show that for  sufficiently large $\nu$ these expectation values are  close to the expectation values obtained for the initial state $|\Psi_+\rangle=(|\Psi_0\rangle+|\Psi_1\rangle)/\sqrt{2}$, the equal superposition of the actual ground and first excited states of $H_{\mathrm{tot}}$. Hence, the dominant frequency in signals $\braket{O_c(t)}$ and $\braket{O_s(t)}$ is determined by the energy gap $E_\Delta$ between these two eigenstates. Indeed, we show in SM that the error is bounded by 
\begin{equation}
    \label{eq:Oc_Os_err_bound}
    |\braket{O_c(t)}+i\braket{O_s(t)} - e^{iE_\Delta t}|=\Or(\|H\|/\nu).
\end{equation}
Crucially, as long as this error remains bounded by a finite constant, one can extract the dominant frequency $E_\Delta$ with Heisenberg-limited precision using a robust frequency-estimation protocol \cite{LiTongNiGefenYing2023heisenberg,ma2024learningkbodyhamiltonianscompressed,HuMaGongEtAl2025,MobusBluhmCaroEtAl2023dissipation}, closely related to the robust phase-estimation algorithm of Ref.~\cite{KimmelLowYoder2015robust}.   
More precisely, to estimate $E_\Delta$ to precision $\epsilon$, one only needs to estimate every $\braket{O_c(t)}$ and $\braket{O_s(t)}$ to a precision independent of $\epsilon$, namely within $1/\sqrt{8}$ additive error. Indeed, we show that using $\Or(\mathrm{polylog}(\epsilon^{-1}\delta^{-1}))$ experiments that run in time at most $\Or(1/\epsilon)$, one obtains an $\epsilon$-accurate estimate of $E_\Delta$. The total evolution time needed for this series of experiments is $\Or(\epsilon^{-1}\log(1/\delta))$.

Moreover, the errors in estimates for $\braket{O_c(t)}$ and $\braket{O_s(t)}$ are not required to be unbiased. Therefore, even if there is a systematic bias in our estimates created by the SPAM error, our protocol can still achieve arbitrarily high precision in the estimation for $E_\Delta$. In this sense our protocol is robust against SPAM error.

\vspace{1em}

\noindent\emph{Learning protocol.} 
The next step is to learn the unknown Hamiltonian coefficients $\CoefLambda$ from the collected information about the energy gaps $E_{\Delta}(\CoefLambda,\nu,k,\mathbf{s},\BasisBeta)$. Note that, a priori, it is not clear whether $\CoefLambda$ can be (i) uniquely determined from $E_{\Delta}(\CoefLambda,\nu,k,\mathbf{s},\BasisBeta)$, and (ii) determined robustly, i.e., if the energy gaps are estimated with precision $\epsilon'$, the recovery map $g$ estimates the coefficients to precision at most $C\epsilon'$, for some constant $C$ independent of $\CoefLambda$. For instance, this condition fails if $g(x)=\sqrt{x}$, since achieving precision $\epsilon$ in estimating $x\ll 1$ results in a quadratically worse precision, $\sqrt{\epsilon}$. In the present context, such behavior would prevent achieving the Heisenberg limit.

We prove that these requirements are all satisfied by the least-square minimizer, defined below. Let $\mathbf{E}(\CoefLambda,\nu)=(E_{\Delta}(\CoefLambda,\nu,k,\mathbf{s},\BasisBeta))_{k,\mathbf{s},\BasisBeta}$ be the vector of all actual energy gaps corresponding to all experiments, and  $\hat{\mathbf{E}}$ be the corresponding estimates obtained through the above method. Define the loss function
\begin{equation}
    \label{eq:loss_function_main_text}
    L(\hat{\mathbf{E}},\CoefMu,\nu) = \frac{1}{2}\|\hat{\mathbf{E}}-\mathbf{E}(\CoefMu,\nu)\|^2\ ,
\end{equation}
where  $\CoefMu$ is a free variable. 
Then, we choose our estimate of the Hamiltonian coefficient as $\hat{\CoefLambda} = \mathrm{argmin}_{\CoefMu} L(\hat{\mathbf{E}},\CoefMu,\nu)$. 



Note that since the problem is nonlinear, in general, it may contain multiple local minima, making it difficult to uniquely recover the correct $\CoefLambda$. However, thanks to the choice of our control Hamiltonian $H_{\mathrm{ctrl}}$, in SM we are able to prove the following crucial properties in SM:

\begin{enumerate}[itemsep=2pt, leftmargin=*]
 \item  When $\CoefMu$ and $\hat{\mathbf{E}}$ are within a constant distance of the true values $\CoefLambda$ and $\mathbf{E}^\star=\mathbf{E}(\CoefLambda,\nu)$, respectively, 
the cost function $L(\hat{\mathbf{E}},\CoefMu,\nu)$ is locally strongly convex. In particular, 
 all eigenvalues of the Hessian matrix $\partial_{\CoefMu}^2 L(\hat{\mathbf{E}},\CoefMu,\nu)$ are lower bounded by  $2^{n-1}$   (See Proposition~S1 in SM).
 
 \item Using a modified version of the inverse function theorem proved in the SM, this strong convexity guarantees that, for $\hat{\mathbf{E}}$ sufficiently close to $\mathbf{E}^\star$, there exists a unique $\mu$ that minimizes the cost function, which we take as the estimate $\hat{\CoefLambda}$. That is, there exists a map $g$ such that
 \begin{equation}
    \hat{\CoefLambda}=g(\hat{\mathbf{E}})=\mathrm{argmin}_{\CoefMu} L(\hat{\mathbf{E}},\CoefMu,\nu).
 \end{equation}
 This map can be computed using standard numerical optimization methods, such as Newton’s method.
 \item Our modified inverse function theorem provides a constant upper bound on $\|\partial_{\hat{\mathbf{E}}}g(\hat{\mathbf{E}})\|$ in the neighborhood of $\mathbf{E}^\star$. This, in particular, implies that a perturbation of magnitude $\epsilon'$ on {$\hat{\mathbf{E}}$ }translates to a perturbation at most $C\epsilon'$ on $\hat{\CoefLambda}$, {for a fixed constant $C$.} 
\end{enumerate}

In conclusion, these results imply that recovering the coefficients $\CoefLambda$ to precision $\epsilon$ only requires estimating $\mathbf{E}^\star$ to precision $\epsilon/C$, thus preserving the Heisenberg-limited scaling. Since the strong convexity plays a key role in the effectiveness of our method, we provide some intuition. The Hessian matrix can be computed as







\begin{equation}
    \label{eq:hessian_LS}
    \frac{\partial^2 L}{\partial\CoefMu^2} = \left(\frac{\partial \mathbf{E}}{\partial\CoefMu}\right)^\top \left(\frac{\partial \mathbf{E}}{\partial\CoefMu}\right) + (\mathbf{E}(\CoefMu)-\hat{\mathbf{E}})^\top \frac{\partial^2\mathbf{E}}{\partial\CoefMu^2}\ .
\end{equation}
When $\CoefMu$ is close to the actual value $\CoefLambda$ and $\hat{\mathbf{E}}$ close to $\mathbf{E}^\star$, both $\mathbf{E}(\CoefMu)$ and $\hat{\mathbf{E}}$ will be close to $\mathbf{E}^\star$, making the second term on the right-hand side small. Furthermore, as we show in SM, the first term is
\begin{equation}
    \left(\frac{\partial \mathbf{E}}{\partial\CoefMu}\right)^\top \left(\frac{\partial \mathbf{E}}{\partial\CoefMu}\right) = J_0 + \Or_n(1/\nu)\ ,
\end{equation}
where $J_0$ is the Hessian matrix in the $\nu\to \infty$ limit for $\hat{\mathbf{E}}=\mathbf{E}(\CoefMu)$. Therefore, for large enough $\nu$ the smallest eigenvalue of the Hessian matrix $\partial_{\CoefMu}^2 L$ can be well-approximated by that of $J_0$. In Section~III of SM, we explicitly compute the matrix $J_0$ using the fact that in $\nu\to\infty$ limit, the eigenvalues of the Hamiltonian and coefficients $\CoefLambda$ are related through a Walsh-Hadamard transform, also known as the Fourier transform over the Boolean group $(\mathbb{Z}_2)^{\times n}$.. 

\vspace{1em}

\noindent\emph{Error and time analysis.} 
In the SM, we analyze the error in the protocol and prove Theorem~\ref{thm:ham_learn}. To estimate the coefficients $\CoefLambda$ with precision $\epsilon$, we first obtain an estimate $\hat{\mathbf{E}}$ of the eigenvalues via phase-estimation experiments with precision $\Or(\epsilon)$. This requires a total evolution time $\mathcal{O}(1/\epsilon)$, achieving Heisenberg-limited scaling. The estimate $\hat{\CoefLambda}=g(\hat{\mathbf{E}})$ is then obtained by solving the optimization problem defined by \eqref{eq:loss_function_main_text}. Theorem~S2 in the SM guarantees that this solution is close to the true value of $\CoefLambda$. The map $g(\hat{\mathbf{E}})$ is computed using the iterative solver in Algorithm~1 of the SM, which requires an initial guess within a constant distance of the true parameters. Such an initial guess can be obtained using a suboptimal Hamiltonian learning protocol \cite{StilckFranca2024} that does not employ quantum control, and instead relies on state preparation, time evolution under the Hamiltonian $H$, and measurement. Since this preliminary step only requires constant precision, its non-Heisenberg-limited scaling does not affect the overall performance. The protocol is fully non-adaptive, and the control operations depend only on the number of qubits. 

\vspace{1em}
\noindent \emph{Discussions.}  In this work, we introduce a Heisenberg-limited Hamiltonian learning protocol that overcomes practical limitations of existing approaches. Unlike prior methods that rely either on multi-qubit operations or on single-qubit control whose strength or frequency scales with the desired estimation precision, our protocol achieves optimal Heisenberg-limited scaling using only single-qubit control implemented through static fields of fixed, precision-independent strength. This makes the method well suited for near-term quantum devices with limited control capabilities. In terms of the analysis of the algorithm, we note that it goes beyond a rotating wave approximation (e.g., as described in \cite{Burgarth2022oneboundtorulethem,BurgarthFacchiEtAl2024taming}) because we cannot allow any approximation error to accumulate in time.

Our protocol applies to an arbitrary Hamiltonian on $n$ qubits. For generic Hamiltonians, the presence of $4^n-1$ independent coefficients implies an exponential runtime in $n$. In contrast, for Hamiltonians with only local interactions, the cost scales polynomially, or even polylogarithmically, with the system size $n$   \cite{li2020hamiltonian,che2021learning,HaahKothariTang2022optimal,ZubidaYitzhakiEtAl2021optimal,BaireyAradEtAl2019learning, bairey2020learning,GranadeFerrieWiebeCory2012robust,gu2022practical,wilde2022learnH,KrastanovZhouEtAl2019stochastic,caro2022learning,MobusBluhmCaroEtAl2023dissipation,HolzapfelEtAl2015scalable, HuangTongFangSu2023learning,dutkiewicz2024advantage,MiraniHayden2024learning,NiLiYing2024quantum,LiTongNiGefenYing2023heisenberg,BoixoSomma2008parameter,bakshi2024structure,MobusBluhmCaroEtAl2023dissipation,ma2024learningkbodyhamiltonianscompressed,yu2023robust,hangleiter2024robustlylearninghamiltoniandynamics,StilckFranca2024,HuMaGongEtAl2025,SinhaTong2025improved,Rosati2025quantum,Zhao2025learning}. It is then natural to ask whether one can generalize the ideas in this work to the setting of local Hamiltonians with a $\mathrm{poly}(n)$ or $\mathrm{polylog}(n)$ cost. 

Besides Hamiltonian learning, the idea of extracting information by applying single-qubit static fields may be useful for related tasks such as learning a function of Hamiltonian parameters \cite{
abbasgholinejad2025optimally, eldredge2018optimal, proctor2017networked, proctor2018multiparameter,suzuki2020quantum,ehrenberg2023minimum,bringewatt2024optimal} without learning the whole Hamiltonian. In this setting, ideas from Hamiltonian learning such as Hamiltonian reshaping has already proved useful \cite{abbasgholinejad2025optimally}. One may also consider the setting in which only part of the quantum system is accessible, as studied in  \cite{ChenCotlerHuang2025quantum}, and static fields may provide a useful tool to decouple and to probe the system. One may also consider leveraging techniques from quantum error correction to mitigate decoherence in the quantum system \cite{Zhou2017AchievingTH}.



\acknowledgements 

The authors thank Di Fang for comments in the process of this work. This work is supported by a collaboration between the US DOE and other Agencies. This material is based upon work supported by the U.S. Department of Energy, Office of Science, National Quantum Information Science Research Centers, Quantum Systems Accelerator. SZ acknowledges support from the U.S. Department of Energy, Office of Science, Accelerated Research in Quantum Computing Centers, Quantum Utility through Advanced Computational Quantum Algorithms, grant No. DE-SC0025572. Additional support is
acknowledged from  
NSF QLCI grant OMA-2120757.

\bibliography{ref_new}


\clearpage

\renewcommand{\thefigure}{S\arabic{figure}}
\setcounter{thm}{0}
\renewcommand{\thethm}{S\arabic{thm}}
\setcounter{lem}{0}
\renewcommand{\thelem}{S\arabic{lem}}
\setcounter{prop}{0}
\renewcommand{\theprop}{S\arabic{prop}}
\setcounter{cor}{0}
\renewcommand{\thecor}{S\arabic{cor}}

\onecolumngrid

\section*{Notations}

We will use the following notation for Pauli matrices:
\begin{equation}
    \sigma^0 = \begin{pmatrix}
        1 & 0 \\
        0 & 1
    \end{pmatrix},\quad 
    \sigma^1 = \begin{pmatrix}
        0 & 1 \\
        1 & 0
    \end{pmatrix},\quad 
     \sigma^2 = \begin{pmatrix}
        0 & -i \\
        i & 0
    \end{pmatrix},\quad 
     \sigma^3 = \begin{pmatrix}
        1 & 0 \\
        0 & -1
    \end{pmatrix}.
\end{equation}
For an $n$-qubit quantum system, we will write $\sigma^a_{j}$ for $\sigma^a$ acting only on the $j$th qubit.
Let $\mathbf{a}=(a_1,a_2,\cdots,a_n)\in\{0,1,2,3\}^n$, we denote
\[
\sigma^{\mathbf{a}} = \sigma^{a_1}\otimes \sigma^{a_2}\otimes\cdots\otimes \sigma^{a_n}=\prod_{j=1}^n \sigma^{a_j}_{j}.
\]
For Pauli eigenstates, we use $\ket{s,a}$ to denote the $(-1)^s$-eigenstate of $\sigma^a$. In other words,
\[
\ket{0,1}=\ket{+},\quad \ket{1,1} = \ket{-},\quad \ket{0,2}=\ket{+i},\quad \ket{1,2} = \ket{-i}, \quad \ket{0,3}=\ket{0},\quad \ket{1,3} = \ket{1}.
\]
For two strings $\mathbf{a},\mathbf{b}\in\{0,1,2,3\}^n$, we say they are \textit{compatible}, denoted by $\mathbf{a}\sim\mathbf{b}$, if for every $j\in\{1,2,\cdots,n\}$, one of the following is true
\begin{equation}
\label{eq:compatible_strings}
    a_j=b_j,\ a_j=0,\text{ or } b_j=0. 
\end{equation}
If $\mathbf{a}\sim\mathbf{b}$, then $\sigma^{\mathbf{a}}$ and $\sigma^{\mathbf{b}}$ commute in every component. 
For any $a\in\{0,1,2,3\}$, we define
\begin{equation}
    \label{eq:s_function_scalar}
    s(a) = \begin{cases}
        0 \text{ if } a=0,\\
        1 \text{ if } a\neq 0.
    \end{cases}
\end{equation}
For $\mathbf{a}\in\{0,1,2,3\}^n$, we also define
\begin{equation}
    \label{eq:s_function}
    s(\mathbf{a}) = (s(a_1),s(a_2),\cdots,s(a_n)).
\end{equation}
We write 
\begin{equation}
    |\mathbf{a}|=\sum_j s(a_j)
\end{equation}
to denote the number of non-zero elements in $\mathbf{a}$.

We adopt the asymptotic notation: $f=\Or(g)$ means $|f|\leq Cg$ for some absolute constant $C$; $f=\Omega(g)$ means $g=\Or(f)$, and $f=\Or_n(g)$ means the constant we omit may depend on $n$. 

\section{Problem setup}
\label{sec:problem_setup}

We assume access to a quantum system evolving under the unknown Hamiltonian
\begin{equation}
    \label{eq:ham_to_be_learned}
    H(\CoefLambda) = \sum_{\mathbf{a}\in \{0,1,2,3\}^n} \lambda_{\mathbf{a}} \sigma^{\mathbf{a}},
\end{equation}
where $-1\leq \lambda_{\mathbf{a}}\leq 1$, and $\CoefLambda=(\lambda_{\mathbf{a}})_{\mathbf{a}\neq \mathbf{0}}$. We assume that the Hamiltonian is traceless so as to avoid the ambiguity coming from the global phase. Our goal is to obtain estimates $\hat{\lambda}_{\mathbf{a}}$ for each $\mathbf{a}\neq \mathbf{0}$ such that
\begin{equation}
    \Pr[\|\hat{\CoefLambda}-\CoefLambda\|\leq \epsilon]\geq 1-\delta,
\end{equation}
Here $\|\cdot\|$ denotes the $\ell^2$-norm, $\epsilon$ is the target precision and $\delta$ is the failure probability we can tolerate.

We will probe the system by applying external fields in the form of single-qubit Hamiltonians. More precisely, add to the original Hamiltonian a control Hamiltonian $-\nu H_{\mathrm{ctrl}}(k,\mathbf{s},\BasisBeta)$ 
\begin{equation}
\label{eq:H_ctrl}
    H_{\mathrm{ctrl}}(k,\mathbf{s},\BasisBeta) = \sum_{j\neq k}  (-1)^{s_j} \sigma^{\beta_j}_j + \frac{1}{2}(-1)^{s_k} \sigma^{\beta_k}_k,
\end{equation}
where $\mathbf{s}=(s_1,\cdots,s_n)\in\{0,1\}^n$ and $\BasisBeta=(\beta_1,\cdots,\beta_n)\in\{1,2,3\}^n$. The system then evolves under the Hamiltonian 
\begin{equation}
\label{eq:total_ham}
    H_{\mathrm{tot}}(\CoefLambda,\nu,k,\mathbf{s},\BasisBeta) = H(\CoefLambda)-\nu H_{\mathrm{ctrl}}(k,\mathbf{s},\BasisBeta).
\end{equation}

\section{The phase estimation experiment}
\label{sec:the_phase_est_experiment}

For simplicity we will temporarily omit the dependence on $\nu,\CoefLambda,\BasisBeta,k,\mathbf{s}$ since these are fixed within a single experiment. 
We define the following states:  
\begin{equation}
    \label{eq:eigenstates_of_H_ctrl}
    \ket{\Phi_0} = \bigotimes_{j} \ket{s_j,\beta_j},\quad \ket{\Phi_1} = \left(\bigotimes_{j<k}\ket{s_j,\beta_j}\right)\otimes \ket{1-s_k,\beta_k}\otimes \left(\bigotimes_{j>k}\ket{s_j,\beta_j}\right).
\end{equation}
One can readily verify that they are the ground state and the first excited state of $-H_{\mathrm{ctrl}}$ respectively. In our phase estimation experiment, we will initialize the quantum system in the state
\begin{equation}
    \label{eq:initial_state}
    \ket{\Phi_+} = \frac{1}{\sqrt{2}}(\ket{\Phi_0}+\ket{\Phi_1}).
\end{equation}
Note that this state is still a tensor product of single-qubit Pauli eigenstates and is therefore easy to prepare. We then let the quantum system evolve under the unknown Hamiltonian $H$ together with the external control $-\nu H_{\mathrm{ctrl}}$, which can be described by the time evolution operator $e^{-iH_{\mathrm{tot}}t}$. At the end of the experiment when measure one of the two single-qubit Pauli observables $O_c$ and $O_s$. They are chosen so that 
\begin{equation}
    \label{eq:Oc_Os_requirements}
    O_c\ket{\Phi_0} = \ket{\Phi_1},\quad O_c\ket{\Phi_1}=\ket{\Phi_0},\quad O_s\ket{\Phi_0} = -ii\ket{\Phi_1},\quad O_s\ket{\Phi_1}=i\ket{\Phi_0}.
\end{equation}
Such $O_c$ and $O_s$ are always available in the form of single-qubit Pauli operators.
As an example, when $\beta_k=3$, we can simply choose $O_c= \sigma^1_k$ and $O_s=-\sigma^2_k$, i.e., Pauli-$X$ and $Y$ operators acting on the $k$th qubit. Therefore from each run of the experiment we sample a binary output $b$ such that,
\begin{equation}
\label{eq:output_probabilities}
    \Pr[b=0] = \frac{1}{2}\left(1+\braket{\Phi_+|e^{iH_{\mathrm{tot}}t}O_{c/s} e^{-iH_{\mathrm{tot}}t}|\Phi_+}\right),
\end{equation}
where the observable $O_{c/s}$ in the middle is $O_c$ or $O_s$ depending on which we choose to measure.

\subsection{Analysis of the phase estimation experiment}
\label{sec:analysis_phase_estimation}

In this subsection we will discuss what results we expect from the phase estimation experiment described above. Our initial state is $\ket{\Phi_+}$, which is an equal superposition of $\ket{\Phi_0}$ and $\ket{\Phi_1}$. Note that these two states are the ground state and the first excited state of $-H_{\mathrm{ctrl}}$ respectively, whereas the system evolves under $H_{\mathrm{tot}}=H-\nu H_{\mathrm{ctrl}}$.
We denote the ground state and the first excited state of this Hamiltonian by $\ket{\Psi_0}$ and $\ket{\Psi_1}$, with the corresponding eigenvalues $E_0$ and $E_1$.
Because $\ket{\Phi_0}$ and $\ket{\Phi_1}$ are eigenstates of $-H_{\mathrm{ctrl}}$ with energy gap $1$, we have the following lemma, which follows from applying Lemma~\ref{lem:eigenstate_perturbation_bound} to $(1/\nu)H_{\mathrm{tot}}=-H_{\mathrm{ctrl}}+H/\nu$:
\begin{lem}
    \label{lem:initial_state_eigenstate_difference}
    If $\nu \geq 3\|H\|$, with an appropriately chosen global phase for $\ket{\Psi_0}$ and $\ket{\Psi_1}$, we have
    \[
    \left\|\ket{\Psi_0}-\ket{\Phi_0}\right\|\leq 3\|H\|/\nu,\quad \left\|\ket{\Psi_1}-\ket{\Phi_1}\right\|\leq 3\|H\|/\nu.
    \]
\end{lem}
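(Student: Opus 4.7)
The plan is to invoke the eigenstate perturbation bound (Lemma~\ref{lem:eigenstate_perturbation_bound}) applied to the rescaled operator
\begin{equation*}
\frac{1}{\nu} H_{\mathrm{tot}} \;=\; -H_{\mathrm{ctrl}} + \frac{1}{\nu} H,
\end{equation*}
viewing $-H_{\mathrm{ctrl}}$ as the unperturbed Hamiltonian and $H/\nu$ as the perturbation. Rescaling by $1/\nu$ leaves every eigenvector unchanged, so any eigenstate bound for $H_{\mathrm{tot}}/\nu$ transfers verbatim to $H_{\mathrm{tot}}$, and in particular to $\ket{\Psi_0},\ket{\Psi_1}$.

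The first step is to verify that the hypotheses of the perturbation lemma are satisfied. I would explicitly diagonalize $-H_{\mathrm{ctrl}}$ in the product Pauli eigenbasis defined by the pair $(\mathbf{s},\BasisBeta)$: each qubit $j\neq k$ contributes an energy $\pm 1$, while qubit $k$ contributes $\pm 1/2$. Consequently, flipping any qubit other than $k$ costs energy $2$, while flipping qubit $k$ costs only $1$. Hence $\ket{\Phi_0}$ is the unique ground state and $\ket{\Phi_1}$ is the unique first excited state, each separated from the rest of the spectrum by a gap of at least $1$. This is precisely where the factor of $1/2$ on qubit $k$ in \eqref{eq:H_ctrl} is essential: without it, the first excited level would be $n$-fold degenerate and $\ket{\Phi_1}$ would not even be well defined.

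Next I would invoke Lemma~\ref{lem:eigenstate_perturbation_bound} with unperturbed gap $\Delta=1$ and perturbation norm $\|H/\nu\|\leq \|H\|/\nu$. Under the hypothesis $\nu\geq 3\|H\|$, this perturbation norm is at most $1/3$, safely smaller than half the gap, placing us in the regime where standard analytic (or Davis--Kahan--type) perturbation theory gives an eigenstate displacement controlled by $\|H/\nu\|/\Delta$ up to a mild resolvent correction of the form $(1-\|H\|/\nu)^{-1}\leq 3/2$. After fixing the global phases of $\ket{\Psi_0}$ and $\ket{\Psi_1}$ so that $\braket{\Phi_i|\Psi_i}$ is real and nonnegative, combining these factors yields
\begin{equation*}
\bigl\|\ket{\Psi_i}-\ket{\Phi_i}\bigr\| \;\leq\; \frac{3\|H\|}{\nu},\qquad i=0,1,
\end{equation*}
as claimed.

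The main obstacle is bookkeeping constants rather than anything conceptually hard: the spectral structure of $-H_{\mathrm{ctrl}}$ is transparent from its tensor-product form, and the invocation of the abstract perturbation lemma is mechanical. What requires care is that the constant $3$ in the conclusion is used twice --- once as the threshold $\nu\geq 3\|H\|$ to guarantee that $\|H\|/\nu\leq 1/3$, and once as the final prefactor --- so matching the stated bound amounts to checking that the resolvent correction $(1-\|H\|/\nu)^{-1}$ and the conversion from the $\sin$-angle bound to the chord bound $\|\ket{\Psi_i}-\ket{\Phi_i}\|=2\sin(\theta/2)$ together absorb cleanly under this hypothesis.
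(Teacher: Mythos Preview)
Your proposal is correct and follows exactly the paper's route: rescale to $(1/\nu)H_{\mathrm{tot}}=-H_{\mathrm{ctrl}}+H/\nu$, verify that $-H_{\mathrm{ctrl}}$ has its ground and first excited states isolated by a gap $\Delta=1$ (this is precisely where the $1/2$ on qubit $k$ matters), and then invoke Lemma~\ref{lem:eigenstate_perturbation_bound} with $V=H/\nu$.

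One minor remark: your final paragraph over-complicates the bookkeeping. Once you invoke Lemma~\ref{lem:eigenstate_perturbation_bound} the constant $3$ is delivered directly; there is no need to reassemble it from a Davis--Kahan resolvent factor and a $\sin$-angle-to-chord conversion. In fact your proposed decomposition is slightly off: in the proof of Lemma~\ref{lem:eigenstate_perturbation_bound} the $3$ arises because the perturbed gap along the interpolation is $\Delta-2\|V\|\geq \Delta/3$ (the gap shrinks by $2\|V\|$, not $\|V\|$), so the pseudoinverse norm is bounded by $3/\Delta$, and integrating the derivative bound over $s\in[0,1]$ gives $3\|V\|/\Delta$ with no further conversion needed. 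Since you are invoking the lemma as a black box anyway, this does not affect the validity of your argument.
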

This means that $\ket{\Phi_+}$ is approximately a superposition of $\ket{\Psi_0}$ and $\ket{\Psi_1}$. We will use this observation to approximately compute the output probabilities given in \eqref{eq:output_probabilities}. It suffices to compute
\begin{equation}
    \Big|\braket{\Phi_+|e^{iH_{\mathrm{tot}}t}O_{c/s} e^{-iH_{\mathrm{tot}}t}|\Phi_+} - \frac{1}{2}(\bra{\Psi_0}e^{i E_0 t}+\bra{\Psi_1}e^{i E_1 t})O_{c/s}(e^{-i E_0 t}\ket{\Psi_0}+e^{-i E_1 t}\ket{\Psi_1})\Big|\leq 6\sqrt{2}\|H\|/\nu
\end{equation}
where we have used Lemma~\ref{lem:initial_state_eigenstate_difference} and the triangle inequality.
Using \eqref{eq:Oc_Os_requirements} we have 
\begin{equation}
\begin{aligned}
    \Big|\braket{\Phi_+|e^{iH_{\mathrm{tot}}t}O_{c} e^{-iH_{\mathrm{tot}}t}|\Phi_+} &- \cos((E_1-E_0)t)\Big|\leq 6\sqrt{2}\|H\|/\nu , \\
    \Big|\braket{\Phi_+|e^{iH_{\mathrm{tot}}t}O_{s} e^{-iH_{\mathrm{tot}}t}|\Phi_+} &- \sin((E_1-E_0)t)\Big| \leq 6\sqrt{2}\|H\|/\nu.
\end{aligned}
\end{equation}
Therefore, we define the energy gap
\begin{equation}
\label{eq:E_Delta}
    E_{\Delta} = E_1 - E_0,
\end{equation}
then, when we are measuring $O_c$, the probability of obtaining $0$ satisfies
\begin{equation}
\label{eq:cos_err_bound}
    \Big|\Pr[b=0] - \frac{1+\cos(E_\Delta t)}{2}\Big|\leq 6\sqrt{2}\|H\|/\nu,
\end{equation}
and when we are measuring $O_s$ the probability likewise satisfies
\begin{equation}
\label{eq:sin_err_bound}
    \Big|\Pr[b=0] - \frac{1+\sin(E_\Delta t)}{2}\Big|\leq 6\sqrt{2}\|H\|/\nu,
\end{equation}

\subsection{Robust frequency estimation}
\label{sec:robust_frequency_est}

The robust frequency estimation algorithm \cite{LiTongNiGefenYing2023heisenberg,ma2024learningkbodyhamiltonianscompressed,HuMaGongEtAl2025,MobusBluhmGefenEtAl2025heisenberg}, which is a variant of the robust phase estimation algorithm \cite{KimmelLowYoder2015robust}, allows us to estimate the energy gap $E_\Delta$ to precision $\epsilon$ with $1/\epsilon$ total evolution time, as stated in the following theorem:


\begin{thm}[Theorem~4.1 in \cite{MobusBluhmGefenEtAl2025heisenberg}]
\label{thm:robust_frequency_est}
    Let $\theta\in[-\Phi,\Phi]$.
    Let $X(t)$ and $Y(t)$ be independent random variables satisfying
    \begin{equation} \label{eq:rv_correctness_condition_prob}
        \begin{aligned}
            &|X(t)-\cos(\theta t)|< 1/\sqrt{8}, \text{ with probability at least }2/3, \\
            &|Y(t)-\sin(\theta t)|< 1/\sqrt{8}, \text{ with probability at least }2/3.
        \end{aligned}
    \end{equation}
    Then with independent samples $X(t_1),X(t_2),\cdots,X(t_{\Gamma})$ and $Y(t_1),Y(t_2),\cdots,Y(t_{\Gamma})$, with
    \begin{equation}
        {\Gamma}=\Or(\log^2(\Phi/\epsilon)), \quad T=t_1+t_2+\cdots+t_{\Gamma}=\Theta(1/\epsilon), \quad \max_j t_j=\Or(1/\epsilon) ,
    \end{equation}
    and $t_j\geq 0$, we can construct  an estimator $\hat{\theta}$ such that
    \begin{equation}
        \sqrt{\mathbb{E}[|\hat{\theta}-\theta|^2]}\leq \epsilon.
    \end{equation}
\end{thm}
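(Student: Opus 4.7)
The plan is to follow the robust phase estimation template of Kitaev and Kimmel--Low--Yoder, adapted to the stochastic sampling model of the hypothesis. The key observation is geometric: when both events in \eqref{eq:rv_correctness_condition_prob} hold, the complex sample $Z(t):=X(t)+iY(t)$ lies within Euclidean distance $\sqrt{2/8}=1/2$ of $e^{i\theta t}$ on the unit circle, which forces $\arg(Z(t))$ to agree with $\theta t \bmod 2\pi$ up to angular error at most $\pi/6$. By independence of $X(t)$ and $Y(t)$, this ``good event'' occurs with probability at least $(2/3)^2=4/9>1/3$, and taking the circular median of $m$ independent samples at the same $t$ drives the failure probability to $\exp(-\Omega(m))$ by a Chernoff argument.

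With this primitive I would implement an iterative refinement ladder. First pick $t_1=\Theta(1/\Phi)$ so that $\theta t_1\in(-\pi,\pi]$ is unambiguous, producing a majority-boosted initial estimate $\hat\theta_1$ with absolute error $O(\Phi)$. For $k=1,\ldots,K-1$, set $t_{k+1}=2t_k$, use $\hat\theta_k$ to select the correct $2\pi$-branch of $\arg(Z(t_{k+1}))/t_{k+1}$ (the previous absolute error $\pi/(6t_k)$ translates at $t_{k+1}=2t_k$ into a phase error of $\pi/3$, comfortably inside a half period), and thereby obtain $\hat\theta_{k+1}$ whose error has been halved. After $K=\lceil\log_2(\Phi/\epsilon)\rceil$ stages the worst-case error is $O(\epsilon)$, and the largest time used is $t_K=\Theta(1/\epsilon)$. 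To upgrade from a high-probability bound to the required $L^2$ bound, I would choose the per-stage failure probability so that, conditioned on all earlier stages succeeding, a catastrophic failure at stage $k$ (worst-case error $2\Phi$ by the a priori bound) contributes at most $\epsilon^2/K$ to $\mathbb{E}[|\hat\theta-\theta|^2]$; this forces $\eta_k=\Theta(\epsilon^2/(K\Phi^2))$ and hence $m_k=\Theta(\log(K\Phi/\epsilon))$ samples per stage, so that $\Gamma=\sum_k m_k=\Theta(\log^2(\Phi/\epsilon))$, matching the claim.

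The main obstacle is the tight total-time bound $T=\Theta(1/\epsilon)$. A uniform boost $m_k=\Theta(\log(\Phi/\epsilon))$ at every stage yields $T=\sum_k m_k t_k=\Theta(\log(\Phi/\epsilon)/\epsilon)$, which is off by one logarithmic factor. To close this gap I would use a graded schedule that concentrates the heavy repetition at the cheap small-$t_k$ stages and allows only $O(1)$ samples at the expensive terminal stages, where a single well-boosted good estimate from stage $K-1$ already pins down the branch with very high probability and the terminal stage need only refine an already known window rather than reliably distinguish branches. Carrying out this scheduling while still controlling the second moment---in particular, accounting for the rare event in which a large terminal-stage outlier survives a small median---is the technical crux; summing the geometric series $\sum_k m_k t_k$ under the graded allocation then delivers the claimed $T=\Theta(1/\epsilon)$.
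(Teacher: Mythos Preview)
The paper does not actually prove this statement: it is quoted from the cited reference, and the surrounding text offers only a one-paragraph sketch describing a sequence of binary decision problems that shrink a confidence interval $[a,b]$ to either $[a,\tfrac{1}{3}a+\tfrac{2}{3}b]$ or $[\tfrac{2}{3}a+\tfrac{1}{3}b,b]$. Your doubling-ladder formulation is an equivalent incarnation of the same robust-phase-estimation idea, and you correctly identify the graded repetition schedule (many repeats at small $t_k$, $O(1)$ repeats at the terminal stages) as the mechanism that brings $T$ down from $\Theta(\epsilon^{-1}\log(\Phi/\epsilon))$ to $\Theta(\epsilon^{-1})$ while keeping $\Gamma=\Theta(\log^2(\Phi/\epsilon))$.

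There is, however, one real gap. You combine $X(t)$ and $Y(t)$ into $Z(t)=X(t)+iY(t)$ and note that the joint good event $\{|Z(t)-e^{i\theta t}|<1/2\}$ holds with probability at least $(2/3)^2=4/9$. But $4/9<1/2$, and the Chernoff/median argument you then invoke requires a per-sample success probability strictly above one half: nothing in the hypotheses prevents the ``bad'' $5/9$ mass from concentrating at a single fixed wrong phase, in which case the circular median of $m$ samples lands at that wrong phase with probability $1-e^{-\Omega(m)}$, not $e^{-\Omega(m)}$. The claimed exponential boosting therefore fails as written. The repair is immediate: boost the two real coordinates separately. Since $\Pr[|X(t)-\cos(\theta t)|<1/\sqrt{8}]\ge 2/3>1/2$ (and likewise for $Y$), the coordinate-wise median of $m$ independent $X$-samples lies within $1/\sqrt{8}$ of $\cos(\theta t)$ except with probability $e^{-\Omega(m)}$, and similarly for $Y$; combining the two medians yields a point within distance $1/2$ of $e^{i\theta t}$ and hence your $\pi/6$ angular bound. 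With this fix the rest of your outline goes through.
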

The algorithm works by solving a sequence of
decision problems, in each of which we start with an interval $[a,b]$ containing $\theta$, and shrink it to a smaller interval containing $\theta$ that is either $[a,\frac{1}{3}a + \frac{2}{3}b]$ or $[\frac{2}{3}a + \frac{1}{3}b,b]$, thus refining our knowledge of the unknown parameter $\theta$. These decision problems can be correctly solved with evolution time proportional to $1/(b-a)$ with high probability using a few rounds of majority voting. This results in
an algorithm that is robust against noise and achieves the $1/\epsilon$ total evolution time (i.e., $T$) scaling.

In our setting, let $\hat{B}$ be the fraction of experiments yielding $b=0$ with $O_c$ or $O_s$. Then $2\hat{B}-1$ serves as an estimator of either $\cos(E_\Delta t)$ or $\sin(E_\Delta t)$ with bias at most $12\sqrt{2}\|H\|/\nu$ by \eqref{eq:cos_err_bound} and \eqref{eq:sin_err_bound}. The variance is at most $1/N_b$ when $N_b$ independent samples are used to compute the rate $\hat{B}$. Consequently one can estimate $\cos(E_\Delta t)$ or $\sin(E_\Delta t)$ up to error $12\sqrt{2}\|H\|/\nu + \sqrt{3/N_b}$ with probability at least $2/3$ as guaranteed by Chebyshev's inequality. By Theorem~\ref{thm:robust_frequency_est}, it suffices to have
\[
12\sqrt{2}\|H\|/\nu + \sqrt{3/N_b}\leq 1/\sqrt{8},
\]
for which we can choose $\nu = 96\|H\|$ and $N_b = 96$. Importantly, these choices are independent of the precision $\epsilon$ that we want to achieve.

Besides the expected error bound, we also want an error bound that holds with large probability. This can be obtained by taking the median of $N_m$ independent samples of $\hat{\theta}$ (i.e., the estimate for $\theta$, or in our setting $E_\Delta$, obtained from one run of the robust frequency estimation algorithm), which through Hoeffding's inequality ensures the median is within $3\epsilon$ distance from the actual value $E_\Delta$ with probability at least $1-e^{-cN_m}$ for an absolute constant $c$. Therefore a $1-\delta$ success probability can be guaranteed with $\Or(\log(1/\delta))$ overhead.

\section{Recovering the coefficients}

We regard $E_\Delta$ as a function of $\CoefLambda,\nu,k,\mathbf{s},\BasisBeta$, and write it as $E_\Delta(\CoefLambda,\nu,k,\mathbf{s},\BasisBeta)$. 
Because $\CoefLambda$ has been used to represent the actual coefficients of the unknown Hamiltonian, we introduce a free variable $\CoefMu$ to replace $\CoefLambda$ and consider the function $E_\Delta(\CoefMu,\nu,k,\mathbf{s},\BasisBeta)$
All energy gaps computed from $\CoefMu,\nu$ for all $k,\mathbf{s},\BasisBeta$ (which means all $n\times 3^n\times 2^n$ combinations) are collected into an array
\begin{equation}
\label{eq:eigenvalue_function}
    \mathbf{E}(\CoefMu,\nu) = (E_\Delta(\CoefMu,\nu,k,\mathbf{s},\BasisBeta))_{k,\mathbf{s},\BasisBeta}.
\end{equation}
We regard this array as a vector of dimension $6^n n$:

\subsection{The Jacobian matrix}

We will first study the Jacobian matrix of $\mathbf{E}(\CoefMu,\nu)$ as a function of $\CoefMu$. We will show that the Jacobian matrix 
\begin{equation}
    \label{eq:jacobian_matrix}
    J(\CoefMu,\nu) = \frac{\partial \mathbf{E}(\CoefMu,\nu)}{\partial\CoefMu},
\end{equation}
is well-conditioned. This requires us to first estimate the smallest singular value of $J(\CoefMu,\nu)$. $J(\CoefMu,\nu)$ is of size $6^n\times (4^n-1)$, where the $-1$ comes from the fact that the coefficient associated with $I$ in $H$ is set to be $0$ (so that $H$ is traceless).

Using the eigenvalue perturbation theory \cite{Bha97:Matrix-Analysis}, we have
\begin{equation}
    \frac{\partial E_\Delta(\CoefMu,\nu,k,\mathbf{s},\BasisBeta)}{\partial\lambda_{\mathbf{a}}} = \frac{1}{2}\big(\braket{\Psi_1(\CoefMu,\nu,k,\mathbf{s},\BasisBeta)|\sigma^{\mathbf{a}}|\Psi_1(\CoefMu,\nu,k,\mathbf{s},\BasisBeta)} - \braket{\Psi_0(\CoefMu,\nu,k,\mathbf{s},\BasisBeta)|\sigma^{\mathbf{a}}|\Psi_0(\CoefMu,\nu,k,\mathbf{s},\BasisBeta)}\big),
\end{equation}
where $\ket{\Psi_0(\CoefMu,\nu,k,\mathbf{s},\BasisBeta)}$ and $\ket{\Psi_1(\CoefMu,\nu,k,\mathbf{s},\BasisBeta)}$ are the ground state and the first excited state of $H_{\mathrm{tot}}(\CoefMu,\nu,k,\mathbf{s},\BasisBeta)$ respectively. Using Lemma~\ref{lem:initial_state_eigenstate_difference} again, we have
\begin{equation}
    \frac{\partial E_\Delta(\CoefMu,\nu,k,\mathbf{s},\BasisBeta)}{\partial\lambda_{\mathbf{a}}} = \frac{1}{2}\big(\braket{\Phi_1(k,\mathbf{s},\BasisBeta)|\sigma^{\mathbf{a}}|\Phi_1(k,\mathbf{s},\BasisBeta)} - \braket{\Phi_0(\mathbf{s},\BasisBeta)|\sigma^{\mathbf{a}}|\Phi_0(\mathbf{s},\BasisBeta)}\big) + \Or(\|H\|/\nu).
\end{equation}
Because both $\ket{\Phi_0}$ and $\ket{\Phi_1}$ are tensor products of single-qubit Pauli eigenstates, we can readily compute their Pauli expectation values: If $\mathbf{a}\sim\BasisBeta$, then
\begin{equation}
    \label{eq:Pauli_expectation_values_Phi}
    \braket{\Phi_0(\mathbf{s},\BasisBeta)|\sigma^{\mathbf{a}}|\Phi_0(\mathbf{s},\BasisBeta)} = (-1)^{\mathbf{s}\cdot s(\mathbf{a})},\quad \braket{\Phi_1(k,\mathbf{s},\BasisBeta)|\sigma^{\mathbf{a}}|\Phi_1(k,\mathbf{s},\BasisBeta)} = (-1)^{s(a_k)}(-1)^{\mathbf{s}\cdot s(\mathbf{a})}.
\end{equation}
If $\mathbf{a}\not\sim\BasisBeta$, then both expectation values are $0$. Here the symbol $\sim$ is defined in \eqref{eq:compatible_strings} and the function $s(\cdot)$ is defined in \eqref{eq:s_function_scalar} and \eqref{eq:s_function}. From the above, we can see that
\begin{equation}
    \frac{\partial E_\Delta(\CoefMu,\nu,k,\mathbf{s},\BasisBeta)}{\partial\lambda_{\mathbf{a}}} = \begin{cases}
        -(-1)^{\mathbf{s}\cdot s(\mathbf{a})} + \Or(\|H\|/\nu),&\text{ if } \mathbf{a}\sim\BasisBeta\text{ and } s(a_k)\neq 0, \\
        0 ,&\text{ otherwise.}
    \end{cases}
\end{equation}
Therefore we can compute the matrix $J(\CoefMu,\nu)^\top J(\CoefMu,\nu)$ in the following way
\begin{equation}
\begin{aligned}
    (J(\CoefMu,\nu)^\top J(\CoefMu,\nu))_{\mathbf{a},\mathbf{a}'} &= \sum_{k,\mathbf{s},\BasisBeta}  \frac{\partial E(\CoefMu,\nu,k,\mathbf{s},\BasisBeta)}{\partial\lambda_{\mathbf{a}}}  \frac{\partial E(\CoefMu,\nu,k,\mathbf{s},\BasisBeta)}{\partial\lambda_{\mathbf{a}'}} \\
    &= \sum_{\substack{k: s(a_k)=1\\ s(a_k')=1}} \sum_{\substack{\BasisBeta:\BasisBeta\sim\mathbf{a} \\ \BasisBeta\sim\mathbf{a}'}}\sum_{\mathbf{s}} (-1)^{\mathbf{s}\cdot (s(\mathbf{a})+s(\mathbf{a}'))} + \Or_n(1/\nu).
\end{aligned}
\end{equation}
Because 
\begin{equation}
    \sum_{\mathbf{s}} (-1)^{\mathbf{s}\cdot (s(\mathbf{a})+s(\mathbf{a}'))} = 2^n\delta_{s(\mathbf{a}),s(\mathbf{a}')}, 
\end{equation}
we have
\begin{equation}
    (J(\CoefMu,\nu)^\top J(\CoefMu,\nu))_{\mathbf{a},\mathbf{a}'} = 2^n\sum_{\substack{k: s(a_k)=1\\ s(a_k')=1}} \sum_{\substack{\BasisBeta:\BasisBeta\sim\mathbf{a} \\ \BasisBeta\sim\mathbf{a}'}}\delta_{s(\mathbf{a}),s(\mathbf{a}')} + \Or_n(1/\nu).
\end{equation}
Note that $\BasisBeta\sim \mathbf{a}$ and $\BasisBeta\sim\mathbf{a}'$ imply $\mathbf{a}\sim\mathbf{a}'$. Given the additional requirement that $s(\mathbf{a})=s(\mathbf{a}')$ for the term to be non-zero, we can then see that it is non-zero only when $\mathbf{a}=\mathbf{a}'$. Therefore
\begin{equation}
\label{eq:JTJ_explicit}
    \begin{aligned}
        (J(\CoefMu,\nu)^\top J(\CoefMu,\nu))_{\mathbf{a},\mathbf{a}'} &= 2^n\sum_{\substack{k: s(a_k)=1\\ s(a_k')=1}} \sum_{\substack{\BasisBeta:\BasisBeta\sim\mathbf{a} \\ \BasisBeta\sim\mathbf{a}'}}\delta_{s(\mathbf{a}),s(\mathbf{a}')} + \Or_n(1/\nu) \\
        &= 2^n \sum_{\substack{k: s(a_k)=1}} \sum_{\substack{\BasisBeta:\BasisBeta\sim\mathbf{a}}} \delta_{\mathbf{a},\mathbf{a}'} + \Or_n(1/\nu) \\
        &= |\mathbf{a}|2^n 3^{n-|\mathbf{a}|} \delta_{\mathbf{a},\mathbf{a}'} + \Or_n(1/\nu),
    \end{aligned}
\end{equation}
where $|\mathbf{a}|$ is the number of non-zero elements of $\mathbf{a}$, which appears in the above equation due to the summation over $k$. The factor $3^{n-|\mathbf{a}|}$ is the number of bases $\BasisBeta$ that are compatible with $\mathbf{a}$. From the above we can see that $J^\top J$ is almost a diagonal matrix with eigenvalue at least $2^n$ ($|\mathbf{a}|\geq 1$ because we omit the coefficient corresponding to the identity). We therefore reach the following lemma:
\begin{lem}
    \label{lem:singular_val_Jacobian}
    Let $J(\CoefMu,\nu)$ be the Jacobian matrix defined in \eqref{eq:jacobian_matrix}. Then its smallest singular value is lower bounded by $2^{n/2}-\Or_n(1/\nu)$.
\end{lem}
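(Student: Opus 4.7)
The plan is to directly compute the Gram matrix $J(\CoefMu,\nu)^\top J(\CoefMu,\nu)$, show that up to an $\Or_n(1/\nu)$ perturbation it is diagonal with every diagonal entry bounded below by $2^n$, and then read off the singular value bound via Weyl's inequality. Most of the combinatorial machinery is assembled in the discussion preceding the lemma; the task is to package it cleanly.

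First, I would apply Hellmann--Feynman / first-order eigenvalue perturbation theory to write
\[
\frac{\partial E_\Delta(\CoefMu,\nu,k,\mathbf{s},\BasisBeta)}{\partial \lambda_{\mathbf{a}}} = \tfrac{1}{2}\bigl(\braket{\Psi_1|\sigma^{\mathbf{a}}|\Psi_1}-\braket{\Psi_0|\sigma^{\mathbf{a}}|\Psi_0}\bigr),
\]
where $\ket{\Psi_{0,1}}$ are the ground and first excited states of $H_{\mathrm{tot}}(\CoefMu,\nu,k,\mathbf{s},\BasisBeta)$. Using Lemma~\ref{lem:initial_state_eigenstate_difference}, I would replace $\ket{\Psi_{0,1}}$ with the unperturbed product states $\ket{\Phi_{0,1}}$ at cost $\Or(\|H\|/\nu)$ per matrix element. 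Since $\ket{\Phi_{0,1}}$ are tensor products of single-qubit Pauli eigenstates, each expectation value vanishes unless $\mathbf{a}\sim\BasisBeta$ and otherwise equals $\pm 1$ with signs dictated by $\mathbf{s}$ and $k$, as recorded in \eqref{eq:Pauli_expectation_values_Phi}. The difference of the two expectation values then vanishes unless additionally $s(a_k)=1$, giving the closed-form Jacobian entries already displayed in the text.

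Second, I would form $J^\top J$ and evaluate the nested sums over $k$, $\mathbf{s}$, and $\BasisBeta$ column by column. The sum over $\mathbf{s}\in\{0,1\}^n$ produces $2^n\delta_{s(\mathbf{a}),s(\mathbf{a}')}$ via a Walsh--Hadamard orthogonality identity, and the joint compatibility requirement $\BasisBeta\sim\mathbf{a}$ and $\BasisBeta\sim\mathbf{a}'$ combined with $s(\mathbf{a})=s(\mathbf{a}')$ forces $\mathbf{a}=\mathbf{a}'$; this is what collapses $J^\top J$ to a diagonal matrix. On the diagonal, the $k$-sum restricted to $s(a_k)=1$ contributes a factor $|\mathbf{a}|$ and the $\BasisBeta$-sum contributes $3^{n-|\mathbf{a}|}$, yielding \eqref{eq:JTJ_explicit}.

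Finally, since $\mathbf{a}\neq \mathbf{0}$ forces $|\mathbf{a}|\geq 1$, every diagonal entry of the leading part is at least $2^n$, so the smallest eigenvalue of the zeroth-order $J^\top J$ is $\geq 2^n$. The perturbation in \eqref{eq:JTJ_explicit} is entrywise $\Or_n(1/\nu)$ on a matrix of size depending only on $n$, so by Weyl's inequality $\sigma_{\min}(J(\CoefMu,\nu))^2 \geq 2^n - \Or_n(1/\nu)$, and taking square roots yields $\sigma_{\min}(J(\CoefMu,\nu))\geq 2^{n/2}-\Or_n(1/\nu)$. The one step that requires care is the passage from entrywise $\Or_n(1/\nu)$ bounds to an operator-norm perturbation; I would control this by a Frobenius-norm estimate, with the resulting $4^n$-dependent prefactor absorbed into the $\Or_n(\cdot)$ notation, which is why the stated bound hides $n$-dependent constants rather than being tight in $n$.
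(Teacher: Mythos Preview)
Your proposal is correct and follows essentially the same route as the paper: compute the Jacobian entries via first-order eigenvalue perturbation theory, replace $\ket{\Psi_{0,1}}$ by $\ket{\Phi_{0,1}}$ using Lemma~\ref{lem:initial_state_eigenstate_difference}, exploit the Walsh--Hadamard orthogonality over $\mathbf{s}$ together with the compatibility constraint to diagonalize $J^\top J$, and read off the eigenvalue bound. Your explicit mention of Weyl's inequality and the Frobenius-norm control of the entrywise $\Or_n(1/\nu)$ perturbation is a nice refinement that the paper leaves implicit.
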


The limit of $J(\CoefMu,\nu)^\top J(\CoefMu,\nu)$ for $\nu\to\infty$ will be useful later on. Therefore we define
\begin{equation}
\label{eq:J0_defn}
    J_0 = \lim_{\nu\to \infty} J(\CoefMu,\nu)^\top J(\CoefMu,\nu)
\end{equation}
From \eqref{eq:JTJ_explicit} we can see that 
\begin{equation}
    \label{eq:J0_explicit}
    (J_0)_{\mathbf{a},\mathbf{a}'} = |\mathbf{a}|2^n 3^{n-|\mathbf{a}|} \delta_{\mathbf{a},\mathbf{a}'}.
\end{equation}

\subsection{The least-squares problem}
\label{sec:least_sqaures}

We will perform phase estimation experiments to obtain estimates for energy gaps collected into an array $\hat{\mathbf{E}}=(\hat{E}_{\Delta,k,\mathbf{s},\BasisBeta})$. With these estimates we will solve a least-squares problem to recover the Hamiltonian coefficients $\CoefLambda$. 
The exact values of the energy gaps then form a vector $\mathbf{E}^\star = \mathbf{E}(\CoefLambda,\nu)$.

A natural way to solve for $\CoefLambda$ is to solve a least-squares problem by minimizing the loss function
\begin{equation}
    \label{eq:loss_function}
    L(\hat{\mathbf{E}},\CoefMu,\nu) = \frac{1}{2}\|\hat{\mathbf{E}}-\mathbf{E}(\CoefMu,\nu)\|^2.
\end{equation}
The first-order optimality condition yields the equation
\begin{equation}
    \label{eq:first_order_optimality}
    \frac{\partial}{\partial \CoefMu} L(\hat{\mathbf{E}},\CoefMu,\nu) = (\mathbf{E}(\CoefMu,\nu)-\hat{\mathbf{E}})^\top \frac{\partial }{\partial\CoefMu}\mathbf{E}(\CoefMu,\nu)=0.
\end{equation}
By solving this equation around the exact value $\CoefLambda$ we will be able to obtain an estimate $g(\hat{\mathbf{E}})$ for the coefficients $\CoefMu$. We will first show that in a neighborhood of $\mathbf{E}^\star$, the equation \eqref{eq:first_order_optimality} leads to a unique $C^\infty$ map $g$.

\begin{lem}
    \label{lem:map_from_eigenvals_to_coefs}
    There exists $\nu_0= \Or_n(1)$ and $r=\Omega_n(1)$ such that when $\nu\geq \nu_0$, there exists a $C^\infty$ map $g: B_{r}(\mathbf{E}^\star)\to \RR^{4^n-1}$ such that
    \[
    \frac{\partial}{\partial \CoefMu} L(\hat{\mathbf{E}},g(\hat{\mathbf{E}}),\nu)=0.
    \]
    Moreover, its Jacobian matrix is
    \[
    \frac{\partial}{\partial\hat{\mathbf{E}}}g(\hat{\mathbf{E}}) = \left(\frac{\partial^2 L}{\partial \CoefMu^2}\right)^{-1}\frac{\partial \mathbf{E}}{\partial\CoefMu}.
    \]
\end{lem}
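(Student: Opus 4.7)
The plan is to apply the implicit function theorem to the first-order optimality condition, viewed as an equation defining $\CoefMu$ implicitly in terms of $\hat{\mathbf{E}}$ in a neighborhood of the true point $(\mathbf{E}^\star, \CoefLambda)$. Setting
\begin{equation}
F(\hat{\mathbf{E}},\CoefMu) := \frac{\partial}{\partial\CoefMu} L(\hat{\mathbf{E}},\CoefMu,\nu) = -\left(\frac{\partial \mathbf{E}(\CoefMu,\nu)}{\partial\CoefMu}\right)^{\!\top}\!\big(\hat{\mathbf{E}}-\mathbf{E}(\CoefMu,\nu)\big),
\end{equation}
we have $F(\mathbf{E}^\star,\CoefLambda)=0$ since $\mathbf{E}(\CoefLambda,\nu)=\mathbf{E}^\star$. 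The map $F$ is $C^\infty$ because the nondegenerate spectral gap of $H_{\mathrm{tot}}$ (guaranteed by the choice of $H_{\mathrm{ctrl}}$ for $\nu$ large enough) makes $\mathbf{E}(\CoefMu,\nu)$ a $C^\infty$ function of $\CoefMu$ by analytic eigenvalue perturbation theory.

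First I would verify invertibility of $\partial_\CoefMu F = \partial^2_\CoefMu L$ at the base point. From
\begin{equation}
\frac{\partial^2 L}{\partial \CoefMu^2} = J(\CoefMu,\nu)^\top J(\CoefMu,\nu) + \big(\mathbf{E}(\CoefMu,\nu)-\hat{\mathbf{E}}\big)^\top \frac{\partial^2 \mathbf{E}}{\partial\CoefMu^2},
\end{equation}
the second term vanishes at $(\mathbf{E}^\star,\CoefLambda)$, and by Lemma~\ref{lem:singular_val_Jacobian} the first term has smallest eigenvalue at least $2^n - \Or_n(1/\nu)$. Choosing $\nu_0 = \Or_n(1)$ large enough, this is bounded below by $2^{n-1}$ for all $\nu \geq \nu_0$, so the Hessian is strictly positive definite and hence invertible.

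Next I would promote this pointwise invertibility to invertibility on a ball of radius $r=\Omega_n(1)$ by a quantitative version of the implicit function theorem. The key inputs are uniform bounds, independent of $\nu \geq \nu_0$ and of $\CoefLambda$, on $\|\partial^2_\CoefMu \mathbf{E}\|$ and on the Lipschitz constant of $J(\CoefMu,\nu)^\top J(\CoefMu,\nu)$ in $\CoefMu$. Both bounds follow from second-order analytic perturbation theory applied to $H_{\mathrm{tot}}/\nu = -H_{\mathrm{ctrl}} + H(\CoefMu)/\nu$, using the $\Omega(1)$ gap of $-H_{\mathrm{ctrl}}$ between $\ket{\Phi_0},\ket{\Phi_1}$ and the rest of the spectrum. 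Given these estimates, the perturbation term $(\mathbf{E}(\CoefMu,\nu)-\hat{\mathbf{E}})^\top \partial^2_\CoefMu \mathbf{E}$ can be made smaller than $2^{n-1}$ in operator norm whenever $\|\hat{\mathbf{E}}-\mathbf{E}^\star\|\leq r$ and $\|\CoefMu-\CoefLambda\| \leq r$ for $r = \Omega_n(1)$, so that $\partial^2_\CoefMu L \succeq 2^{n-1} I$ on the whole ball. A standard Newton/contraction argument then produces the unique $C^\infty$ root $\CoefMu = g(\hat{\mathbf{E}})$ of $F(\hat{\mathbf{E}},\CoefMu)=0$ in $B_r(\CoefLambda)$ for every $\hat{\mathbf{E}} \in B_r(\mathbf{E}^\star)$, with $g(\mathbf{E}^\star)=\CoefLambda$.

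Finally the Jacobian formula is obtained by differentiating the identity $F(\hat{\mathbf{E}},g(\hat{\mathbf{E}}))=0$:
\begin{equation}
\frac{\partial F}{\partial \hat{\mathbf{E}}} + \frac{\partial F}{\partial \CoefMu}\,\frac{\partial g}{\partial \hat{\mathbf{E}}} = 0,
\end{equation}
which together with $\partial_{\hat{\mathbf{E}}} F = -(\partial_\CoefMu \mathbf{E})^\top$ and $\partial_\CoefMu F = \partial^2_\CoefMu L$ yields the claimed expression (up to the convention used for writing the Jacobian of $\mathbf{E}$). The main obstacle is the second step: making $r = \Omega_n(1)$ genuinely independent of $\nu$ and of the unknown $\CoefLambda$ requires a $\nu$-uniform bound on the second Pauli-derivative of eigenvalues of $H_{\mathrm{tot}}$, which in turn relies on the gap being preserved by the static control — this is where a standard qualitative implicit function theorem is insufficient and the modified quantitative version stated in the main text is needed.
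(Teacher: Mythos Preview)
Your proposal is correct and follows essentially the same approach as the paper: apply a quantitative implicit function theorem to $F(\hat{\mathbf{E}},\CoefMu)=\partial_{\CoefMu}L=0$ at the base point $(\mathbf{E}^\star,\CoefLambda)$, verify the Hessian condition by splitting $\partial_{\CoefMu}^2 L$ into $J^\top J$ (controlled via Lemma~\ref{lem:singular_val_Jacobian}) plus a second-derivative term controlled by second-order eigenvalue perturbation theory, and then read off the Jacobian of $g$ by implicit differentiation. The paper carries this out with its modified implicit function theorem (Theorem~\ref{thm:implicit_map}), using the explicit limit matrix $J_0$ as the reference in condition~\eqref{eq:D2f(x,y)_requirement}, which is exactly the uniform Hessian lower bound you describe.
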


\begin{proof}
    We will use Theorem~\ref{thm:implicit_map} which is a modified implicit map theorem, to show the existence and uniqueness of such a function $g$. In the context of the equation \eqref{eq:first_order_optimality} we want to solve, the various objects in Theorem~\ref{thm:implicit_map} are:
    \[
    x=\hat{\mathbf{E}},\quad y = \CoefMu,\quad f(x,y) = \frac{\partial L}{\partial\CoefMu}(\hat{\mathbf{E}},\CoefMu).
    \]
    In the theorem we focus on a neighborhood of $(a,b)$ where $f(a,b)=0$. Here $a=\mathbf{E}^*$ and $b=\CoefLambda$, and $\frac{\partial L}{\partial\CoefMu}(\mathbf{E}^\star,\CoefLambda)=0$ because $\mathbf{E}^\star = \mathbf{E}(\CoefLambda)$ and \eqref{eq:first_order_optimality}.

    A key requirement in Theorem~\ref{thm:implicit_map} is that the matrix 
    \begin{equation}
    \label{eq:hessian}
        D_2 f(x,y) = \frac{\partial^2 L}{\partial\CoefMu^2}  =\left(\frac{\partial \mathbf{E}}{\partial\CoefMu}\right)^\top \left(\frac{\partial \mathbf{E}}{\partial\CoefMu}\right) + (\mathbf{E}(\CoefMu)-\hat{\mathbf{E}})^\top \frac{\partial^2\mathbf{E}}{\partial\CoefMu^2}
    \end{equation}
    satisfies \eqref{eq:D2f(x,y)_requirement}. Note that this matrix is the Hessian matrix of the objective function. For the first term on the right-hand side, by \eqref{eq:JTJ_explicit}, we have
    \begin{equation}
        \left\|\left(\frac{\partial \mathbf{E}}{\partial\CoefMu}\right)^\top \left(\frac{\partial \mathbf{E}}{\partial\CoefMu}\right)-J_0\right\| = \Or_n(1/\nu),
    \end{equation}
    when $|\lambda_{\mathbf{a}}|\leq 1$ for all $\mathbf{a}$. In fact we can replace this requirement with $|\lambda_{\mathbf{a}}|\leq 2$ for all $\mathbf{a}$ without affecting the validity of the statement. For the second term on the right-hand side, we will first deal with $\frac{\partial^2\mathbf{E}}{\partial\CoefMu^2}$. This is a rank-3 tensor and we will index its entries as $\frac{\partial^2 E_i}{\partial\mu_j\partial\mu_k}$. We define the following norm for a rank-3 tensor $A=(A_{ijk})$: for a vector $x$, we define $M_{jk} = \sum_i x_i A_{ijk}$, then
    \[
    \|A\|_T := \sup_{\|x\|\leq 1} \left\|M_i\right\|.
    \]
    With this definition, we have
    \[
    \left\|(\mathbf{E}(\CoefMu)-\hat{\mathbf{E}})^\top \frac{\partial^2\mathbf{E}}{\partial\CoefMu^2}\right\|\leq \|\mathbf{E}(\CoefMu)-\hat{\mathbf{E}}\|\left\|\frac{\partial^2\mathbf{E}}{\partial\CoefMu^2}\right\|_T.
    \]
    We can readily show that $\|A\|_T\leq \sum_{ijk}|A_{ijk}|$. By Lemma~\ref{lem:second_order_perturbation} we have
    \[
    \left|\frac{\partial^2 E_i}{\partial\mu_j\partial\mu_k}\right|=\Or_n(1/\nu).
    \]
    Therefore $\|\frac{\partial^2\mathbf{E}}{\partial\CoefMu^2}\|_T=\Or_n(1/\nu)$ and
    \[
    \left\|(\mathbf{E}(\CoefMu)-\hat{\mathbf{E}})^\top \frac{\partial^2\mathbf{E}}{\partial\CoefMu^2}\right\| = \Or_n(\|\mathbf{E}(\CoefMu)-\hat{\mathbf{E}}\|/\nu).
    \]
    Therefore
    \begin{equation}
        \label{eq:hessian_distance}
        \left\|\frac{\partial^2 L}{\partial\CoefMu^2}-J_0\right\|=\Or_n((1+\|\mathbf{E}(\CoefMu)-\hat{\mathbf{E}}\|)/\nu).
    \end{equation}
     From \eqref{eq:J0_explicit} we know that $\sigma_{\min}(J_0)=2^n$. Therefore we can guarantee \eqref{eq:D2f(x,y)_requirement} 
    with $\nu=\Omega_n(1)$ and $\|\mathbf{E}(\CoefMu)-\hat{\mathbf{E}}\|=\Or_n(\nu)$. The latter can be satisfied if 
    \[
    \|\CoefMu-\CoefLambda\|= \Or_n(\nu),\quad \|\hat{\mathbf{E}}-\mathbf{E}^\star\|=\Or_n(\nu),\quad |\lambda_{\mathbf{a}}|\leq 2,\ \forall \mathbf{a}.
    \]
    In the above we have used the mean value theorem and the fact that $\|\frac{\partial \mathbf{E}}{\partial\CoefMu}\|=\Or_n(1)$.
    Therefore, there exists $\nu_0=\Or_n(1)$ and $r_\lambda=\Omega_n(1)$, $r_E=\Omega_n(\nu)$, such that when $\nu\geq\nu_0$, $\|\CoefMu-\CoefLambda\|\leq r_\lambda$, and $\|\hat{\mathbf{E}}-\mathbf{E}^\star\|\leq r_E$, \eqref{eq:D2f(x,y)_requirement} holds,
    which in the context of our coefficient recovery task is
    \begin{equation}
    \label{eq:Hessian_approximated_by_J0}
        \left\|\frac{\partial^2 L}{\partial\CoefMu^2}(\hat{\mathbf{E}},\CoefMu,\nu)-J_0\right\|\leq \frac{1}{2}\sigma_{\min}(J_0).
    \end{equation}
    
    The only remaining component in Theorem~\ref{thm:implicit_map} we now need is then
    \[
    M_1 =\sup_{(x,y)\in B_{r_x}(a)\times B_{r_y}(b)}\|D_1 f(x,y)\|=\max_{\substack{\|\CoefMu-\CoefLambda\|\leq r_\lambda \\\|\hat{\mathbf{E}}-\mathbf{E}^\star\|\leq r_E}} \left\|\frac{\partial^2 L}{\partial\CoefMu\partial \hat{\mathbf{E}}}\right\| = \max_{\substack{\|\CoefMu-\CoefLambda\|\leq r_\lambda \\\|\hat{\mathbf{E}}-\mathbf{E}^\star\|\leq r_E}} \left\|\frac{\partial \mathbf{E}}{\partial\CoefMu}\right\|=\Or_n(1). 
    \]
    Therefore we have all the components needed for applying Theorem~\ref{thm:implicit_map}, and direct calculation shows that the radius of the ball in which the implicit function $g$ can be uniquely defined is
    \[
    r = \min\{r_\lambda/(2\|J_0^{-1}\|M_1),r_E\} = \Omega_n(1).
    \]
\end{proof}

In the above we have also proved that the optimization problem defined in \eqref{eq:loss_function} is locally strongly convex:
\begin{prop}[Strong convexity]
    \label{prop:strong_convexity}
    There exists $\nu_0=\Or_n(1)$ and $r_\lambda=\Omega_n(1)$, $r_E=\Omega_n(\nu)$, such that when $\nu\geq\nu_0$, $\|\CoefMu-\CoefLambda\|\leq r_\lambda$,
    the objective function $L(\hat{E},\CoefMu,\nu)$ in \eqref{eq:loss_function} satisfies
    \[
    \lambda_{\min}\left(\frac{\partial^2 L}{\partial \CoefMu^2}(\hat{E},\CoefMu,\nu)\right)\geq 2^{n-1},
    \]
    where $\lambda_{\min}(A)$ denotes the smallest eigenvalue of a real symmetric matrix $A$.
\end{prop}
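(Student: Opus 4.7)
The plan is to decompose the Hessian matrix of $L$ into a dominant term and a small perturbation, show that the dominant term is controlled by the explicitly computed $\nu\to\infty$ limit $J_0$, and conclude via Weyl's inequality. All the ingredients needed are already in place from the Jacobian calculation in \eqref{eq:JTJ_explicit} and the second-order perturbation bound for eigenvalues.

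First I would start from the exact expression in \eqref{eq:hessian},
\begin{equation*}
\frac{\partial^2 L}{\partial\CoefMu^2} = \left(\frac{\partial \mathbf{E}}{\partial\CoefMu}\right)^\top\!\!\left(\frac{\partial \mathbf{E}}{\partial\CoefMu}\right) + (\mathbf{E}(\CoefMu)-\hat{\mathbf{E}})^\top \frac{\partial^2\mathbf{E}}{\partial\CoefMu^2},
\end{equation*}
and treat each summand. The first summand equals $J_0 + \Or_n(1/\nu)$ in operator norm by the explicit computation in \eqref{eq:JTJ_explicit} combined with \eqref{eq:J0_defn}--\eqref{eq:J0_explicit}, valid for $|\mu_{\mathbf{a}}|$ bounded by an absolute constant (which is ensured by taking $\|\CoefMu-\CoefLambda\|\le r_\lambda$ with $r_\lambda=\Omega_n(1)$ and recalling $|\lambda_{\mathbf{a}}|\le 1$). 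The second summand is controlled using the tensor-norm bound $\|\partial^2\mathbf{E}/\partial\CoefMu^2\|_T=\Or_n(1/\nu)$, which follows from the second-order eigenvalue perturbation lemma cited in the proof of Lemma~\ref{lem:map_from_eigenvals_to_coefs}. This yields the composite bound $\bigl\|\partial^2 L/\partial\CoefMu^2-J_0\bigr\|=\Or_n((1+\|\mathbf{E}(\CoefMu)-\hat{\mathbf{E}}\|)/\nu)$, i.e.\ \eqref{eq:hessian_distance}, and by Lipschitz continuity of $\mathbf{E}$ (with $\|\partial\mathbf{E}/\partial\CoefMu\|=\Or_n(1)$) together with $\|\CoefMu-\CoefLambda\|\le r_\lambda$ and $\|\hat{\mathbf{E}}-\mathbf{E}^\star\|\le r_E=\Omega_n(\nu)$, the right-hand side is $\Or_n(1)$ and can be made arbitrarily small by enlarging $\nu_0$.

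Next I would read off the spectrum of $J_0$ from \eqref{eq:J0_explicit}: since $J_0$ is diagonal with entries $|\mathbf{a}|\cdot 2^n\cdot 3^{n-|\mathbf{a}|}$ and $|\mathbf{a}|\ge 1$ (the identity coefficient is excluded), we have $\sigma_{\min}(J_0)\ge 2^n$. By Weyl's inequality applied to the symmetric matrix $\partial^2 L/\partial\CoefMu^2=J_0+\bigl(\partial^2 L/\partial\CoefMu^2-J_0\bigr)$,
\begin{equation*}
\lambda_{\min}\!\left(\frac{\partial^2 L}{\partial\CoefMu^2}\right) \ge \sigma_{\min}(J_0) - \left\|\frac{\partial^2 L}{\partial\CoefMu^2}-J_0\right\| \ge 2^n - \frac{1}{2}\cdot 2^n = 2^{n-1},
\end{equation*}
provided the perturbation is at most $2^{n-1}$. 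Choosing $\nu_0=\Or_n(1)$ and the radii $r_\lambda=\Omega_n(1)$, $r_E=\Omega_n(\nu)$ large enough so that \eqref{eq:Hessian_approximated_by_J0} holds with the constant $1/2$ delivers exactly this perturbation bound. These are precisely the constants already constructed in the proof of Lemma~\ref{lem:map_from_eigenvals_to_coefs}.

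The hard part is essentially bookkeeping: tracking how the implicit $n$-dependent constants in the two perturbation bounds combine, and verifying that the single choice of $\nu_0$ suffices uniformly over $\CoefMu\in B_{r_\lambda}(\CoefLambda)$ and $\hat{\mathbf{E}}\in B_{r_E}(\mathbf{E}^\star)$. No new ideas are required beyond the Jacobian computation and the second-order perturbation estimate; the proposition is really a distillation of the Hessian control that was packaged inside the proof of Lemma~\ref{lem:map_from_eigenvals_to_coefs}, extracted as an explicit strong-convexity statement.
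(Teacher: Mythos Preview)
Your proposal is correct and follows exactly the paper's approach: the paper's proof consists of the single sentence ``This immediately follows from \eqref{eq:Hessian_approximated_by_J0} and the explicit form of $J_0$ given in \eqref{eq:J0_explicit},'' and you have simply unpacked what that sentence means, including the Weyl inequality step $\lambda_{\min}\ge \sigma_{\min}(J_0)-\tfrac12\sigma_{\min}(J_0)=2^{n-1}$. Your closing remark that the proposition is a distillation of Hessian control already established inside the proof of Lemma~\ref{lem:map_from_eigenvals_to_coefs} is precisely right.
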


\begin{proof}
    This immediately follows from \eqref{eq:Hessian_approximated_by_J0} and the explicit form of $J_0$ given in \eqref{eq:J0_explicit}.
\end{proof}

\begin{thm}
    \label{thm:recover_coefs}
    There exists $\nu_0= \Or_n(1)$ and $r=\Omega_n(1)$ such that when $\nu\geq \nu_0$, there exists a $C^\infty$ map $g: B_{r}(\mathbf{E}^\star)\to \RR^{4^n-1}$ such that
    \[
    \|g(\hat{\mathbf{E}})-\CoefLambda\|\leq C\|\hat{\mathbf{E}}-\mathbf{E}^\star\|,
    \]
    for all $\hat{\mathbf{E}}\in B_{r}(\mathbf{E}^\star)$,
    where $C$ is a constant that depends only on $n$.
    Moreover, there exists $r_\lambda=\Omega_n(1)$, such that for any $\hat{\mathbf{E}}\in B_{r}(\mathbf{E}^\star)$, $g(\hat{\mathbf{E}})$ can be computed to precision $\epsilon_1$ with $\Or_n(\log(1/\epsilon_1))$ queries to the function $\mathbf{E}(\CoefMu,\nu)$, when given an initial guess $\CoefLambda^0$ satisfying $\|\CoefLambda^0-\CoefLambda\|< r_\lambda$.
\end{thm}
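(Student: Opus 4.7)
The plan is to combine Lemma~\ref{lem:map_from_eigenvals_to_coefs} with Proposition~\ref{prop:strong_convexity} to establish both the Lipschitz bound and the computational claim via a standard Newton iteration argument. Existence and smoothness of $g$ on $B_r(\mathbf{E}^\star)$ with $r=\Omega_n(1)$ is exactly the content of Lemma~\ref{lem:map_from_eigenvals_to_coefs}, so nothing new is required for that piece.

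To obtain the Lipschitz constant, I would start from the explicit formula
\begin{equation*}
    \frac{\partial g}{\partial\hat{\mathbf{E}}}(\hat{\mathbf{E}}) = \left(\frac{\partial^2 L}{\partial\CoefMu^2}\right)^{-1}\frac{\partial\mathbf{E}}{\partial\CoefMu}
\end{equation*}
provided by Lemma~\ref{lem:map_from_eigenvals_to_coefs}. Proposition~\ref{prop:strong_convexity} gives $\lambda_{\min}(\partial^2 L/\partial\CoefMu^2)\geq 2^{n-1}$ on the relevant region, so the inverse has operator norm at most $2^{-(n-1)}$. Meanwhile, from \eqref{eq:JTJ_explicit} one reads off $\|\partial\mathbf{E}/\partial\CoefMu\|=\Or_n(1)$. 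Hence the Jacobian of $g$ has operator norm bounded by some constant $C=C(n)$ throughout $B_r(\mathbf{E}^\star)$. Noting that $g(\mathbf{E}^\star)=\CoefLambda$ since the first-order optimality equation \eqref{eq:first_order_optimality} is satisfied there, the mean value inequality on the convex ball $B_r(\mathbf{E}^\star)$ then yields $\|g(\hat{\mathbf{E}})-\CoefLambda\|\leq C\|\hat{\mathbf{E}}-\mathbf{E}^\star\|$.

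For the computational claim, the plan is to apply Newton's method to \eqref{eq:first_order_optimality}, starting from $\CoefLambda^0$. Because $L(\hat{\mathbf{E}},\cdot,\nu)$ is $C^\infty$ and locally strongly convex with convexity constant $\Omega_n(1)$ (Proposition~\ref{prop:strong_convexity}), and because its third derivatives in $\CoefMu$ are bounded (from smoothness of $\mathbf{E}(\CoefMu,\nu)$, via eigenvalue perturbation theory as used in the proof of Lemma~\ref{lem:map_from_eigenvals_to_coefs}), the standard Kantorovich-type analysis gives local quadratic convergence within a basin of radius $\Omega_n(1)$ around $g(\hat{\mathbf{E}})$. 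Combined with the Lipschitz bound from the first part, this ensures that if $\|\CoefLambda^0-\CoefLambda\|<r_\lambda$ for a sufficiently small $r_\lambda=\Omega_n(1)$ (chosen so that $\|\CoefLambda^0-g(\hat{\mathbf{E}})\|\leq \|\CoefLambda^0-\CoefLambda\|+C\|\hat{\mathbf{E}}-\mathbf{E}^\star\|$ stays inside the basin for every $\hat{\mathbf{E}}\in B_r(\mathbf{E}^\star)$, possibly after shrinking $r$), the iterates remain in the strongly convex region and converge quadratically, reaching precision $\epsilon_1$ in $\Or(\log\log(1/\epsilon_1))$ iterations. Each iteration makes $\Or_n(1)$ queries to $\mathbf{E}(\CoefMu,\nu)$ and its derivative, giving the desired $\Or_n(\log(1/\epsilon_1))$ bound with slack to spare.

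The main obstacle is bookkeeping rather than any new idea: one must quantitatively track the interplay between the strong convexity constant, the Hessian Lipschitz constant, and the Lipschitz bound on $g$, in order to pick $r_\lambda$ small enough that the Newton iterates started from $\CoefLambda^0$ (close to $\CoefLambda$) actually converge to $g(\hat{\mathbf{E}})$ (close to but not equal to $\CoefLambda$). Shrinking $r$ if necessary so that $Cr\leq r_\lambda/2$ is the cleanest way to decouple these two radii.
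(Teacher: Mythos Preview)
Your argument for existence, smoothness, and the Lipschitz bound is essentially identical to the paper's: both invoke Lemma~\ref{lem:map_from_eigenvals_to_coefs} for $g$ and its Jacobian formula, then combine Proposition~\ref{prop:strong_convexity} with $\|\partial\mathbf{E}/\partial\CoefMu\|=\Or_n(1)$ and the mean value inequality.

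The one genuine difference is in the computational claim. The paper does not use full Newton but rather the fixed-point iteration $\Phi(y)=y-J_0^{-1}\,\partial_{\CoefMu}L(\hat{\mathbf{E}},y,\nu)$ with the \emph{constant} preconditioner $J_0$ (Algorithm~\ref{alg:nt_method}). This map is already known to be a $\tfrac{1}{2}$-contraction on $B_{r_\lambda}(\CoefLambda)$ from the proof of the modified implicit function theorem (Theorem~\ref{thm:implicit_map}), so linear convergence and the $\Or(\log(1/\epsilon_1))$ bound follow immediately from Lemma~\ref{lem:convergence_of_iterative_alg} with no new estimates. Your Newton scheme is also correct and even yields $\Or(\log\log(1/\epsilon_1))$, but it buys that at the cost of importing Kantorovich-type hypotheses (a Lipschitz Hessian, hence third-derivative bounds on $\mathbf{E}$) and requiring evaluation of $\partial^2\mathbf{E}/\partial\CoefMu^2$ at each step, not just $\mathbf{E}$ and its first derivative. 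The paper's route is lighter because it simply recycles the contraction already established.
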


\begin{proof}
    The map $g$ is readily provided in Lemma~\ref{lem:map_from_eigenvals_to_coefs}. The deviation of $g(\hat{\mathbf{E}})$ from $\CoefLambda$ can be bounded using the expression for $\partial_{\hat{\mathbf{E}}}g(\hat{\mathbf{E}})$ in Lemma~\ref{lem:map_from_eigenvals_to_coefs}, the eigenvalue lower bound for $\partial^2 L/\partial\CoefMu^2$ in Proposition~\ref{prop:strong_convexity}, and the fact that $\|\partial\mathbf{E}/\partial \CoefMu\|=\Or_n(1)$. The algorithm for computing $g(\hat{\mathbf{E}})$ is an iterative algorithm derived from the existence proof of $g$. It is described in pseudocode in Algorithm~\ref{alg:nt_method} and its convergence is proved in Lemma~\ref{lem:convergence_of_iterative_alg}.
\end{proof}
 

\section{Eigenstate and eigenvalue perturbations}

\begin{lem}
    \label{lem:eigenstate_perturbation_bound}
    Let $H_0$ be any Hermitian matrix. let $\ket{\Phi_k}$ be its $k$th lowest eigenstate corresponding to an eigenvalue $E_k$, which is separated from the rest of the spectrum by a gap $\Delta$. Assume a Hermitian matrix $V$ satisfies $\|V\|\leq \Delta/3$, and we let $\ket{\Psi_k}$ be the $k$th lowest eigenstate of $H_0+V$. Then there exists $\theta\in[0,2\pi)$ such that
    \[
     \left\|\ket{\Psi_k}-e^{i\theta}\ket{\Phi_k}\right\|\leq 3\|V\|/\Delta.
    \]
\end{lem}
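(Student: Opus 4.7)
The plan is to expand the perturbed eigenstate $\ket{\Psi_k}$ in the eigenbasis of $H_0$ and convert the resulting overlap estimate into the desired norm bound after an appropriate choice of global phase. Writing $\ket{\Psi_k}=\sum_j c_j \ket{\Phi_j}$ with $c_j=\braket{\Phi_j|\Psi_k}$ and projecting the eigenvalue equation $(H_0+V)\ket{\Psi_k}=\tilde{E}_k\ket{\Psi_k}$ onto $\bra{\Phi_j}$ yields the scalar identity
\begin{equation*}
(E_j-\tilde{E}_k)\,c_j \;=\; -\braket{\Phi_j|V|\Psi_k},
\end{equation*}
where $\tilde{E}_k$ is the $k$-th smallest eigenvalue of $H_0+V$. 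Weyl's inequality gives $|\tilde{E}_k-E_k|\leq \|V\|\leq \Delta/3$, which both guarantees that $\tilde{E}_k$ is non-degenerate (so $\ket{\Psi_k}$ is well-defined up to a global phase) and ensures that for $j\neq k$ the denominator obeys $|E_j-\tilde{E}_k|\geq \Delta-\|V\|\geq 2\Delta/3$.

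Next, I would square the identity, sum over $j\neq k$, and apply Parseval's identity together with $\|V\ket{\Psi_k}\|\leq \|V\|$ to obtain
\begin{equation*}
\sum_{j\neq k}|c_j|^2 \;\leq\; \frac{9}{4\Delta^2}\sum_{j\neq k}|\braket{\Phi_j|V|\Psi_k}|^2 \;\leq\; \frac{9\|V\|^2}{4\Delta^2},
\end{equation*}
so that $|c_k|^2\geq 1-9\|V\|^2/(4\Delta^2)$. I then fix the global phase by choosing $\theta$ so that $c_k=e^{i\theta}|c_k|$. A short computation shows $\|\ket{\Psi_k}-e^{i\theta}\ket{\Phi_k}\|^2=2(1-|c_k|)$, which using $1-|c_k|\leq 1-|c_k|^2$ is bounded by $9\|V\|^2/(2\Delta^2)\leq 9\|V\|^2/\Delta^2$, yielding the claimed bound $3\|V\|/\Delta$ after taking square roots.

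I do not expect any real obstacle here: the argument is a textbook first-order spectral perturbation estimate of Davis--Kahan type specialized to a non-degenerate eigenvalue, and it exploits only the hypothesis $\|V\|\leq\Delta/3$ through Weyl's inequality and the scalar resolvent bound above. The only bits of care needed are the self-consistent use of Weyl's inequality (to simultaneously locate $\tilde{E}_k$ and make the denominator strictly positive) and the phase alignment in the last step, both of which are routine.
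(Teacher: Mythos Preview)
Your argument is correct and actually yields the slightly sharper intermediate bound $3\|V\|/(\sqrt{2}\Delta)$ before you relax it. The paper, however, proceeds differently: it interpolates via $H(s)=H_0+sV$, uses Weyl's inequality to maintain a gap of at least $\Delta-2\|V\|\geq\Delta/3$ uniformly in $s$, invokes the first-order eigenstate flow $\frac{\dd}{\dd s}\ket{\Psi_k(s)}=-(H(s)-E_k(s))^{+}V\ket{\Psi_k(s)}$, bounds the pseudoinverse by $3/\Delta$, and integrates from $s=0$ to $s=1$. Your direct Davis--Kahan style expansion in the unperturbed eigenbasis is more elementary---no continuous family, no pseudoinverse, just the projected eigenvalue equation and Parseval---and it gives a tighter constant. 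The paper's adiabatic route has the mild advantage that the phase is fixed naturally along the path and the method extends more readily to tracking eigenstates along a whole family of Hamiltonians (which is closer in spirit to how the lemma is later used), but for the statement as written your approach is at least as clean.
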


\begin{proof}
    This lemma is similar to Lemma~3.1 in \cite{BravyiDiVincenzoLoss2011schrieffer}, but we provide a proof here for completeness.
    We let $H(s) = H_0 + sV$. Then $H(0)=H_0$ and $H(1)=H_0+V$. By the stability fo eigenvalues which follows from Weyl's inequality \cite{Tao2010254a}, the $k$th eigenvalue of $H(s)$ is separated from the rest of the spectrum by a gap of at least $\Delta-2\|V\|\geq \Delta/3$ for all $s\in[0,1]$. We can then uniquely define the $k$th eigenstate to be $\ket{\Psi_k(s)}$ Then by the eigenstate perturbation theory \cite{Bha97:Matrix-Analysis}, the phases of these states can be chosen in such a way that
    \[
    \frac{\dd}{\dd s}\ket{\Psi_k(s)} = -(H(s)-E_k(s))^{+}V\ket{\Psi_k(s)},
    \]
    where $(H(s)-E_k(s))^{+}$ denotes the pseudoinverse of $H(s)-E_k(s)$, which satisfies $\|(H(s)-E_k(s))^{+}\|\leq 3/\Delta$ due to the presence of the energy gap. Therefore we have
    \[
    \left\|\frac{\dd}{\dd s}\ket{\Psi_k(s)}\right\|\leq 3\|V\|/\Delta.
    \]
    Since $\ket{\Phi_k}=e^{i\theta_0}\ket{\Psi_k(0)}$ and $\ket{\Psi_k}=e^{i\theta_1}\ket{\Psi_k(1)}$, for some $\theta_0,\theta_1\in \RR$, from the above we can see that
    \[
    \left\|\ket{\Psi_k}-e^{i\theta}\ket{\Phi_k}\right\|=\left\|\int_0^1 \frac{\dd}{\dd s}\ket{\Psi_k(s)} \dd s\right\|\leq 3\|V\|/\Delta,
    \]
    with an appropriately chosen $\theta\in[0,2\pi)$.
\end{proof}

Direct calculation based on eigenvalue and eigenstate perturbation theory also yields the following lemma:
\begin{lem}
    \label{lem:second_order_perturbation}
    Let $H(\lambda_1,\lambda_2)=H_0 + \lambda_1 V_1 + \lambda_2 V_2$. Let $E_k(\lambda_1,\lambda_2)$ be the $k$th smallest eigenvalue of $H(\lambda_1,\lambda_2)$, and is separated from the rest of the spectrum by a positive gap when $\lambda_1=\lambda_2=0$. Let $\ket{\Psi_k(\lambda_1,\lambda_2)}$ be the corresponding eigenstate, which is uniquely defined up to a global phase in a neighborhood of  $\lambda_1=\lambda_2=0$. Then
    \[
    \frac{\partial^2 E_k}{\partial \lambda_i\partial\lambda_j} = 2\Re \braket{\Psi_k|V_j(E_k-H)^{+}V_i|\Psi_k}.
    \]
\end{lem}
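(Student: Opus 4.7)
The plan is to derive the formula by two successive applications of first-order perturbation theory, using the same pseudoinverse formula for the eigenstate derivative that appeared in the proof of Lemma~\ref{lem:eigenstate_perturbation_bound}.

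First I would establish the Hellmann-Feynman identity $\partial E_k/\partial\lambda_i = \braket{\Psi_k|V_i|\Psi_k}$. This follows from differentiating $H\ket{\Psi_k}=E_k\ket{\Psi_k}$, taking the inner product with $\bra{\Psi_k}$, and using Hermiticity of $H$ together with the normalization $\braket{\Psi_k|\Psi_k}=1$, which makes the contributions from $\partial_{\lambda_i}\ket{\Psi_k}$ cancel in the standard way (no gauge choice needed at this step).

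Next I would differentiate once more with respect to $\lambda_j$. This gives
\begin{equation*}
\frac{\partial^2 E_k}{\partial \lambda_i \partial \lambda_j} = \braket{\partial_{\lambda_j}\Psi_k | V_i | \Psi_k} + \braket{\Psi_k | V_i | \partial_{\lambda_j}\Psi_k},
\end{equation*}
which equals $2\Re\braket{\Psi_k|V_i|\partial_{\lambda_j}\Psi_k}$. Here I would invoke the same parallel-transport gauge used implicitly in the proof of Lemma~\ref{lem:eigenstate_perturbation_bound}, under which $\partial_{\lambda_j}\ket{\Psi_k} = -(H-E_k)^{+} V_j \ket{\Psi_k}$. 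This identity comes from differentiating the eigenvalue equation, projecting onto the orthogonal complement of $\ket{\Psi_k}$ (where $H-E_k$ is invertible thanks to the gap), and noting that the component along $\ket{\Psi_k}$ can be set to zero by the phase choice. Substituting yields
\begin{equation*}
\frac{\partial^2 E_k}{\partial \lambda_i \partial \lambda_j} = -2\Re\braket{\Psi_k|V_i(H-E_k)^{+} V_j|\Psi_k} = 2\Re\braket{\Psi_k|V_i(E_k-H)^{+} V_j|\Psi_k}.
\end{equation*}
The claimed form with $V_j$ on the left and $V_i$ on the right then follows from the symmetry of mixed partials (or equivalently from taking the complex conjugate under the $\Re$), using that $(E_k-H)^{+}$ is Hermitian.

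I do not expect any serious obstacle: the only subtlety is the gauge choice for the phase of $\ket{\Psi_k(\lambda_1,\lambda_2)}$, which must be consistent with the formula used for $\partial_{\lambda_j}\ket{\Psi_k}$. Since the final expression is taken to be the real part of a gauge-invariant quantity (any phase $\ket{\Psi_k}\mapsto e^{i\theta(\lambda_1,\lambda_2)}\ket{\Psi_k}$ only adds a purely imaginary contribution that vanishes under $\Re$), the answer is independent of the gauge, and the derivation goes through cleanly in the neighborhood of $\lambda_1=\lambda_2=0$ guaranteed by the gap assumption and the continuity argument already used in the proof of Lemma~\ref{lem:eigenstate_perturbation_bound}.
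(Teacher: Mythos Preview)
Your proposal is correct and is precisely the ``direct calculation based on eigenvalue and eigenstate perturbation theory'' that the paper invokes without writing out; the paper provides no detailed proof beyond that remark, so your derivation via Hellmann--Feynman followed by the pseudoinverse formula for $\partial_{\lambda_j}\ket{\Psi_k}$ is exactly the intended argument.
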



\section{The implicit map theorem}

\begin{thm}[Modified implicit map theorem]
\label{thm:implicit_map}
Let $U$, $V$ be open sets in Banach spaces $E$, $F$ respectively, and let
$f: U \times V \rightarrow G$
be a $C^p$ map. Let $(a, b) \in U \times V$, and we assume.
$f(a, b) = 0$. 
We also assume
\begin{align}
\label{eq:D2f(x,y)_requirement}
    \|D_2f(x,y)-J_0 \|_2\leq \sigma_{min}(J_0)/2, \, \forall x\in B_{r_x}(a), \, y\in B_{r_y}(b),
\end{align} 
for some fixed $r_x, r_y$, and toplinear \footnote{Here ``toplinear'' means the linear operator is bounded and has bounded inverse.} $J_0$.
Then for $r= \min\{r_y/(2\|J_0^{-1}\|M_1),r_x\}$, where 
\[
M_1=\sup_{(x,y)\in B_{r_x}(a)\times B_{r_y}(b)}\|D_1 f(x,y)\|,
\]
there exists a continuous map $g: B_{r}(a) \rightarrow V$ with $g(a) = b$, such that
$f(x, g(x)) = 0$
for all $x \in B_r(a)$. $g$ is uniquely determined in $B_r(a)$, and is also of class $C^p$.
\end{thm}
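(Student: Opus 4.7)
The plan is to adapt the standard Banach fixed-point proof of the implicit function theorem, with the key twist that we substitute the constant inverse $J_0^{-1}$ for the usual $D_2 f(a,b)^{-1}$. This lets the quantitative hypothesis on $D_2 f - J_0$ be exploited directly, and makes the radius $r$ come out explicitly in terms of $\|J_0^{-1}\|$ and $M_1$ rather than an unspecified implicit constant. Concretely, for each $x \in B_{r_x}(a)$ I would consider the map $T_x \colon \bar{B}_{r_y}(b) \to F$ defined by $T_x(y) = y - J_0^{-1} f(x, y)$. Fixed points of $T_x$ coincide with zeros of $f(x,\cdot)$, so producing a unique fixed point in $\bar{B}_{r_y}(b)$ for every $x$ in a small enough ball around $a$ yields the desired $g$.

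The first step is the contraction estimate in $y$. Since $D_2 T_x(y) = I - J_0^{-1} D_2 f(x,y) = J_0^{-1}(J_0 - D_2 f(x,y))$, the hypothesis \eqref{eq:D2f(x,y)_requirement} combined with $\|J_0^{-1}\| = 1/\sigma_{\min}(J_0)$ gives $\|D_2 T_x(y)\| \leq 1/2$ on $B_{r_x}(a) \times B_{r_y}(b)$, so by the mean value inequality $T_x$ is $\tfrac{1}{2}$-Lipschitz on $\bar{B}_{r_y}(b)$. The second step is the self-map property. Writing $T_x(y) - b = (T_x(y) - T_x(b)) + (T_x(b) - b)$, the contraction estimate bounds the first term by $r_y/2$, and since $f(a,b)=0$ we have $T_x(b) - b = -J_0^{-1}(f(x,b) - f(a,b))$, which the mean value inequality in $x$ bounds by $\|J_0^{-1}\| M_1 \|x-a\|$. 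Imposing $r \leq r_y/(2\|J_0^{-1}\| M_1)$ makes this at most $r_y/2$, while the constraint $r \leq r_x$ keeps us inside the region where $\|D_2 f - J_0\|$ is controlled. Banach's fixed-point theorem then delivers, for each $x \in B_r(a)$, a unique $g(x) \in \bar{B}_{r_y}(b)$ with $f(x, g(x)) = 0$; uniqueness forces $g(a) = b$.

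The third step is to upgrade $g$ to a $C^p$ function by the standard bootstrap. Continuity of $g$ follows from the parameter-dependent contraction: for $x_1, x_2 \in B_r(a)$ one combines $g(x_i) = T_{x_i}(g(x_i))$ with the uniform $\tfrac{1}{2}$-contraction in $y$ and the Lipschitz dependence of $T_x(y)$ on $x$ (controlled again by $\|J_0^{-1}\| M_1$). Because $\|D_2 f(x, g(x)) - J_0\| \leq \sigma_{\min}(J_0)/2$, a Neumann-series argument shows $D_2 f(x, g(x))$ is toplinear with $\|D_2 f(x, g(x))^{-1}\| \leq 2\|J_0^{-1}\|$; the classical implicit-differentiation identity $Dg(x) = -(D_2 f(x, g(x)))^{-1} D_1 f(x, g(x))$ then yields $g \in C^1$, and iterating using $f \in C^p$ yields $g \in C^p$.

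I expect no deep obstacle in the argument; the only real work is the careful bookkeeping needed to confirm that the two radius constraints above, $r \leq r_x$ and $r \leq r_y/(2\|J_0^{-1}\| M_1)$, together suffice to satisfy all hypotheses of Banach's fixed-point theorem uniformly in $x$, and that these match the stated formula for $r$ without any extra loss. One subtle point worth verifying in the write-up is that uniqueness is properly scoped: $g(x)$ is unique among points of $\bar{B}_{r_y}(b)$, not globally in $V$, which is the natural interpretation of ``uniquely determined in $B_r(a)$'' in the theorem statement.
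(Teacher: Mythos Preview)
Your proposal is correct and follows essentially the same route as the paper: the paper defines the identical map $\Phi_x(y)=y-J_0^{-1}f(x,y)$, establishes the $\tfrac12$-contraction from \eqref{eq:D2f(x,y)_requirement}, proves the self-map property via the same splitting $\|\Phi_x(y)-b\|\le\|\Phi_x(y)-\Phi_x(b)\|+\|\Phi_x(b)-b\|$, invokes the Banach fixed-point theorem, and then upgrades to $C^p$ via the implicit-differentiation identity $Dg(x)=-(D_2 f(x,g(x)))^{-1}D_1 f(x,g(x))$. The only cosmetic difference is that the paper carries along an auxiliary radius $r'_y=2\|J_0^{-1}\|M_1 r$ for the image ball, whereas you work directly with $\bar B_{r_y}(b)$; both lead to the same formula for $r$.
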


\begin{proof}

Fixing $x \in B_{r_x}(a)$, we want to find $y$ near $b$ such that $f(x, y) = 0$. We define the map $\Phi_x: B_{r_y}(b)\to F$ to be
\begin{align}
\Phi_x(y) := y - J_0^{-1} f(x, y).
\end{align}
Then the fixed points of $\Phi_x$ correspond to solutions of $f(x, y) = 0$. 

We first show that $\Phi_x(y)$ is contractive in $B_{r_y}(b)$. For any $y_1, y_2 \in B_{r_y}(b)$:
\begin{align}
\Phi_x(y_1) - \Phi_x(y_2) &= y_1 - y_2 - J_0^{-1}\left(f(x, y_1) - f(x, y_2)\right) \notag\\
&= y_1-y_2 -J_0^{-1}\int_0^1D_2f(x,y_2+t(y_1-y_2))(y_1-y_2)\,\dd t, \notag\\
&=\left( I - J_0^{-1} \int_0^1 D_2 f(x, y_2 + t(y_1 - y_2)) \, \dd t  \right)(y_1 - y_2)
\end{align}
Therefore
\begin{align}
\label{eq:contraction}
\|\Phi_x(y_1) - \Phi_x(y_2)\|
&= \left\| \left(I - J_0^{-1} \int_0^1 D_2 f(x, y_2 + t(y_1 - y_2)), \dd t \right)(y_1 - y_2) \right\| \notag\\
&\leq \left\| I - J_0^{-1} \int_0^1 D_2 f(x, y_2 + t(y_1 - y_2))\, \dd t \right\| \cdot \|y_1 - y_2\|\notag\\
&= \|J_0^{-1}\|\cdot\left\|J_0-\int_0^1D_2f(x,y_2 + t(y_1 - y_2))\,\dd t\right\|\cdot\|y_1-y_2\| \notag\\
&\leq \|J_0^{-1}\|\cdot \frac{\sigma_{\min}(J_0)}{2}\cdot\|y_1-y_2\| \notag\\
&= \frac{1}{2}\|y_1-y_2\|.
\end{align}

Next we show that $\Phi_x(y)$ maps $B_{r_y'}(b)$ to itself when $x\in B_{r'_x}(a)$.
By the triangle inequality
\begin{align}\label{eq: Phi_x_minus_b}
    \|\Phi_x(y)-b\|\leq \|\Phi_x(y)-\Phi_x(b)\|+\|\Phi_x(b)-b\|.
\end{align}
Using \eqref{eq:contraction} we can bound the first term on the RHS of \Cref{eq: Phi_x_minus_b}:
\begin{align}
    \|\Phi_x(y)-\Phi_x(b)\| &\leq  \frac{1}{2}\|y-b\|, \label{eq: phixy_phixb}
\end{align}
%
The second term on the RHS of \Cref{eq: Phi_x_minus_b} yields
\begin{align}
    \|\Phi_x(b)-b\| &= \|J_0^{-1}f(x,b)\| \tag{by definition of $\Phi_x$}\\
    &\leq \|x-a\|\int_0^1 \|J_0^{-1}D_1f(tx+(1-t)a,b)\|\,\dd t \notag\\
    &\leq r'_x \int_0^1 \|J_0^{-1}D_1f(tx+(1-t)a,b)\|\,\dd t \tag{since $x\in B_{r'_x}(a)$} \\
    &\leq r'_x \|J^{-1}_0\|M_1, \label{eq: rxJM1}
\end{align}
where the last inequality used the definition of $M_1$.
Setting $r'_x = \min\{r_y/(2\|J_0^{-1}\|M_1),r_x\}$ and $r'_y=2\|J_0^{-1}\|M_1 r_x'$ as done in the statement of the theorem ensures that $r'_y\leq r_y$ and $\|\Phi_x(b)-b\|\leq r_y/2$. Inserting \Cref{eq: phixy_phixb,eq: rxJM1} into \Cref{eq: Phi_x_minus_b} yields
\begin{align} 
   \| \Phi_x(y)-b\|&\leq \frac{1}{2}\|y-b\|+\|J_0^{-1}\|M_1 r_x' \notag\\
   &\leq \frac{1}{2}r'_y + \frac{1}{2}r'_y \tag{since $y\in B_{r'_y}(b)$ and $r'_y/2=\|J_0^{-1}\|M_1 r_x'$}\\
   &=r'_y.
\end{align}
Therefore $\Phi_x(y)$ maps $B_{r_y'}(b)$ to itself.

Using the Shrinking Lemma~\cite{lang2012real}, we then know that for each $x \in B_{r'_x}(a)$, there exists a unique fixed point $g(x) \in B_{r'_y}(b)$ such that $\Phi_x(g(x)) = g(x)$ which implies that $f(x, g(x)) = 0$.
This defines a function $g: B_{r'_x}(a) \to B_{r'_y}(b)\subseteq V$ with $g(a) = b$ and establishes its uniqueness.




We last show that $g$ is of class $C^p$. 
First we show that $g$ is differentiable. 
Let $x \in B_{r_x}(a)$, and let $x_1 \in B_{r_x}(a)$ be fixed. Since $f \in C^1$, we can write the first-order Taylor expansion of $f$ around the point $(x_1, g(x_1))$:
\begin{align}
f(x, g(x)) 
&= f(x_1, g(x_1)) 
+ D_1 f(x_1, g(x_1))(x - x_1) \nonumber \\
&\quad + D_2 f(x_1, g(x_1))(g(x) - g(x_1)) + r(x),
\end{align}
where the remainder term satisfies $\|r(x)\| \leq \varepsilon_1 \|x - x_1\| + \varepsilon_2 \|g(x) - g(x_1)\|$, for some $\varepsilon_1, \varepsilon_2 \to 0$ as $x \to x_1$. Since $f(x, g(x)) = 0 = f(x_1, g(x_1))$, the above expansion tells us
\begin{align*}
0 = D_1 f(x_1, g(x_1))(x - x_1) 
+ D_2 f(x_1, g(x_1))(g(x) - g(x_1)) + r(x).
\end{align*}
Now solve for $g(x) - g(x_1)$:
\begin{align*}
g(x) - g(x_1) 
&= -[D_2 f(x_1, g(x_1))]^{-1} D_1 f(x_1, g(x_1))(x - x_1) + R(x) \notag\\
&:= A(x-x_1)+R(x),
\end{align*}
where
$R(x) := -[D_2 f(x_1, g(x_1))]^{-1} r(x)$, $A := -[D_2 f(x_1, g(x_1))]^{-1} D_1 f(x_1, g(x_1))$.
We estimate the remainder using the bound on $r(x)$:
\begin{align*}
\|R(x)\| \leq C_1 \varepsilon_1 \|x - x_1\| + C_1 \varepsilon_2 \|g(x) - g(x_1)\|,
\end{align*}
where $C_1 = \|[D_2 f(x_1, g(x_1))]^{-1}\|$.
Now observe that
\begin{align*}
\|g(x) - g(x_1)\| 
&\leq \|A(x - x_1)\| + \|R(x)\| \nonumber \\
&\leq \|A\| \cdot \|x - x_1\| + C_1 \varepsilon_1 \|x - x_1\| + C_1 \varepsilon_2 \|g(x) - g(x_1)\|.
\end{align*}
Rewriting we get:
\begin{align}
(1 - C_1 \varepsilon_2)\|g(x) - g(x_1)\| 
\leq (\|A\| + C_1 \varepsilon_1) \cdot \|x - x_1\|.
\end{align}
For sufficiently small $x$ near $x_1$, we can choose $\varepsilon_2$ small enough that $1 - C_1 \varepsilon_2 > \frac{1}{2}$, so we obtain:
\begin{align}
\|g(x) - g(x_1)\| = O(\|x - x_1\|).
\end{align}
Substituting this back into the estimate for $\|R(x)\|$:
\begin{align}
\|R(x)\| \leq C_1 \varepsilon_1 \|x - x_1\| + C_1 \varepsilon_2 O(\|x - x_1\|) = o(\|x - x_1\|),
\end{align}
since both $\varepsilon_1$ and $\varepsilon_2$ tend to $0$ as $x \to x_1$.
Therefore, we conclude:
\begin{align}
g(x) - g(x_1) = A(x - x_1) + o(\|x - x_1\|),
\end{align}
so $g$ is differentiable at $x_1$, with $Dg(x_1) = A$.



Then we can show that $g$ is of class $C^p$. By definition of $\Phi_x(y)$ and $g(x)$ being its unique fixed point, we have
\begin{align}
f(x, g(x)) = 0 \quad \forall x \in B_{r'_x}(a).
\end{align}
Since $f$ is of class $C^p$, we can differentiate this identity:
\begin{align}
D_1 f(x, g(x)) + D_2 f(x, g(x)) \cdot Dg(x) = 0. \label{eq:implicit_diff}
\end{align}
Since $D_2 f(x, g(x))$ is invertible for all $x \in B_{r_x}(a)$ by assumption, we can solve for $Dg(x)$:
\begin{align}
Dg(x) = -\left(D_2 f(x, g(x))\right)^{-1} \cdot D_1 f(x, g(x)). \label{eq:dg_formula}
\end{align}
This shows that $Dg(x)$ is a composition of $C^{p-1}$ functions, so $Dg(x)$ is of class $C^{p-1}$ which implies that $g$ is of class $C^p$.
Setting $r=r'_x$ completes the proof.
\end{proof}

\begin{algorithm}
\caption{Iterative method for Solving $f(\hat{x}, y) = 0$}
\label{alg:nt_method}

\KwData{
    Smooth map $f : U \times V \to G$,  input $\hat{x} \in U$, invertible linear map $J_0 \approx D_2 f(\hat{x}, y)$, initial guess $y_0 \in V$, tolerance $\varepsilon > 0$.
}
\KwResult{
    Approximate solution $\hat{y}$ such that $f(\hat{x}, \hat{y}) \approx 0$.
}
Define the fixed-point map as in \Cref{thm:implicit_map}: $\Phi(y) := y - J_0^{-1} f(\hat{x}, y)$.

Initialize $y \gets y_0$.

\Repeat{$\delta < \varepsilon$}{
    $y_{\text{new}} \gets \Phi(y) = y - J_0^{-1} f(\hat{x}, y)$.\\
    $\delta \gets \|y_{\text{new}} - y\|$.\\
    $y \gets y_{\text{new}}$.
}

\Return $\hat{y} \gets y$.

\end{algorithm}

The existence proof in \Cref{thm:implicit_map} can be turned into an algorithm described in Algorithm~\ref{alg:nt_method}.
Next, we analyze its convergence:

\begin{lem}
\label{lem:convergence_of_iterative_alg}
    Using the definitions in \Cref{thm:implicit_map} and let the fixed-point iteration be the map used in \Cref{alg:nt_method}:
\begin{align}
\Phi(y) := y - J_0^{-1} f(\hat{x}, y), \quad y_{k+1} := \Phi(y_k),
\end{align}
starting from an initial point $y_0 \in B_{r_y}(b)$. 
Then $\Phi$ is a contraction on $B_{r_y}(b)$ with contraction constant $1/2$. Consequently, the iteration converges to the unique fixed point $\hat{y}$ satisfying $f(\hat{x}, \hat{y}) = 0$.
In particular, to ensure $\|y_k - \hat{y}\| \leq \varepsilon$, it suffices to take iteration
\begin{align}
k \geq  \log_{2} \left( \frac{\|y_0 - \hat{y}\|}{\varepsilon} \right).
\end{align}
\end{lem}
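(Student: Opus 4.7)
The plan is to recycle the contraction estimate from the proof of Theorem~\ref{thm:implicit_map} and then apply a standard Banach fixed-point argument, so the main work is simply organizing what is already essentially there.

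First I would verify that $\Phi$ is a contraction on $B_{r_y}(b)$ with constant $1/2$. Exactly as in the derivation of \eqref{eq:contraction}, for any $y_1, y_2 \in B_{r_y}(b)$ I write
\begin{equation*}
\Phi(y_1) - \Phi(y_2) = \left(I - J_0^{-1}\int_0^1 D_2 f(\hat x, y_2 + t(y_1-y_2))\,\dd t\right)(y_1-y_2),
\end{equation*}
and then bound the operator norm of the factor in parentheses by $\|J_0^{-1}\|\cdot\|J_0 - \int_0^1 D_2 f(\hat x, \cdot)\,\dd t\|$. The hypothesis \eqref{eq:D2f(x,y)_requirement} of Theorem~\ref{thm:implicit_map} bounds the integrand uniformly by $\sigma_{\min}(J_0)/2$, so this product is at most $1/2$, giving $\|\Phi(y_1)-\Phi(y_2)\| \leq \tfrac{1}{2}\|y_1-y_2\|$.

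Next I would show that the iterates stay in $B_{r_y}(b)$ so that the contraction estimate actually applies at each step. The proof of Theorem~\ref{thm:implicit_map} already establishes that $\Phi$ maps the (slightly smaller) ball $B_{r'_y}(b)$ into itself when $\hat{x} \in B_{r'_x}(a)$; I would reuse that invariance to conclude inductively that $y_k \in B_{r_y}(b)$ for all $k\ge 0$, provided $y_0$ lies in the appropriate sub-ball. Combined with Lipschitz constant $1/2$, the Banach fixed-point theorem then gives a unique fixed point $\hat{y}$ in this ball, and by construction $\Phi(\hat{y})=\hat{y}$ is equivalent to $f(\hat{x},\hat{y})=0$. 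This $\hat{y}$ must coincide with $g(\hat{x})$ from Theorem~\ref{thm:implicit_map} by uniqueness.

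Finally, for the iteration count, the contraction property together with $\Phi(\hat{y})=\hat{y}$ gives
\begin{equation*}
\|y_{k+1} - \hat{y}\| = \|\Phi(y_k) - \Phi(\hat{y})\| \leq \tfrac{1}{2}\|y_k - \hat{y}\|,
\end{equation*}
which iterates to $\|y_k - \hat{y}\| \leq 2^{-k}\|y_0 - \hat{y}\|$. Requiring the right-hand side to be at most $\varepsilon$ and solving for $k$ yields the stated bound $k \geq \log_2(\|y_0-\hat{y}\|/\varepsilon)$. There is no real obstacle here; the only subtlety is to state cleanly that the hypotheses of Theorem~\ref{thm:implicit_map} already guarantee both the contraction constant and the invariance of the ball, so that this lemma really is a corollary of work done earlier rather than requiring any new estimate.
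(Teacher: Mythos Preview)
Your proposal is correct and follows essentially the same route as the paper: the paper's proof simply cites Theorem~\ref{thm:implicit_map} for the contraction constant $1/2$, invokes the Shrinking Lemma (Banach fixed-point theorem) to get convergence and the bound $\|y_k-\hat y\|\le (1/2)^k\|y_0-\hat y\|$, and then solves for $k$. You spell out each of these steps more explicitly (including the integral representation behind \eqref{eq:contraction} and the ball-invariance issue), but the underlying argument is identical.
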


\begin{proof}
By \Cref{thm:implicit_map}, $\Phi$ is contractive on $B_{r_y}(b)$ with contraction constant $q=1/2<1$.
By the Shrinking Lemma, the iteration $y_{k+1} = \Phi(y_k)$ converges to the unique fixed point $\hat{y}$ such that $\Phi(\hat{y}) = \hat{y}$ with $f(\hat{x}, \hat{y}) = 0$, and $\|y_k - \hat{y}\| \leq q^k \|y_0 - \hat{y}\|$.
To ensure $\|y_k - \hat{y}\| \leq \varepsilon$, solve $q^k \|y_0 - \hat{y}\| \leq \varepsilon$ yields $k \geq \log_{1/q} \left( \frac{\|y_0 - \hat{y}\|}{\varepsilon} \right)=\log_{2} \left( \frac{\|y_0 - \hat{y}\|}{\varepsilon} \right)$.
\end{proof}

\section{Numerical experiments}
\label{sec:numerical}





\noindent
\textbf{The Hamiltonians.}
In Figure~2 in the main text, the Hamiltonians are chosen as follows.
For the single-qubit Hamiltonian $H_{\text{true}}=\lambda_1\sigma^1 +\lambda_2\sigma^2 +\lambda_3\sigma^3$, we choose $\lambda_1=0.1$, $\lambda_2=0.5$, and $\lambda_3=0.3$. 
For the two-qubit case, we parametrize the Hamiltonian in the Pauli basis as
\begin{equation}
H(\boldsymbol{\lambda})
=
\sum_{(a_1,a_2)\in\{0,1,2,3\}^2\setminus (0,0)}
\lambda_{(a_1,a_2)} \,
\sigma^{a_1}\otimes \sigma^{a_2},
\end{equation}
and take $\boldsymbol{\lambda} = (0.1,0.2,0.3,0.5,0.6,0.3,0.2,0.1,0.1,\,
0.2,0.1,0.1,0.3,0.22,0.15)$.

\vspace{1em}
\noindent
\textbf{Numerical optimization.}
The initial guesses used in the optimization procedure in Figure~2 are chosen as follows. 
In the one-qubit case we take the initial guess $\boldsymbol{\lambda}_{\text{guess}}=(0.09, 0.51, 0.29)$, and in the two-qubit case we have 
$\boldsymbol{\lambda}_{\text{guess}}=(0.11,0.21,0.32, 0.51,0.63,0.31,0.22,0.11,0.11, 0.22,0.11,0.11, 0.33,0.22,0.15)$.
It is worth noting that in the experiments, even when the initial guess is much farther from the true coefficients, the algorithm can still converge and recover the correct coefficients. As shown in \Cref{fig:initial_guess}, the final error achieved by our protocol does not significantly change even if the initial guess is selected to be very far from the actual solution. 

Numerical optimization is performed using the least-squares solver in \texttt{Scipy}. We first obtain estimates of the spectral gaps collected into a vector $\mathbf{E}^\star$ through phase estimation experiments described in Section~\ref{sec:the_phase_est_experiment}, and then fit these estimates by optimizing the Hamiltonian parameters.

\begin{figure}
\centering
\includegraphics[width=0.6\textwidth]{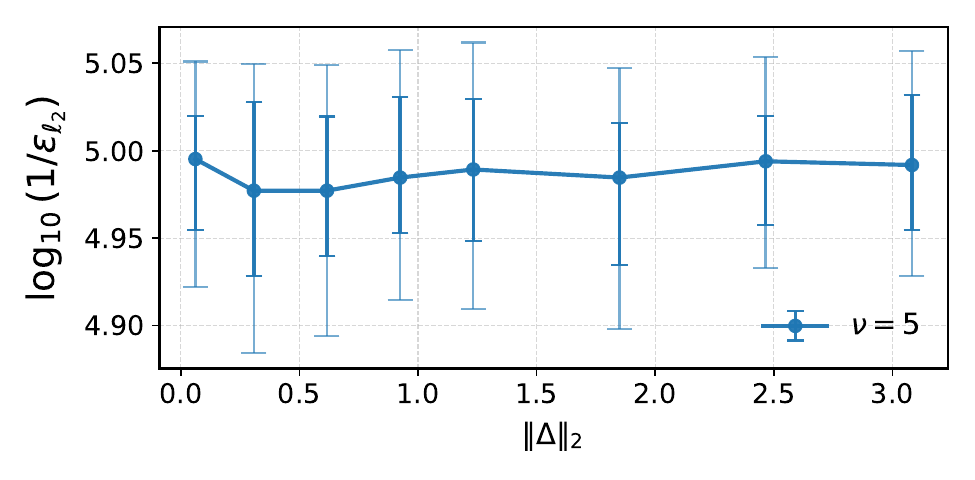}
\caption{Two-qubit Hamiltonian learning with different initial guesses. For fixed $\nu=5$ and total evolution time $1.4\times 10^8$ (corresponding to a target precision of $10^{-4}$ for energy gaps), we vary the initial guess and estimate the final accuracy.
The $x$-axis is $\|\Delta\|$, which is the $\ell^2$ distance between the initial guess $\CoefLambda_{\text{guess}}$ and the true coefficient vector $\CoefLambda$ (here $\Delta=\CoefLambda_{\text{guess}}-\CoefLambda$). 
}
\label{fig:initial_guess}
\end{figure}

\vspace{1em}
\noindent \textbf{SPAM noise.} The SPAM noise is modeled through a bit-flip channel applied before measurement. Concretely, the measurement result flips from $0$ to $1$ or $1$ to $0$ with probability $\eta$ independently for each qubit. In Figure~2 in the main text we set $\eta=0.05$ for the single-qubit case and $\eta=0.03$ for the two-qubit case. In Figure~\ref{fig:increasing_error} we gradually increase $\eta$ to $0.25$ for the single-qubit case and compute the final error. We observe that the Heisenberg-limited scaling is preserved for $\eta$ up to about $0.2$.

\begin{figure}
\centering
\includegraphics[width=0.4\textwidth]{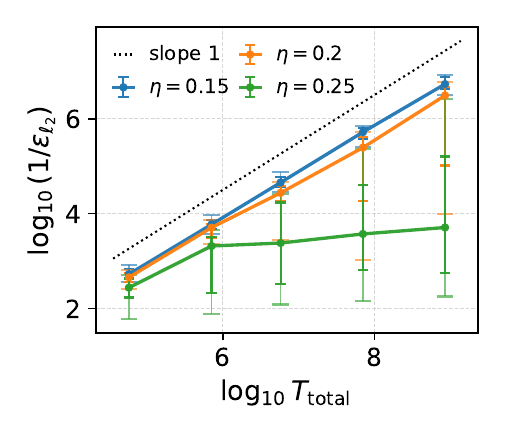}
\caption{One-qubit experiment with increasing SPAM error rate $\eta$ for fixed $\nu=3$. 
}
\label{fig:increasing_error}
\end{figure}

\section{The total evolution time lower bound}

An algorithm that can learn a Hamiltonian as described in the original Theorem~1 can be used to solve a simpler learning task of discrimination between the following single-qubit Hamiltonians. Assume we are given either $Z$ or $Z+\epsilon X$ (note that in Theorem~2 in the main text we use a normalized version of $Z+\epsilon X$, but here we use the unnormalized version first and then extend the result to the normalized case). If there is a learning
algorithm with a total evolution time at most $T$ that succeeds in the learning task with probability at least $q$,
then we can use the learning algorithm to successfully distinguish between $Z$ and the normalized version of $Z+\epsilon X$ with  probability at least $q$. We derive a lower-bound on the total Hamiltonian evolution time needed for this simple discrimination task, which we find to be $\in \Omega(\frac{1}{4\nu\epsilon+3\epsilon^2})$ in Theorem \ref{thm:LB} in the following text. This implies a lower-bound on $T$ for the original learning task. \\

The proof is split into 4 subsections. In the first subsection, we find an upper-bound for the diamond distance between the action of the unitaries generated by the Hamiltonians $Z$ and $Z+\epsilon X$. Next we use this to derive an upper bound for the total variation (TV) distance between the distribution over measurement outcomes under a single experiment. Then we upper bound the TV distance between distributions of measurement outcomes under many experiments which can be adaptively chosen based on previous outcomes. We leverage a learning tree representation to capture the adaptivity introduced. Finally, we utilize Le Cam's two-point method to turn this upper bound into a lower bound for the total evolution time, as shown in in Theorem \ref{thm:LB}. 

\subsection{Diamond distance}

\begin{lem} 
We have the following bound for the distance in terms of the spectral norm  between
$\exp\big(-it(Z+\nu P + \epsilon X)\big)$ and $\exp\big(-it(Z+\nu P)\big)$, where $\epsilon \ge 0$, $P$ is a traceless Hermitian operator with eigenvalues $\pm 1$ denoting the direction of the external magnetic field, $0 \le \nu \le 1$ denoting the strength of the magnetic field and $t\ge 0$ denotes how long the Hamiltonian is applied:
    \begin{align}
    &\| \exp\big(-it(Z+\nu P + \epsilon X)\big)-\exp\big(-it(Z+\nu P)\big)\|  \le 2\epsilon(2\nu + \epsilon)t+ (2\epsilon(2\nu + \epsilon)+\epsilon  )\min( t,\frac{1}{1-\nu})
\end{align}

\begin{proof}
We define $w(\epsilon)=\|Z+\nu P + \epsilon X\|$. 
    \begin{align*}
    &\| \exp\big(-it(Z+\nu P + \epsilon X)\big)-\exp\big(-it(Z+\nu P)\big)\| \\
    &= \Bigl\|(\cos(w(\epsilon)t) - \cos(w(0)t))\mathbb{I} -i\sin(w(\epsilon)t)\,\frac{Z+\nu P + \epsilon X}{w(\epsilon)}+i \sin(w(0)t)\,\frac{Z+\nu P}{w(0)}
          \Bigr\|
\end{align*}
The last expression follows from the following identity for single-qubit operators: $\exp\!\big(-i t\, \vec{a}\!\cdot\!\vec{\sigma}\big)        
    = \cos(|\vec{a}|t)\mathbb{I} 
        -         i\sin(|\vec{a}|t)\,\frac{\vec{a}}{|\vec{a}|}\!\cdot\!\vec{\sigma}$.
Therefore
\begin{align*}
    &\| \exp\big(-it(Z+\nu P + \epsilon X)\big)-\exp\big(-it(Z+\nu P)\big)\| \\
    &\le \bigl|\cos(w(\epsilon)t) - \cos(w(0)t)\bigr|
      + \Bigl\|
            \sin(w(\epsilon)t)\,\frac{Z+\nu P  +  \epsilon X}{w(\epsilon)}
            - \sin(w(0)t)\,\frac{Z+\nu P}{w(0)}
          \Bigr\|\nonumber\\
    &\le \bigl|\cos(w(\epsilon)t) - \cos(w(0)t)\bigr|
      + \Bigl\|
            \big(\sin(w(\epsilon)t)\frac{w(0)}{w(\epsilon)}
            - \sin(w(0)t)\big)\frac{Z+\nu P}{w(0)}
          \Bigr\|
      + \frac{\epsilon}{w(\epsilon)}\|\sin(w(\epsilon) t)X\|\nonumber\\
    &= \bigl|\cos(w(\epsilon)t) - \cos(w(0)t)\bigr|
      + \Bigl|
            \sin(w(\epsilon)t)\frac{w(0)}{w(\epsilon)}
            - \sin(w(0)t)
          \Bigr|
      + \frac{\epsilon}{w(\epsilon)}|\sin(w(\epsilon) t)|\nonumber\\      
    \end{align*}

    The last line follows because $\|Z+\nu P\|=w(0).$ Then,
    \begin{align}
    &\| \exp\big(-it(Z+\nu P + \epsilon X)\big)-\exp\big(-it(Z+\nu P)\big)\| \\
    &\le \bigl|\cos(w(\epsilon)t) - \cos(w(0)t)\bigr|
      + \Bigl|
            \sin(w(\epsilon)t)
            - \sin(w(0)t)
          \Bigr|\nonumber\\
    &\qquad{}+ \Bigl| \frac{w(0)}{w(\epsilon)}-1 \Bigr|\cdot|\sin(w(\epsilon) t)|
      + \frac{\epsilon}{w(\epsilon)}|\sin(w(\epsilon) t)|\nonumber\\
    &\le 2|w(\epsilon)-w(0)|t
      + \Bigl(\frac{|w(\epsilon)-w(0)|}{w(\epsilon)}+\frac{\epsilon}{w(\epsilon)}\Bigr)
        |\sin(w(\epsilon) t)| \label{eq:halfwayLB}
\end{align}

The last inequality follows from $|\sin(b)-\sin(a)|\le |b-a|$ and $|\cos(b)-\cos(a)|\le |b-a|$.

When $\nu < 1$, we argue that $w(\epsilon) \ge 1-\nu$ using the triangle inequality. 
\begin{align*}
    &w(\epsilon)=\|Z+\nu P+ \epsilon X\| \ge \|Z+\epsilon X\| -\|\nu P\| \\
    &=\sqrt{1+\epsilon^2}-\nu \ge 1-\nu.
\end{align*}

In Section~\ref{appendix:difference}, we will show that $|w(\epsilon)-w(0)| \le \epsilon(2\nu + \epsilon)$ when $\nu < 1$. We insert this in Eq. \eqref{eq:halfwayLB}.
\begin{align*}
    & \le 2|w(\epsilon)-w(0)|t+ (\frac{|w(\epsilon)-w(0)|}{w(\epsilon)}+\frac{\epsilon}{w(\epsilon)}  )|\sin(w(\epsilon) t)|\\
    & \le 2\epsilon(2\nu + \epsilon) t+ (\epsilon(2\nu + \epsilon)+\epsilon  )\min( t,\frac{1}{w(\epsilon)})\\
     & \le 2\epsilon(2\nu + \epsilon) t+ \epsilon(2\nu + \epsilon+1) \min( t,\frac{1}{1-\nu}).
\end{align*}
The second inequality follows because  $|\sin(w(\epsilon) t)|\le 1$, and 
    $\frac{|\sin(w(\epsilon) t)|}{w(\epsilon)} \le \frac{w(\epsilon) t}{w(\epsilon)}= t$.

\end{proof}
\end{lem}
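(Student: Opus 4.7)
My plan is to exploit the fact that $Z+\nu P$ and $Z+\nu P+\epsilon X$ are both single-qubit traceless Hermitian operators, so their exponentials admit the closed-form Bloch-sphere representation $\exp(-it\,\vec{a}\cdot\vec{\sigma}) = \cos(|\vec{a}|t)\,I - i\sin(|\vec{a}|t)\,(\vec{a}/|\vec{a}|)\cdot\vec{\sigma}$. Writing $P = \hat{n}\cdot\vec{\sigma}$ with $|\hat{n}|=1$, I set $\vec{a}_0 = \hat{z}+\nu\hat{n}$, $\vec{a}_\epsilon = \hat{z}+\nu\hat{n}+\epsilon\hat{x}$, and $w(\epsilon) = |\vec{a}_\epsilon|$. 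Subtracting the two representations and applying the triangle inequality to the spectral norm splits the quantity to be bounded into three pieces: (i) a cosine difference $|\cos(w(\epsilon)t)-\cos(w(0)t)|$; (ii) a sine-times-unit-vector difference coming from the shared $Z+\nu P$ part, which after adding and subtracting $\sin(w(\epsilon)t)(\vec{a}_0/w(0))\cdot\vec{\sigma}$ decomposes into a term involving $|w(0)/w(\epsilon)-1|$ and a term controlled by $|\sin(w(\epsilon)t)-\sin(w(0)t)|$; and (iii) a new contribution proportional to $(\epsilon/w(\epsilon))|\sin(w(\epsilon)t)|$ arising from the extra $\epsilon X$ direction.

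The sinusoidal pieces can then be controlled with the elementary Lipschitz bounds $|\cos b-\cos a|\le|b-a|$ and $|\sin b-\sin a|\le|b-a|$, which reduces everything to estimating $|w(\epsilon)-w(0)|$ and $|\sin(w(\epsilon)t)|/w(\epsilon)$. For the first, I would use the difference-of-squares identity $w(\epsilon)^2-w(0)^2 = 2\epsilon\,\vec{a}_0\cdot\hat{x}+\epsilon^2 = 2\epsilon\nu\,\hat{n}\cdot\hat{x}+\epsilon^2$ (using $\hat{z}\perp\hat{x}$), bound this by $\epsilon(2\nu+\epsilon)$ via $|\hat{n}\cdot\hat{x}|\le 1$, and then apply the reverse triangle inequality $w(\cdot)\ge 1-\nu$ (valid when $\nu<1$) to convert the difference of squares into a Lipschitz-type estimate on $w$. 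For the second, I would use the standard inequality $|\sin(w\tau)|/w \le \min(\tau, 1/w)$, which combined with $w(\epsilon)\ge 1-\nu$ yields the $\min(t,1/(1-\nu))$ factor appearing in the stated bound.

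Assembling the three pieces and grouping the contributions into a linear-in-$t$ term and a saturating term produces the claimed inequality, with the $2\epsilon(2\nu+\epsilon)t$ piece arising from Lipschitz estimates whose arguments grow like $w(\epsilon)t$, and the $\min(t,1/(1-\nu))$ block collecting the residual terms in which a factor of $1/w(\epsilon)$ appears. The main obstacle I anticipate is not any individual estimate but the bookkeeping needed to keep these two regimes cleanly separated and to assemble the coefficients on the right-hand side without spurious constants; one must also be careful near $\nu\to 1$, where the denominator $1-\nu$ degenerates and the usefulness of the bound hinges on $\nu$ being bounded away from $1$, which is precisely the regime relevant for the subsequent lower bound argument.
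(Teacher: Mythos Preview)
Your plan is essentially the paper's proof: the same Bloch-sphere closed form, the same triangle-inequality split into the cosine piece, the $|\sin(w(\epsilon)t)-\sin(w(0)t)|$ piece, the $|w(0)/w(\epsilon)-1|$ piece, and the extra $\epsilon X$ piece, followed by the Lipschitz bounds on $\sin,\cos$ and the estimate $|\sin(w(\epsilon)t)|/w(\epsilon)\le\min(t,1/w(\epsilon))$ together with $w(\epsilon)\ge 1-\nu$.

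The one spot where your sketch falls short of the stated constants is the bound on $|w(\epsilon)-w(0)|$. Dividing your (correct) difference-of-squares estimate $|w(\epsilon)^2-w(0)^2|\le\epsilon(2\nu+\epsilon)$ by $w(\epsilon)+w(0)\ge 2(1-\nu)$ only yields $|w(\epsilon)-w(0)|\le\epsilon(2\nu+\epsilon)/(2(1-\nu))$, which puts an unwanted $1/(1-\nu)$ in front of the linear-in-$t$ term and does not reproduce the lemma as written. The paper sharpens this by \emph{not} throwing away the numerator before bounding the denominator: writing $\nu_x=\nu\,\hat n\cdot\hat x$ so that $w(\epsilon)^2-w(0)^2=\epsilon(2\nu_x+\epsilon)$, it lower-bounds $w(\epsilon)+w(0)\ge\sqrt{w(\epsilon)^2+w(0)^2}\ge(1-\nu)+\tfrac12|2\nu_x+\epsilon|$, keeping $|2\nu_x+\epsilon|$ in both numerator and denominator. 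Since $u\mapsto u/((1-\nu)+u/2)$ is increasing, one may then replace $|2\nu_x+\epsilon|$ by its maximum $2\nu+\epsilon$, whereupon the denominator becomes $(1-\nu)+\nu+\epsilon/2=1+\epsilon/2\ge 1$ and the $(1-\nu)$ disappears, giving the clean bound $|w(\epsilon)-w(0)|\le\epsilon(2\nu+\epsilon)$. With this refinement in place, your argument matches the paper's line for line.
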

\subsection{TV upper bound for a single experiment}
A single experiment is defined the following way. Given an unknown Hamiltonian \(H\) equal to either \(Z\) or \(Z+\epsilon X\), a single experiment is \(E\)  specified by the following parameters,
\begin{enumerate}
  \item an arbitrary \(N\)-qubit initial state \(\ket{\psi_0}\in\mathbb{C}^{2^{N}}\) with an integer \(N\ge 1\),
  \item an \(N\)-qubit unitary of the following form,  
\begin{equation*}
U_{L+1}(\exp\big(-it_L(H+\nu_L P_L)\big)\otimes I\bigr)U_L\cdots U_3\bigl(\exp\big(-it_2(H+\nu_2 P_2)\big)\otimes I\bigr)U_2(\exp\big(-it_1(H+\nu_1 P_1)\big)\otimes I\bigr)U_1,
\end{equation*}
where $P_i$ and $0\le \nu_i \le \nu$ are arbitrary single-qubit traceless hermitian operators with eigenvalues $\pm 1$ representing direction of the external magnetic field and the strength of the field respectively, 

\item an arbitrary POVM \(\mathcal{F}=\{M_i\}_i\) on the \(N\)-qubit system,
\end{enumerate}
for some arbitrary integer \(L\) which represents the number of times the Hamiltonian is evolved, arbitrary evolution times \(t_1,\dots,t_L\) and arbitrary \(N\)-qubit unitaries \(U_1,\dots,U_L,U_{L+1}\). Here \(I\) is the identity unitary on the \(N-1\) qubits. Here $L+1$ can be interpreted as the number of discrete control operations involved in a single experiment. 
\begin{thm}\label{Thm:singl-experimentLB}
    Consider the task of deciding whether a given single-qubit Hamiltonian $H$ is $Z$ or $Z+\epsilon X$ 
    where the experimentalist can apply a magnetic field of arbitrary strength at most $\nu<1$ along a direction of choice denoted by a single-qubit Hermitian $P$ with with eigenvalues $\pm 1$. A single experiment where the Hamiltonian is evolved $L$ times with a total evolution time $T$ has a total variation distance bounded by 
    \[
    TV(p_{\epsilon},p_0) \le (2\nu\epsilon+\epsilon^2)T +\frac{\epsilon(2\nu +\epsilon+1)}{2(1-\nu)}L.
    \]
\end{thm}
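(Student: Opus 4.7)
The plan is to bound the TV distance between the output distributions by the operator-norm distance $\|V_\epsilon - V_0\|$ between the two composite unitaries implementing the experiment with $H = Z+\epsilon X$ and $H=Z$ substituted in, and then apply the preceding single-step lemma to each of the $L$ Hamiltonian evolutions via a telescoping sum. Writing
\[
V_H := U_{L+1}\bigl(e^{-it_L(H+\nu_L P_L)}\otimes I\bigr)\,U_L\cdots U_2\bigl(e^{-it_1(H+\nu_1 P_1)}\otimes I\bigr)\,U_1,
\]
the outcome distributions are $p_H(i) = \langle\psi_0|V_H^\dagger M_i V_H|\psi_0\rangle$. First I would use monotonicity of trace distance under the POVM measurement to get $TV(p_\epsilon,p_0) \le \frac{1}{2}\|\rho_\epsilon - \rho_0\|_1$ with $\rho_H = V_H|\psi_0\rangle\langle\psi_0|V_H^\dagger$. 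For pure outputs this trace distance equals $\sqrt{1-|\langle\psi_0|V_\epsilon^\dagger V_0|\psi_0\rangle|^2}$, and the unitarity of $V_\epsilon^\dagger V_0$ (which yields $1 - |\langle\psi_0|V_\epsilon^\dagger V_0|\psi_0\rangle|^2 \le \|(V_\epsilon - V_0)|\psi_0\rangle\|^2$) reduces the task to estimating $\|V_\epsilon - V_0\|$.

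Setting $W_k^{(H)} := e^{-it_k(H+\nu_k P_k)}\otimes I_{N-1}$, the standard telescoping decomposition
\[
V_\epsilon - V_0 \;=\; \sum_{k=1}^L A_k\,\bigl(W_k^{(\epsilon)} - W_k^{(0)}\bigr)\,B_k,
\]
with $A_k,B_k$ unitary (built from the $U_j$ layers together with the $W_j^{(\epsilon)}$'s for $j<k$ and the $W_j^{(0)}$'s for $j>k$), combined with the triangle inequality gives $\|V_\epsilon - V_0\|\le \sum_k \|W_k^{(\epsilon)} - W_k^{(0)}\|$. Tensoring with $I_{N-1}$ preserves operator norm, so the preceding single-qubit lemma applies directly to each summand with $(t,\nu,P)\mapsto(t_k,\nu_k,P_k)$. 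The lemma's bound is nondecreasing in $\nu_k\in[0,\nu]$ (both $2\nu_k+\epsilon+1$ and $1/(1-\nu_k)$ are monotone), so replacing $\nu_k$ by the uniform upper bound $\nu$ is legitimate; summing over $k=1,\ldots,L$ with $\sum_k t_k = T$ and $\min(t_k,1/(1-\nu))\le 1/(1-\nu)$ yields the desired bound on $\|V_\epsilon - V_0\|$, which chained with the trace-distance step produces a TV inequality of the claimed form.

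The main obstacle is extracting the precise constants (in particular the factor of $\frac{1}{2}$) in front of each summand. The naive chain above gives $TV \le \|V_\epsilon - V_0\|$, which is a factor of $2$ looser than the stated bound; tightening it requires exploiting that the measurement outcomes depend only on the phase-invariant part of $V_\epsilon^\dagger V_0$, via the sharper pure-state relation $\frac{1}{2}\|\rho_\epsilon - \rho_0\|_1 = \sin\theta = 2\sin(\theta/2)\cos(\theta/2)$ and the phase-optimized unitary distance $\min_\alpha\|V_\epsilon|\psi_0\rangle - e^{i\alpha}V_0|\psi_0\rangle\| = 2\sin(\theta/2)$, applied carefully enough to recover the halved prefactors on both the $T$-term and the $L$-term. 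Verifying the monotonicity of the single-step bound in $\nu_k$ (so that the per-step lemma, stated for a fixed $\nu$, still upper-bounds each summand when the experiment may use different $\nu_k$'s) is a secondary technical step.
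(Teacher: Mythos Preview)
Your approach is exactly the paper's: bound $TV$ by $\tfrac12\|\rho_\epsilon-\rho_0\|_1$, pass to the operator-norm distance $\|V_\epsilon-V_0\|$ via the pure-state inequality, telescope over the $L$ evolution segments, invoke the single-step lemma on each factor, and use monotonicity in $\nu_k$ to replace each $\nu_k$ by $\nu$.

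Your worry about the missing factor of $1/2$ is well-placed, but you should not invest effort in the phase-optimization sharpening: the paper's own proof uses precisely your ``naive chain'' $TV\le \|V_\epsilon-V_0\|$ (via $\|V\ket{a}\bra{a}V^\dagger-W\ket{a}\bra{a}W^\dagger\|_1\le 2\|V-W\|$) and concludes with
\[
TV(p_\epsilon,p_0)\le 2\epsilon(2\nu+\epsilon)T+\frac{\epsilon(2\nu+\epsilon+1)}{1-\nu}L,
\]
which is twice the displayed statement. The subsequent many-experiment theorem (Theorem~\ref{theorem:adaptive}) and its proof are stated with these larger constants, so the $1/2$ in the single-experiment statement is evidently a transcription slip rather than something obtained by a sharper argument.
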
 
\begin{proof}
We define $\ket{\psi_{\epsilon}} = U_L U_{\epsilon}(t_L) \cdots U_3 U_{\epsilon}(t_2) U_2 U_{\epsilon}(t_1) U_1 \ket{\psi_0}$, where $U_{\epsilon}(t_l)=\exp\big(-it_l(H+\nu_l P_l)\big)\otimes I, \forall 1\le l \le L$. 
By triangle inequality and telescoping sum, we have the following upper bound on the trace distance,
\begin{equation}
\|\ket{\psi_\epsilon}\!\bra{\psi_\epsilon} - \ket{\psi_0}\!\bra{\psi_0}\|_1
\le 
\biggl\| \prod_i U_i(\exp\bigl(-it_i(Z+\nu_i P_i+\epsilon X)\bigr)\otimes I) 
    - \prod_i U_i(\exp\bigl(-it_i(Z+\nu_i P_i)\bigr)\otimes I)  \biggr\|.
\end{equation}

We can now upper bound the total variation distance
for the classical probability distribution when we measure the final state using the ideal POVM measurement 
$\mathcal{F} = \{ M_i \}_i$. By definition, we have $p_{\epsilon}(i) = \bra{\psi_{\epsilon}} M_i \ket{\psi_{\epsilon}}$. 
\begin{align}
&TV(p_{\epsilon},p_0) = \frac{1}{2} \sum_i |\bra{\psi_\epsilon} M_i \ket{\psi_\epsilon} - \bra{\psi_0} M_i \ket{\psi_0}| \\
&\le \frac{1}{2} \|\ket{\psi_\epsilon}\!\bra{\psi_\epsilon} - \ket{\psi_0}\!\bra{\psi_0}\|_1  \\
&\le \frac12\cdot 2\cdot \biggl\| \prod_i U_i(\exp\bigl(-it_i(Z+\nu_i P_i+\epsilon X)\bigr)\otimes I) 
    - \prod_i U_i(\exp\bigl(-it_i(Z+\nu_i P_i)\bigr)\otimes I)  \biggr\|. 
\end{align}
The last inequality follows because $\|V\ket{a}\bra{a}V^{\dagger
}-|W\ket{a}\bra{a}W^{\dagger
} \| \le 2\|V-W\|$, where $\ket{a}$ is an arbitrary pure state and $V,W$ are arbitrary unitaries.
By triangle inequality and telescoping sum, we show that 
\begin{align}
    &\biggl\| \prod_i U_i(\exp\bigl(-it_i(Z+\nu_i P_i+\epsilon X)\bigr)\otimes I) 
    - \prod_i U_i(\exp\bigl(-it_i(Z+\nu_i P_i)\bigr)\otimes I)  \biggr\| \\
    &\le \sum_{i=1}^L \|\exp\bigl(-it_i(Z+\nu_i P_i+\epsilon X)\bigr)-\exp\bigl(-it_i(Z+\nu_i P_i)\bigr)\| \\
    &\leq \sum_{i=1}^L 2\epsilon(2\nu_i + \epsilon)t_i +\frac{\epsilon(2\nu_i + \epsilon+1)}{1-\nu_i} \leq \sum_{i=1}^L 2\epsilon(2\nu + \epsilon)t_i +\frac{\epsilon(2\nu + \epsilon+1)}{1-\nu} \\ 
    & \leq 2\epsilon(2\nu + \epsilon)T +\frac{\epsilon(2\nu + \epsilon+1)}{1-\nu}L. 
\end{align}
So, we have 
\begin{align}
&TV(p_{\epsilon},p_0) = \frac{1}{2} \sum_i |\bra{\psi_\epsilon} M_i \ket{\psi_\epsilon} - \bra{\psi_0} M_i \ket{\psi_0}| \\
&\le 2\epsilon(2\nu + \epsilon)\,T
   + \frac{\epsilon(2\nu + \epsilon + 1)}{1-\nu}\,L.
\end{align}

 Hence,
\[
\mathrm{TV}(p_\epsilon, p_0) \le 2\epsilon(2\nu + \epsilon)T +\frac{\epsilon(2\nu + \epsilon+1)}{1-\nu}L,
\]
which is the total variation distance between the measurement outcome distribution over the
two Hamiltonians under a single experiment.
\end{proof}

\subsection{TV upper bound for many experiments}
 We consider the rooted tree representation $\mathcal{T}$ described in \cite{HuangTongFangSu2023learning, huang2022foundations} to model the adaptivity in the choice of experiments. Each node in the tree corresponds to the sequence of measurement outcomes the algorithm has seen so far. At each node $u$, the algorithm runs a single experiment $E_u$ 

\begin{enumerate}
    \item an arbitrary $N_u$-qubit initial state $\ket{\psi_{u,0}} \in \mathbb{C}^{2^{N_u}}$ with an integer $N_u \ge 1$,

    \item an arbitrary POVM $\mathcal{F}_u = \{ M_{u,i} \}_{i=1}^{Q_u}$ with $Q_u$ outcomes on an $N_u$-qubit system,

    \item an $N_u$-qubit unitary of the following form,
    \begin{align}
        &U_{u,L_u+1} \bigl( \exp(-it_{u,L_u}(H+\nu_{u,L_u} P_{u,L_u})) \otimes I \bigr)
        U_{u,L_u} \cdots
        U_{u,3} \\
        &\bigl( \exp(-it_{u,2}(H+\nu_{u,2} P_{u,2})) \otimes I \bigr)
        U_{u,2} \bigl( \exp(-it_{u,1}(H+\nu_{u,1} P_{u,1})) \otimes I \bigr)
        U_{u,1},
    \end{align}
    for some arbitrary integer $L_u$  which represents the number of times the Hamiltonian is evolved, arbitrary evolution times $t_{u,1}, \ldots, t_{u,L_u} \in \mathbb{R}$, $P_{u,1},\ldots, P_{u,L_u}$ are arbitrary single-qubit hermitian operators, $\nu_{u,1},\ldots, \nu_{u,L_u}$ are the corresponding field strengths,  and arbitrary $N_u$-qubit unitaries
    $U_{u,1}, \ldots, U_{u,L_u+1}$. Here $I$ is the identity unitary on $N_u - 1$ qubits.
\end{enumerate}

Each experiment $E_u$ produces a measurement outcome $i \in \{1,\ldots,Q_u\}$, which moves the algorithm from the node $u$ to one of its child nodes. At a leaf node $\ell$, the algorithm stops. By considering the rooted tree representation and allowing the experiment to depend on each node in the tree, we cover all possible learning algorithms that can adaptively choose the experiment that it runs based on previous measurement outcomes.

For each node $u$, we give the following definitions,

\begin{itemize}
    \item $\mathcal{T}_u$ is the subtree with root $u$. 

    \item $p^{(u)}_{\epsilon /0}$ is the distribution over the child nodes of $u$ by considering the probability of moving from $u$
    to that child node under the unknown Hamiltonian $Z+\epsilon X$ or $Z$.

    \item $p^{(\mathcal{T}_u)}_{\epsilon /0}$ is the distribution over the leaf nodes for subtree $\mathcal{T}_u$ by considering the probability of ending
    at that leaf node starting from node $u$ under the unknown Hamiltonian $Z+\epsilon X$ or $Z$. When we start at the root node, we call the distribution $p^{(\mathcal{T})}_{\epsilon /0}$.

    \item $t_{(u)} \triangleq t(E_u) \ge 0$ is the evolution time for the single experiment $E_u$. $L_{u}$ is the number of times the Hamiltonian is applied in this experiment.

    \item $t(\mathcal{T}_u)$ is the maximum of the sum of the evolution time over all paths from root $u$ of the subtree
    $\mathcal{T}_u$ to a leaf node of $\mathcal{T}_u$,
    \begin{equation}
        t(\mathcal{T}_u)
        = \max_{P:\,\text{path on }\mathcal{T}_u} \;
        \sum_{w \in P} t_w.
    \end{equation}
     Similarly, $L(\mathcal{T}_u)$ is the maximum of the sum of number of Hamiltonian applications over all paths from root $u$ of the subtree
    $\mathcal{T}_u$ to a leaf node of $\mathcal{T}_u$,
    \begin{equation}
        L(\mathcal{T}_u)
        = \max_{P:\,\text{path on }\mathcal{T}_u} \;
        \sum_{w \in P} L_w.
    \end{equation}
    When we start at the root node, we call these quantities $t(\mathcal{T})$ and $L(\mathcal{T})$.
\end{itemize}

Because the total evolution time of the learning algorithm is upper bounded by $T_{\max}$, the total
evolution time of the full tree $\mathcal{T}$ satisfies $t(\mathcal{T}) \le T_{\max}$. Similarly we have an upper bound for the total number of Hamiltonian evolutions $L(\mathcal{T}) \le L_{\max}$.\\

The algorithm begins from the root of $\mathcal{T}$. We can establish the total variation upper bound stated in Theorem \ref{theorem:adaptive}. First we build on the tree formalism and state an useful lemma.

\begin{lem} \label{lemma:adaptivebound}
    Consider a rooted tree for an adaptive learning algorithm. We use it for distinguishing the Hamiltonians $Z+\nu P$ and $Z+\nu P+\epsilon X$ , $\epsilon \ge 0$. Every node $u$ in this tree corresponds to a single pre-defined experiment. Then
    \begin{align}
    TV(p_{\epsilon}^{(\mathcal{T})},p_{0}^{(\mathcal{T})}) \le \max_{P:\,\text{path on }\mathcal{T}}\sum_{u\in P} TV(p_{\epsilon}^{(u)},p_{0}^{(u)}).
    \end{align}
\end{lem}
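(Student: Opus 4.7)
The plan is to prove this by induction on the depth of the rooted tree $\mathcal{T}$, exploiting the fact that the distribution over leaves factorizes into a product of one-step transition probabilities, one for each node visited along the random path from the root. The base case is trivial: if the root is a leaf, both $p_\epsilon^{(\mathcal{T})}$ and $p_0^{(\mathcal{T})}$ are point masses at that leaf, so the TV distance is zero, matching the empty right-hand side.

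For the inductive step, I would fix an internal node $u$ with children $v_1,\dots,v_k$ and write, for any leaf $\ell$ lying in the subtree $\mathcal{T}_{v_i}$, the factorization $p_\epsilon^{(\mathcal{T}_u)}(\ell)=p_\epsilon^{(u)}(v_i)\,p_\epsilon^{(\mathcal{T}_{v_i})}(\ell)$, and likewise for $p_0$. The core step is then the standard chain-rule decomposition
\begin{align*}
\bigl|p_\epsilon^{(u)}(v_i)p_\epsilon^{(\mathcal{T}_{v_i})}(\ell)-p_0^{(u)}(v_i)p_0^{(\mathcal{T}_{v_i})}(\ell)\bigr|
&\le \bigl|p_\epsilon^{(u)}(v_i)-p_0^{(u)}(v_i)\bigr|\,p_\epsilon^{(\mathcal{T}_{v_i})}(\ell) \\
&\quad + p_0^{(u)}(v_i)\,\bigl|p_\epsilon^{(\mathcal{T}_{v_i})}(\ell)-p_0^{(\mathcal{T}_{v_i})}(\ell)\bigr|,
\end{align*}
obtained by adding and subtracting $p_0^{(u)}(v_i)p_\epsilon^{(\mathcal{T}_{v_i})}(\ell)$ and applying the triangle inequality. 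Summing over $\ell\in\mathcal{T}_{v_i}$ (using $\sum_\ell p_\epsilon^{(\mathcal{T}_{v_i})}(\ell)=1$), then summing over $i$ and dividing by $2$, yields
\begin{equation*}
TV\!\bigl(p_\epsilon^{(\mathcal{T}_u)},p_0^{(\mathcal{T}_u)}\bigr)\le TV\!\bigl(p_\epsilon^{(u)},p_0^{(u)}\bigr)+\sum_i p_0^{(u)}(v_i)\,TV\!\bigl(p_\epsilon^{(\mathcal{T}_{v_i})},p_0^{(\mathcal{T}_{v_i})}\bigr).
\end{equation*}

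To close the induction, I would invoke the inductive hypothesis on each child subtree $\mathcal{T}_{v_i}$, which supplies $TV(p_\epsilon^{(\mathcal{T}_{v_i})},p_0^{(\mathcal{T}_{v_i})})\le M_i$ with $M_i$ the maximum path-sum within $\mathcal{T}_{v_i}$. Since $\{p_0^{(u)}(v_i)\}_i$ is a probability distribution, the convex combination $\sum_i p_0^{(u)}(v_i) M_i$ is at most $\max_i M_i$. Therefore
\begin{equation*}
TV\!\bigl(p_\epsilon^{(\mathcal{T}_u)},p_0^{(\mathcal{T}_u)}\bigr)\le TV\!\bigl(p_\epsilon^{(u)},p_0^{(u)}\bigr)+\max_i M_i,
\end{equation*}
and the right-hand side is precisely the path-sum of $TV$'s starting at $u$ and then continuing along the maximizing path of whichever child subtree attains the maximum, which is bounded by $\max_{P:\text{path on }\mathcal{T}_u}\sum_{w\in P}TV(p_\epsilon^{(w)},p_0^{(w)})$. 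Specializing to the root finishes the proof.

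There is no deep obstacle; the only thing to be careful about is the choice of which measure to peel off first in the chain-rule bound, since this determines whether the convex weights are $p_0^{(u)}(v_i)$ or $p_\epsilon^{(u)}(v_i)$. Either choice works because both sum to $1$, and the final bound is max-over-paths and so independent of this choice. This cleanly accommodates adaptivity because the tree structure already absorbs the conditioning of later experiments on earlier outcomes, and the inequality bounds the adaptive-protocol TV distance by the worst-case non-adaptive path sum of single-experiment TV distances, to which the bound of Theorem~\ref{Thm:singl-experimentLB} can subsequently be applied.
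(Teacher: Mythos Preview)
Your proof is correct and self-contained, but it takes a different route from the paper. The paper invokes a multiplicative recursion from \cite{HuangTongFangSu2023learning},
\[
1-TV\bigl(p_\epsilon^{(\mathcal{T}_u)},p_0^{(\mathcal{T}_u)}\bigr)\ge \bigl(1-TV(p_\epsilon^{(u)},p_0^{(u)})\bigr)\min_{w\in\mathrm{child}(u)}\bigl(1-TV(p_\epsilon^{(\mathcal{T}_w)},p_0^{(\mathcal{T}_w)})\bigr),
\]
unrolls it to a product over a worst-case path, and then passes to the additive bound via $\prod_i(1-a_i)\ge 1-\sum_i a_i$. You instead establish the additive recursion directly from the chain-rule decomposition of TV distance and bound the resulting convex combination by its maximum. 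Your argument is more elementary (no external citation needed) and lands immediately on the additive form, while the paper's intermediate multiplicative inequality is in principle a sharper recursive statement, though that extra sharpness is discarded in the final step. Either way the lemma follows, and the downstream application to Theorem~\ref{theorem:adaptive} is unaffected.
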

\begin{proof}
    The following inequality was shown in \cite{HuangTongFangSu2023learning}: 
    
    \begin{align}
        1-TV(p_{\epsilon}^{(\mathcal{T}_u)},p_{0}^{(\mathcal{T}_u)}) \ge \Bigl(1-TV(p_{\epsilon}^{(u)},p_{0}^{(u)})\Bigr) \min_{w \in \text{child}(u)}\Bigl(1-TV(p_{\epsilon}^{(\mathcal{T}_{w})},p_{0}^{(\mathcal{T}_{w})}) \Bigr).
    \end{align}
    By simple induction this would imply
    \begin{align}
        1-TV(p_{\epsilon}^{(\mathcal{T})},p_{0}^{(\mathcal{T})}) \ge  \min_{P:\,\text{path on }\mathcal{T}}\prod_{u \in P} \Bigl(1-TV(p_{\epsilon}^{({u})},p_{0}^{({u})}) \Bigr).
    \end{align}
    Now 
    \begin{align}
        \min_{P:\,\text{path on }\mathcal{T}}\prod_{u \in P} \big(1-TV(p_{\epsilon}^{({u})},p_{0}^{({u})}) \big) \ge \min_{P:\,\text{path on }\mathcal{T}} \big(1- \sum_{u \in P}TV(p_{\epsilon}^{({u})},p_{0}^{({u})}) \big).
    \end{align}
    Combining these and simplifying we have
    \begin{align}
    TV(p_{\epsilon}^{(\mathcal{T})},p_{0}^{(\mathcal{T})}) \le \max_{P:\,\text{path on }\mathcal{T}}\sum_{u\in P} TV(p_{\epsilon}^{(u)},p_{0}^{(u)}).
    \end{align}
\end{proof}
\begin{thm} \label{theorem:adaptive}
    Consider the task of deciding whether a given single-qubit Hamiltonian is $Z$ or $Z+\epsilon X$, $\epsilon \ge 0$  where the experimentalist can apply a magnetic field of arbitrary strength at most $0\le \nu<1$ along a direction of choice denoted by a single-qubit traceless hermitian $P$ with eigenvalues $\pm 1$. An adaptive strategy in which the Hamiltonian is evolved no more than $L_{\max}+1$ times and a total evolution time no more than $T_{\max}$, has a total variation distance in the output distribution bounded by 
    \[
    TV(p_{\epsilon}^{(\mathcal{T})},p_0^{(\mathcal{T})}) \le 2\epsilon(2\nu + \epsilon)T_{\max} +\frac{\epsilon(2\nu + \epsilon+1)}{1-\nu}L_{\max}.
    \]
\end{thm}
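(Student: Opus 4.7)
The strategy is to combine the two results that have already been established in the excerpt: Theorem~\ref{Thm:singl-experimentLB}, which bounds the single-experiment TV distance in terms of the evolution time $t_u$ and the number of Hamiltonian applications $L_u$ at a single node, and Lemma~\ref{lemma:adaptivebound}, which reduces the global TV distance of the adaptive tree to a maximum over root-to-leaf path sums of per-node TV distances. The proof is essentially a one-line bookkeeping argument once both of these are in hand.

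Concretely, the plan is as follows. First, I invoke Lemma~\ref{lemma:adaptivebound} to obtain
\[
TV(p_{\epsilon}^{(\mathcal{T})},p_{0}^{(\mathcal{T})}) \le \max_{P:\,\text{path on }\mathcal{T}}\sum_{u\in P} TV(p_{\epsilon}^{(u)},p_{0}^{(u)}).
\]
Second, for each node $u$ I apply Theorem~\ref{Thm:singl-experimentLB} to the single experiment $E_u$ run at that node. Since the per-node experiment uses evolution time $t_u$ and evolves the Hamiltonian $L_u$ times (with each individual magnetic field strength $\nu_{u,\ell}$ at most $\nu<1$), this yields
\[
TV(p_{\epsilon}^{(u)},p_{0}^{(u)}) \le 2\epsilon(2\nu+\epsilon)\,t_u + \frac{\epsilon(2\nu+\epsilon+1)}{1-\nu}\,L_u.
\]

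Third, I fix any root-to-leaf path $P$ and sum the bound over $u\in P$. The sum of $t_u$ along $P$ is at most $t(\mathcal{T})\le T_{\max}$ by the hypothesis on total evolution time, and the sum of $L_u$ along $P$ is at most $L(\mathcal{T})\le L_{\max}$ by the hypothesis on the total number of Hamiltonian applications. Taking the maximum over $P$ preserves these bounds, giving
\[
TV(p_{\epsilon}^{(\mathcal{T})},p_{0}^{(\mathcal{T})}) \le 2\epsilon(2\nu+\epsilon)\,T_{\max} + \frac{\epsilon(2\nu+\epsilon+1)}{1-\nu}\,L_{\max},
\]
which is the claimed bound.

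There is no substantive obstacle here; the real work has already been done upstream. The only point that requires a little care is to make sure the accounting of $L_u$ along a path matches the definition of $L(\mathcal{T})$ used in the $L_{\max}$ hypothesis (in particular, the statement of the theorem writes $L_{\max}+1$ for the number of Hamiltonian applications, reflecting the $+1$ convention that showed up in the single-experiment setup; I would align the indexing conventions explicitly in the write-up so that the final constants are unambiguous). Beyond that bookkeeping, the proof is a direct composition of Theorem~\ref{Thm:singl-experimentLB} with Lemma~\ref{lemma:adaptivebound}.
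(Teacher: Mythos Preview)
Your proposal is correct and follows essentially the same approach as the paper's own proof: invoke Lemma~\ref{lemma:adaptivebound} to reduce to a maximum over root-to-leaf paths of per-node TV distances, apply Theorem~\ref{Thm:singl-experimentLB} at each node, and then bound the path sums $\sum_{u\in P}t_u$ and $\sum_{u\in P}L_u$ by $t(\mathcal{T})\le T_{\max}$ and $L(\mathcal{T})\le L_{\max}$ respectively. The paper carries out exactly this chain of inequalities, so your plan matches it step for step.
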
 
\begin{proof}
    Using Lemma \ref{lemma:adaptivebound} for the first inequality, and Theorem \ref{Thm:singl-experimentLB} for the second inequality we have
    \begin{align}
    &TV(p_{\epsilon}^{(\mathcal{T})},p_{0}^{(\mathcal{T})}) \le \max_{P:\,\text{path on }\mathcal{T}}\sum_{u\in P} TV(p_{\epsilon}^{(u)},p_{0}^{(u)})\\
    & \le \max_{P:\,\text{path on }\mathcal{T}}\sum_{u}\epsilon(2\nu + \epsilon)t_{(u)} +\frac{\epsilon(2\nu + \epsilon+1)}{2(1-\nu)}L_{(u)} \\ 
    &\le \epsilon(2\nu + \epsilon)t(\mathcal{T}) +\frac{\epsilon(2\nu + \epsilon+1)}{2(1-\nu)}L(\mathcal{T})\\
    &\le 2\epsilon(2\nu + \epsilon)T_{\max} +\frac{\epsilon(2\nu + \epsilon+1)}{1-\nu}L_{\max}.\label{Eq:many-experimentsUB}
\end{align}
\end{proof}

The two Hamiltonians in the above theorem are slightly different from those in Theorem~2 in the main text in that $Z+\epsilon X$ has not been normalized. However, the above theorem implies Theorem~2 by the simple observation that the normalized Hamiltonian $\hat{n}\cdot \vec{\sigma}$, where $\hat{n}=(\epsilon,0,1)/\sqrt{1+\epsilon^2}$ is very close to $Z+\epsilon X$:
\[
\|\hat{n}\cdot \vec{\sigma} - (Z+\epsilon X)\| \leq \frac{\epsilon^2}{2}.
\]
Because of the above bound the unitary channels generated by evolution under $\hat{n}\cdot \vec{\sigma}$ and $(Z+\epsilon X)$ for time $T$ are only $\epsilon^2 T$ distance away from each other when measured in the diamond norm.
Therefore we have the following corollary by an application of the trace distance triangle inequality:
\begin{cor} \label{cor:adaptive}
    Consider the task of deciding whether a given single-qubit Hamiltonian is $Z$ or $\hat{n}\cdot \vec{\sigma}$ ($\hat{n}=(\epsilon,0,1)/\sqrt{1+\epsilon^2}$, $\vec{\sigma}=(X,Y,Z)$)  where the experimentalist can apply a magnetic field of arbitrary strength at most $\nu<1$ along a direction of choice denoted by a single-qubit hermitian $P$. An adaptive strategy in which there are no more than $L_{\max}+1$ discrete control operations and a total evolution time no more than $T_{\max}$, has a total variation distance in the output distribution bounded by 
    \[
    TV(p_{\epsilon}^{(\mathcal{T})},p_0^{(\mathcal{T})}) \le \epsilon(4\nu + 3\epsilon)T_{\max} +\frac{\epsilon(2\nu + \epsilon+1)}{1-\nu}L_{\max}.
    \]
\end{cor}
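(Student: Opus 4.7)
The strategy is to derive the corollary by inserting $Z+\epsilon X$ as an intermediate Hamiltonian and invoking the triangle inequality in total variation distance, so that Theorem~\ref{theorem:adaptive} controls the $Z$ vs.\ $Z+\epsilon X$ gap and only a small additive cost of order $\epsilon^2 T_{\max}$ remains to account for the normalization step from $Z+\epsilon X$ to $\hat{n}\cdot\vec{\sigma}$.

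First I would quantify how far the normalized Hamiltonian is from the unnormalized one in operator norm. Since $\hat{n}\cdot\vec{\sigma}=(Z+\epsilon X)/\sqrt{1+\epsilon^2}$ and $\|Z+\epsilon X\|=\sqrt{1+\epsilon^2}$,
\begin{equation*}
\|\hat{n}\cdot\vec{\sigma}-(Z+\epsilon X)\| \;=\; \sqrt{1+\epsilon^2}-1 \;\le\; \tfrac{\epsilon^2}{2},
\end{equation*}
using $\sqrt{1+x}\le 1+x/2$ for $x\ge 0$. Adding the same control term $\nu_l P_l$ to both Hamiltonians leaves the difference unchanged, so the standard unitary perturbation bound $\|e^{-itA}-e^{-itB}\|\le t\|A-B\|$ gives, segment by segment inside any single experiment,
\begin{equation*}
\bigl\|e^{-it_l(\hat{n}\cdot\vec{\sigma}+\nu_l P_l)}-e^{-it_l((Z+\epsilon X)+\nu_l P_l)}\bigr\| \;\le\; \tfrac{\epsilon^2}{2}\,t_l.
\end{equation*}

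Next I would reuse the telescoping argument from the proof of Theorem~\ref{Thm:singl-experimentLB} with the above segment-wise bound in place of the earlier one. Summing over all Hamiltonian-evolution segments of experiment $E_u$ with total time $t(E_u)$ yields $TV(p^{(u)}_{\hat{n}\cdot\vec{\sigma}},p^{(u)}_{Z+\epsilon X})\le \tfrac{\epsilon^2}{2}\,t(E_u)$; importantly there is no $L$-dependent contribution, because the perturbation here is a time-independent operator-norm difference, so the gap-sensitive part of the earlier bound simply does not appear. Applying Lemma~\ref{lemma:adaptivebound} then promotes the single-experiment inequality to the whole adaptive tree,
\begin{equation*}
TV(p^{(\mathcal{T})}_{\hat{n}\cdot\vec{\sigma}},p^{(\mathcal{T})}_{Z+\epsilon X}) \;\le\; \tfrac{\epsilon^2}{2}\,t(\mathcal{T}) \;\le\; \tfrac{\epsilon^2}{2}\,T_{\max}.
\end{equation*}

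Finally I would combine this with Theorem~\ref{theorem:adaptive} through the triangle inequality
\begin{equation*}
TV(p^{(\mathcal{T})}_Z,p^{(\mathcal{T})}_{\hat{n}\cdot\vec{\sigma}}) \;\le\; TV(p^{(\mathcal{T})}_Z,p^{(\mathcal{T})}_{Z+\epsilon X}) + TV(p^{(\mathcal{T})}_{Z+\epsilon X},p^{(\mathcal{T})}_{\hat{n}\cdot\vec{\sigma}}).
\end{equation*}
Adding the bound $2\epsilon(2\nu+\epsilon)T_{\max}+\tfrac{\epsilon(2\nu+\epsilon+1)}{1-\nu}L_{\max}$ from Theorem~\ref{theorem:adaptive} to the $\tfrac{\epsilon^2}{2}T_{\max}$ term above gives a coefficient $(4\nu\epsilon+\tfrac{5}{2}\epsilon^2)$ in front of $T_{\max}$, which is absorbed by the slightly looser $\epsilon(4\nu+3\epsilon)T_{\max}$ in the statement, while the $L_{\max}$ term passes through unchanged. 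There is no genuine obstacle; the only point needing care is checking that the normalization step does not inflate the $L_{\max}$ term — it does not, because the perturbation is uniform in time and independent of the adaptive control choices, so the telescoping picks up only an $\Or(\epsilon^2 T_{\max})$ contribution.
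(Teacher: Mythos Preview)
Your proposal is correct and follows essentially the same approach as the paper: both bound $\|\hat{n}\cdot\vec{\sigma}-(Z+\epsilon X)\|\le \epsilon^2/2$, insert $Z+\epsilon X$ as an intermediate via the triangle inequality, apply Theorem~\ref{theorem:adaptive} for the $Z$ vs.\ $Z+\epsilon X$ piece, and propagate the small normalization difference through the evolution (the paper phrases this last step as a diamond-norm bound of $\epsilon^2 T$, while you spell out the telescoping and reuse Lemma~\ref{lemma:adaptivebound} explicitly). Your constant tracking is in fact slightly tighter, but both land within the stated $\epsilon(4\nu+3\epsilon)T_{\max}$ coefficient.
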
 

\subsection{Proof of the lower bound}
We finally have all the tools necessary to prove the following theorem, which was informally described in the beginning of this section.
\begin{thm}\label{thm:LB}
    Consider the task of deciding whether a given single-qubit Hamiltonian is $Z$ or $\hat{n}\cdot \vec{\sigma}$ ($\hat{n}=(\epsilon,0,1)/\sqrt{1+\epsilon^2}$, $\vec{\sigma}=(X,Y,Z)$) where the experimentalist can apply a magnetic field of arbitrary strength at most $\nu<1$ along a direction of choice denoted by a single-qubit Hermitian $P$ with eigenvalues $\pm 1$, and a fixed budget of $L$ applications of the Hamiltonian i.e, $L+1, L\in o(\frac{1}{\epsilon})$ discrete control operations.  Then accomplishing this task with some constant probability of success at least $q$ requires evolving the Hamiltonian for a total expected time of at least $T\in \Omega(\frac{1}{4\nu \epsilon+3\epsilon^2})$. 
\end{thm}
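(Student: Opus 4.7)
The plan is to combine the adaptive total-variation upper bound in Corollary~\ref{cor:adaptive} with Le Cam's two-point method, which is the standard way to convert a TV bound into a lower bound on resources for a binary hypothesis test. Suppose toward the stated conclusion that there exists a (possibly adaptive) protocol, represented by the learning tree $\mathcal{T}$, that distinguishes $H_0 = Z$ from $H_\epsilon = \hat{n} \cdot \vec{\sigma}$ with success probability at least $q$, using at most $L+1$ discrete control operations with $L = o(1/\epsilon)$ and total evolution time at most $T$. Any decision rule applied to the leaves of $\mathcal{T}$ yields two induced distributions $p_0^{(\mathcal{T})}$ and $p_\epsilon^{(\mathcal{T})}$ over leaves, and Le Cam's inequality forces
\begin{equation}
\mathrm{TV}\bigl(p_\epsilon^{(\mathcal{T})},\, p_0^{(\mathcal{T})}\bigr) \;\geq\; 2q - 1,
\end{equation}
since otherwise no test could separate the two hypotheses with probability exceeding $q$.

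Next I would invoke Corollary~\ref{cor:adaptive} directly, with $T_{\max} = T$ and $L_{\max} = L$, to obtain the matching upper bound
\begin{equation}
\mathrm{TV}\bigl(p_\epsilon^{(\mathcal{T})},\, p_0^{(\mathcal{T})}\bigr) \;\leq\; \epsilon(4\nu + 3\epsilon)\,T \;+\; \frac{\epsilon(2\nu + \epsilon + 1)}{1-\nu}\,L.
\end{equation}
Combining the two inequalities gives
\begin{equation}
\epsilon(4\nu + 3\epsilon)\,T \;+\; \frac{\epsilon(2\nu + \epsilon + 1)}{1-\nu}\,L \;\geq\; 2q - 1.
\end{equation}

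Now I would exploit the hypothesis $L \in o(1/\epsilon)$. For any fixed $\nu < 1$, the second term on the left-hand side is $\epsilon L \cdot C_\nu$ for a constant $C_\nu = (2\nu + \epsilon + 1)/(1-\nu)$ that stays bounded as $\epsilon \to 0$; thus the second term vanishes as $\epsilon \to 0$. Consequently, for $\epsilon$ sufficiently small, it is strictly less than $(2q-1)/2$, and rearranging leaves
\begin{equation}
\epsilon(4\nu + 3\epsilon)\,T \;\geq\; \frac{2q-1}{2},
\end{equation}
so that $T = \Omega\!\bigl(1/(4\nu\epsilon + 3\epsilon^2)\bigr)$, which is the claimed bound.

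I do not anticipate a genuine obstacle here, since the structural work has already been done upstream: the per-experiment spectral-norm estimate, its promotion to an adaptive TV bound via the tree formalism, and Le Cam's inequality do all the heavy lifting. The one subtlety to handle carefully is the role of the assumption $L = o(1/\epsilon)$: the bound is vacuous unless the $L$-contribution is strictly smaller than the constant $2q-1$ in the limit $\epsilon \to 0$, which is precisely what the $o(1/\epsilon)$ hypothesis guarantees. If one instead had $L = \Theta(1/\epsilon)$, the discrete-control term would dominate and swallow the lower bound, which is why the theorem statement must exclude that regime and is consistent with the known Heisenberg-limited protocols that do use $\Omega(1/\epsilon)$ gates without static fields.
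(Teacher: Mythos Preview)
Your argument is essentially the paper's, and the combination of Le Cam with Corollary~\ref{cor:adaptive} is exactly right. There is one genuine gap, however: the theorem asks for a lower bound on the \emph{expected} total evolution time, whereas Corollary~\ref{cor:adaptive} requires a bound on the \emph{worst-case} total evolution time $T_{\max}$ along any root-to-leaf path of the learning tree. By writing ``total evolution time at most $T$'' and then plugging $T_{\max}=T$ into the corollary, you have implicitly assumed a hard cap on evolution time. A protocol whose expected time is $T_0$ may have paths of arbitrarily large length, so you cannot invoke the corollary directly.

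The paper closes this gap with a short Markov-inequality truncation (Lemma~\ref{lemma:boundedtimealgo}): if the original protocol succeeds with probability $q$ and has expected time $T_0$ and expected number of Hamiltonian applications $L_0$, then aborting whenever the running time exceeds $kT_0$ or the application count exceeds $kL_0$ yields a bounded-resource protocol that still succeeds with probability at least $q-2/k$. One then applies Le Cam with $2(q-2/k)-1$ on the right-hand side and Corollary~\ref{cor:adaptive} with $T_{\max}=kT_0$, $L_{\max}=kL_0$ on the left. After this substitution your remaining algebra, including the observation that the $L$-term is $o(1)$ under the hypothesis $L=o(1/\epsilon)$, goes through unchanged and yields $T_0=\Omega\bigl(1/(4\nu\epsilon+3\epsilon^2)\bigr)$. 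So the fix is routine, but as written your proposal only establishes the weaker worst-case-time statement.
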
 
We will use the TV distance upper-bound we derived in the last section. However, note that while we wish to bound the expected time, Eq. \eqref{Eq:many-experimentsUB} addresses the worst-case total time. We remedy this by showing that if the algorithm described in the theorem exists, a bounded-time version should exist as well. We state and prove the following lemma: 
\begin{lem}\label{lemma:boundedtimealgo}
    (bounded time algorithm) Assume there exists an adaptive Hamiltonian learning algorithm that learns the Hamiltonian with $\epsilon$-accuracy with probability $q$ with expected Hamiltonian evolution time at most $T_0$ and expected number of times the Hamiltonian is evolved at most $L_0$ (and therefore $L_0+1$ discrete control operations). Consider a modified algorithm that terminates when evolution time exceeds $kT_0$ ($k>\frac{2}{q}$) or when the number of times the Hamiltonian is applied exceeds $kL_0$, and therefore it has a maximum evolution time of $kT_0$ and maximum number of Hamiltonian evolutions $kL_0$ (i.e, therefore $kL_0+1$ discrete control operations). Then this algorithm succeeds to $\epsilon$-accuracy with probability at least $q'=q-\frac{2}{k}$.
\end{lem}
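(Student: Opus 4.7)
\emph{Proof plan.} The plan is to apply Markov's inequality separately to each of the two resource budgets and combine the resulting tail bounds with a union bound. Let $T_{\mathrm{alg}}$ and $L_{\mathrm{alg}}$ denote the (random) total evolution time and number of Hamiltonian applications executed by the original algorithm. By hypothesis $\mathbb{E}[T_{\mathrm{alg}}]\le T_0$ and $\mathbb{E}[L_{\mathrm{alg}}]\le L_0$, so Markov yields $\Pr[T_{\mathrm{alg}}>kT_0]\le 1/k$ and $\Pr[L_{\mathrm{alg}}>kL_0]\le 1/k$, and a union bound gives
\begin{equation*}
\Pr\bigl[\{T_{\mathrm{alg}}>kT_0\}\cup\{L_{\mathrm{alg}}>kL_0\}\bigr]\le \tfrac{2}{k}.
\end{equation*}

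The key observation, which I would state explicitly, is that on the complementary event $\{T_{\mathrm{alg}}\le kT_0\}\cap\{L_{\mathrm{alg}}\le kL_0\}$ the truncation condition of the modified algorithm is never triggered, so the modified algorithm executes exactly the same sequence of experiments and outputs exactly the same estimate as the original. Letting $S$ denote the event that the original algorithm produces an $\epsilon$-accurate estimate (so $\Pr[S]\ge q$), this observation combined with the union bound above gives
\begin{equation*}
\Pr[\text{modified succeeds}]\ge \Pr\bigl[S\cap\{T_{\mathrm{alg}}\le kT_0\}\cap\{L_{\mathrm{alg}}\le kL_0\}\bigr]\ge \Pr[S]-\tfrac{2}{k}\ge q-\tfrac{2}{k}=q',
\end{equation*}
which is the claim. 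The hypothesis $k>2/q$ only enters to ensure $q'>0$, so that the truncated algorithm still has strictly positive success probability and can subsequently be plugged into the Le~Cam two-point argument in the proof of Theorem~\ref{thm:LB}.

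There is essentially no obstacle here; the argument is an elementary Markov-plus-union-bound sandwich. The one small thing to be careful about is simply to note that the modified algorithm is defined so that it does not alter the experimental protocol until one of the two thresholds is crossed, which ensures that the outputs of the two algorithms agree on the non-truncation event. The minor discrepancy between ``number of Hamiltonian evolutions'' $L$ and ``discrete control operations'' $L+1$ in the statement is harmless, since the truncation acts on $L$ and merely shifts the count of control operations by a fixed constant.
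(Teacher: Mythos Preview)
Your proof is correct and follows essentially the same approach as the paper: apply Markov's inequality to each resource (evolution time and number of Hamiltonian applications) to bound the probability that either threshold is crossed by $1/k$ each, then subtract the total $2/k$ from the original success probability $q$. Your write-up is in fact slightly more explicit than the paper's in spelling out the union bound and the observation that on the non-truncation event the two algorithms coincide.
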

\begin{proof}
    Note that the probability of success of this bounded time algorithm is equal to the joint probability of following a path with total evolution time less than $kT_0$ as well as succeeding. Since the expected total evolution time is at most $T_0$, the probability associated with paths with evolution time $T_{\text{path}}>kT_0$ must be less than $\frac{1}{k}$ by Markov's inequality. A similar argument implies that the probability associated with paths with number of Hamiltonian evolutions $L_{\text{path}}>kL_0$ must be less than $\frac{1}{k}$.  Then the probability of successful paths in the modified algorithm must be at least $q-\frac{2}{k}$. 
\end{proof}

With the above tools we are ready to prove Theorem~\ref{thm:LB}.
\begin{proof}(for Theorem \ref{thm:LB}) Assume there is an algorithm that can distinguish the two probability
distributions $p_{\epsilon}^{(\mathcal{T})},p_{0}^{(\mathcal{T})}$ with probability at least $q$ with expected total evolution time at most $T_0$ and expected number of Hamiltonian applications at most  $L_0$. Then, according to Lemma \ref{lemma:boundedtimealgo}, there exists a bounded time algorithm with worst-case total evolution time $kT_0$ and $kL_0$ worst-case number of Hamiltonian applications with probability of success at least $q'=q-\frac{2}{k}$ ($k>2/q$).  
Le Cam's two-point method \cite{yu1997assouad} implies
\begin{align}
  &TV(p_{\epsilon}^{(\mathcal{T})},p_{0}^{(\mathcal{T})}) \ge 2q'-1= 2(q-\frac{2}{k})-1.
\end{align}
 
Note the decision tree $\mathcal{T}$ corresponds to the modified bounded time algorithm. Now the upper-bound in Theorem \ref{theorem:adaptive} implies, 
\begin{align}
    &TV(p_{\epsilon}^{(\mathcal{T})},p_{0}^{(\mathcal{T})}) \le \epsilon(4\nu + 3\epsilon)t(\mathcal{T}) +\frac{\epsilon(2\nu + \epsilon+1)}{1-\nu}L(\mathcal{T}) \\
    &\le \epsilon(4\nu + 3\epsilon)kT_{0} +\frac{\epsilon(2\nu + \epsilon+1)}{1-\nu}kL_{0}
\end{align}
The final inequality follows from how we defined the modified bounded time algorithm. Putting these together,
\begin{align}
  \epsilon(4\nu + 3\epsilon)kT_0 +\frac{\epsilon(2\nu + \epsilon+1)}{1-\nu}kL_0\ge TV(p_{\epsilon}^{(\mathcal{T})},p_{0}^{(\mathcal{T})}) \ge 2(q-\frac{2}{k})-1.
\end{align}
Combining the equations and recalling that $L_0 \in o(\frac{1}{\epsilon})$, 
\begin{align}
    T_0 \in \Omega(\frac{1}{4\nu\epsilon+3\epsilon^2}). \label{Eq:time-LB}
\end{align}
\end{proof}
This theorem establishes the lower-bound in Theorem~\ref{thm:LB}.



\subsection{Bound for \texorpdfstring{$|w(\epsilon)-w(0)|$}{|w(epsilon)-w(0)|}}\label{appendix:difference}
In this Appendix we show that 
\begin{align}
    |w(\epsilon)-w(0)| \le \epsilon (2\nu +\epsilon),
\end{align}
for $\nu < 1$. Recall that $w(\epsilon)=\|(1+\nu_z)Z+\nu_y Y+(\nu_x+\epsilon)X\|=\sqrt{(1+\nu_z)^2+\nu_y^2+(\nu_x+\epsilon)^2}$, and $\sqrt{\nu^2_x+\nu^2_y+\nu^2_z} \le \nu.$
\begin{align}
    &|w(\epsilon)-w(0)|\\
    & =\frac{|w^2(\epsilon)-w^2(0)|}{w(\epsilon)+w(0)} \le \frac{|w^2(\epsilon)-w^2(0)|}{\bigl(w^2(\epsilon)+w^2(0) \bigr)^\frac12}\\
    &\le \frac{|(\nu_x+\epsilon)^2-\nu_x^2|}{\bigl(2(1+\nu_z)^2+2\nu_y^2+(\nu_x+\epsilon)^2+ \nu_x^2\bigr)^\frac{1}{2}}\\
    &\le \frac{|(2\nu_x+\epsilon)\epsilon|}{\bigl(2(1+\nu_z)^2+\frac{\epsilon^2}{2}+\frac12(2\nu_x+\epsilon)^2\bigr)^\frac{1}{2}}\\
\end{align}
Lets first consider the case where $\nu \le 1$.
\begin{align}
    &\le \epsilon.\frac{|(2\nu_x+\epsilon)|}{\bigl(2(1-\nu)^2+\frac12(2\nu_x+\epsilon)^2\bigr)^\frac{1}{2}}\\
    &\le \epsilon.\frac{|2\nu_x+\epsilon|}{\bigl(1-\nu+\frac12|2\nu_x+\epsilon|\bigr)}.
\end{align}
The second inequality follows from $\bigl(2a^2+2b^2)^\frac12 \ge |a|+|b|$, and $|1-\nu|=1-\nu$ since $\nu < 1$. Next we use the fact that if $a,b >0,a'>a$ then $\frac{a}{a+b} \le \frac{a'}{a'+b}$ and than $|\nu_x|\le \nu.$ 
\begin{align}
    &\le \epsilon.\frac{|2\nu_x+\epsilon|}{\bigl((1-\nu)+\frac12|2\nu_x+\epsilon|\bigr)}\le \epsilon.\frac{|2\nu+\epsilon|}{\bigl((1-\nu)+\frac12(2\nu+\epsilon)\bigr)}\\
    &\le \epsilon.\frac{|2\nu+\epsilon|}{1+\frac{\epsilon}{2}} \le \epsilon(2\nu + \epsilon).
\end{align}


\end{document}